\documentclass[11pt]{article}
\usepackage{jheppub}

\usepackage[bb=stix, cal=boondoxupr]{mathalpha}
\usepackage{booktabs, tabularx, longtable}
\usepackage{mathtools}

\usepackage{tikz}
\usetikzlibrary{arrows.meta, positioning, shadows, calc}

\usepackage{amsthm}
\newtheorem{thm}{Theorem}
\newtheorem{lemma}{Lemma}

%%%%%%%%%%%%%%%%%%%%%%%%%%%%%%%%%%%%%%%%%%%%%%%%%%%%%%%%%%%%%%%%%%%%

% Colors
\definecolor{myGreen}{HTML}{B6CE91}
\definecolor{myBlue}{HTML}{91B6CE}
\definecolor{myPurple}{HTML}{CE91B6}

% Aligned equation environment (w/ single equation number)
\newenvironment{eqnalign}{\equation\aligned}{\endaligned\endequation}

% Differential
\renewcommand{\d}[2][]{\mathrm{d}^{#1}{#2}}

% Groups and reps
\DeclareMathOperator{\U}{U}
\DeclareMathOperator{\SU}{SU}
\DeclareMathOperator{\SO}{SO}
\DeclareMathOperator{\Sp}{Sp}
\newcommand{\adj}{\mathrm{adj}}
\newcommand{\rep}[2][]{\mathbf{\underline{#2}^{#1}}}
\newcommand{\repbar}[2][]{\mathbf{\underline{\overline{#2}}^{#1}}}
\newcommand{\repss}[2]{\mathbf{\underline{#1}_{#2}}}    % ss = subscript

% Lattices and matrices
\newcommand{\I}{\mathrm{I}}         % Odd unimodular lattice
\newcommand{\II}{\mathrm{I\!I}}     % Even unimodular lattice
\newcommand{\Lat}{\mathfrak{L}}
\newcommand{\lat}{\mathfrak{l}}
\newcommand{\Gram}{\mathcal{G}}
\newcommand{\gram}{\mathcal{g}}
\renewcommand{\u}{\mathcal{u}}

\DeclareMathOperator{\Aut}{Aut}
\DeclareMathOperator{\tr}{tr}
\DeclareMathOperator{\diag}{diag}
\DeclareMathOperator{\rank}{rank}
\DeclareMathOperator{\nullity}{nullity}

\DeclareMathOperator{\sign}{sign}   % signature

% Useful abbreviations
\newcommand{\Z}{\mathbb{Z}}
\newcommand{\Q}{\mathbb{Q}}
\newcommand{\R}{\mathbb{R}}

\newcommand{\calA}{\mathcal{A}}
\newcommand{\calD}{\mathcal{D}}
\newcommand{\calF}{\mathcal{F}}
\newcommand{\calH}{\mathcal{H}}
\newcommand{\calJ}{\mathcal{J}}
\newcommand{\calL}{\mathcal{L}}
\newcommand{\calM}{\mathcal{M}}
\newcommand{\calN}{\mathcal{N}}
\newcommand{\calQ}{\mathcal{Q}}
\newcommand{\calR}{\mathcal{R}}
\newcommand{\calZ}{\mathcal{Z}}

\newcommand{\ch}{{\mathrm{ch}}}   % "charged"
\newcommand{\irr}{{\mathrm{irr}}} % irreducible
\newcommand{\BPS}{{\mathrm{BPS}}}

%%%%%%%%%%%%%%%%%%%%%%%%%%%%%%%%%%%%%%%%%%%%%%%%%%%%%%%%%%%%%%%%%%%%
%% Title page

\title{A finite 6d supergravity landscape from anomalies}

\author[\clubsuit\spadesuit]{Yuta Hamada}
\author[\clubsuit]{and Gregory J.\ Loges}
\affiliation[\clubsuit]{
    Theory Center, IPNS, High Energy Accelerator Research Organization (KEK), \\
    1-1 Oho, Tsukuba, Ibaraki 305-0801, Japan
}
\affiliation[\spadesuit]{
    Graduate Institute for Advanced Studies, SOKENDAI, 1-1 Oho, Tsukuba, Ibaraki 305-0801, Japan
}

\emailAdd{yhamada@post.kek.jp}
\emailAdd{gloges@post.kek.jp}

\preprint{KEK-TH-2741}

\abstract{
    6d supergravities with non-abelian gauge group are subject to many consistency conditions.
    While the absence of local gauge and gravitational anomalies allows for infinitely many models, we show that those conditions stemming from the absence of both local and global anomalies together are strong enough to leave only finitely many consistent models.
    To do this we distill the consequences of anomaly cancellation into a high-dimensional linear program whose dual can be efficiently studied using standard techniques.
    We obtain a universal bound on the number of tensor multiplets $T\leq 11\cdot 273 = 3003$ and show that this leads to a finite landscape of consistent non-abelian models.
    Interestingly, the model which saturates this bound has gauge group $[E_8\times F_4\times(G_2\times\SU(2))^2]^{273}$, which bears a striking resemblance to the model which saturates the bound $T\leq 193$ for F-theory constructions.
}

%%%%%%%%%%%%%%%%%%%%%%%%%%%%%%%%%%%%%%%%%%%%%%%%%%%%%%%%%%%%%%%%%%%%
\begin{document}

\maketitle
\flushbottom

%%%%%%%%%%%%%%%%%%%%%%%%%%%%%%%%%%%%%%%%%%%%%%%%%%%%%%%%%%%%%%%%%%%%
\section{Introduction}
\label{sec:introduction}

It has long been appreciated that amongst all low-energy effective theories only a very small fraction are compatible with the principles of quantum gravity.
Several general arguments have been made that the quantum gravity landscape is in fact finite (e.g.\ see~\cite{Vafa:2005ui,Acharya:2006zw,Grimm:2021vpn,Hamada:2021yxy}), and this has been borne out very concretely in high dimensions in the presence of supersymmetry.
For example, in ten dimensions minimal supersymmetry and absence of local gauge and gravitational anomalies is enough to essentially fix the gauge group~\cite{Green:1984sg} and the short list of possibilities exactly matches those found in string theory, a phenomenon which has come to be known as \emph{string universality} or the \emph{string lamppost principle}.
Ambitiously, one may hope to show that this continues all the way to four dimensions and in the absence of low-energy supersymmetry.

Understanding string universality has since progressed, most successfully for $d>6$ and with supersymmetry~\cite{Kim:2019ths,Montero:2020icj,Cvetic:2020kuw,Hamada:2021bbz,Bedroya:2021fbu}.
Continuing down to $d=6$ presents a new challenge since it is at this point that the minimal number of supercharges drops to eight and matter supermultiplets in arbitrary representations first appear.
Much work has gone into understanding the subtleties of the consistency conditions for these models~\cite{Kumar:2010ru,Seiberg:2011dr,Monnier:2018nfs,Lee:2019skh,Tarazi:2021duw,Dierigl:2022zll,Lee:2022swr,Kim:2024tdh,Kim:2024hxe,Lockhart:2025lea}.

Recently in~\cite{Kim:2024hxe} it was argued by Kim, Vafa and Xu that the landscape of consistent 6d, $\calN=(1,0)$ supergravities with non-abelian gauge group is in fact finite.
This is thanks to a sharp upper bound on the number of massless modes derived using the presence of a so-called \emph{H-string} in the BPS spectrum.
In particular, they find the bounds $T\leq 193$ and $\rank G\leq 480$, both of which are saturated by known F-theory constructions; this provides additional evidence for string universality.
A similar bound of $T\leq 567$ was also recently achieved by Birkar and Lee~\cite{Birkar:2025rcg} as a consequence of techniques developed to bound Hodge numbers of elliptic Calabi-Yau 3-folds.
In the present work we give an argument for the finiteness of this 6d landscape which is complementary to both of these.
Our upper bound on $T$ will come from analyzing the structure of anomaly lattices subject to constraints coming from the absence of Dai-Freed anomalies; consequently our working assumptions are somewhat different than those of~\cite{Kim:2024hxe} and~\cite{Birkar:2025rcg}.

Dai-Freed~\cite{Dai:1994kq} global anomalies arise when properly accounting for the fact that differential forms cannot fully capture all of the features of an (anti-)self-dual $2$-form gauge field.
These anomalies are detected by placing the theory on a non-trivial curved background, such as the lens spaces that we consider in this work.
In the context of 6d supergravity, Dai-Freed anomalies have recently been studied in~\cite{Basile:2023zng,Dierigl:2025rfn} and shown to be quite discerning.

\medskip

The logical organization of our argument is summarized in figure~\ref{fig:flowchart}.
After reviewing the basic structure of 6d supergravities and their local anomalies in section~\ref{sec:preliminaries}, section~\ref{sec:lattice-lower-bound} is dedicated to distilling the consequences of absence of Dai-Freed anomalies into a linear program which provides a strong lower bound on $T$ in terms of other data.
In section~\ref{sec:finiteness} the various bounds are brought together to prove finiteness and obtain upper bounds on $T$ and the gauge group in progressively more restrictive settings.
Finally, we conclude with a discussion in section~\ref{sec:discussion}.
Appendix~\ref{app:orbits} contains a proof of the uniqueness of $(3;1^T)$ for the gravitational anomaly vector and appendices~\ref{app:eta-invariants} and~\ref{app:column-groups} contain data supporting the calculations of the main text.

\begin{figure}
    \centering
    \begin{tikzpicture}[
            arr/.style = {very thick, -Triangle},
            arr-join/.style = {very thick, rounded corners},
            box/.style = {rectangle, draw, semithick, minimum height=9mm, minimum width=17mm, drop shadow, align=center},
        ]

        \node [box, fill=myPurple] (finite) at (0,0) {Finite landscape};

        \node [box, fill=myGreen, above = 2.5 of finite] (T-bound) {$T\leq 3003$};
        \draw [arr] (T-bound) -- (finite);

        \node [box, fill=myGreen, left = 1 of T-bound] (lat-bound) {$T\geq f^\ast(N)$};
        \draw [arr] (lat-bound) -- (T-bound) node (arr1) [midway, below] {\S\ref{sec:upper-bounds}};
        
        \node [box, fill=myGreen, above = 1 of arr1.north] (grav-bound) {$\Delta + 29T \leq 273$};
        \draw [arr-join] (grav-bound) |- (T-bound);
        
        \node [box, fill=myGreen, left = 1 of lat-bound] (bi-fixed) {if $b_i^2\leq -1$, $b_i^0=0$};
        \draw [arr] (bi-fixed) -- (lat-bound) node [midway, below] {\S\ref{sec:linear-programming}};
        
        \node [box, fill=myGreen, left = 1 of bi-fixed] (anom-global) {Dai-Freed anomalies};
        \draw [arr] (anom-global) -- (bi-fixed) node (arr2) [midway, below] {\S\ref{sec:Dai-Freed-anomalies}};

        \node [box, fill=myGreen, above = 1 of arr2.north] (b0-fixed) {$b_0=(3;1^T)$};
        \draw [arr-join] (b0-fixed) |- (bi-fixed);

        \node [box, fill=myBlue, above = 1 of b0-fixed] (asm-b0) {$b_0\in\I_{1,T}$ or $b_0/2\in\II_{1,T}$ is primitive};
        \draw [arr] (b0-fixed) -- (asm-b0);
        \draw [arr] (asm-b0) -- (b0-fixed)  node [midway, right] {\S\ref{app:orbits}};
        
        \node [box, fill=myGreen, below left = 1.5 and 0.3 of arr1.north] (fixed-T) {Finite number of models for fixed $T$};
        \draw [arr-join] (fixed-T) -| (finite);
        
        \node [box, fill=myGreen, below = 1.5 of fixed-T] (anom-probe) {String probes \& anomaly inflow};
        \draw [arr] (anom-probe) -- (fixed-T) node (arr3) [midway, right] {\S\ref{sec:string-probes}};
        
        \node [box, fill=myBlue, above left = 0.75 and 4 of arr3.west] (asm-index) {$[\Gamma:\Gamma^\BPS]\leq C<\infty$};
        \node [box, fill=myGreen, left = 4 of arr3.west] (J-exists) {$\exists\calJ$ with $\calJ^2>0$ and $\calJ\cdot b_i>0$};
        \node [box, fill=myGreen, below left = 0.75 and 4 of arr3.west] (bsqr-bound) {$b_i^2 \leq h(\Delta_i)$};
        \draw [arr-join] (J-exists) -| (fixed-T);
        \draw [arr-join] (asm-index) -| ($(J-exists.east) + (0.5,0.5)$) |- ($(J-exists.east) + (1,0)$);
        \draw [arr-join] (bsqr-bound) -| ($(J-exists.east) + (0.5,-0.5)$) |- ($(J-exists.east) + (1,0)$);

    \end{tikzpicture}
    \caption{Logical organization of the finiteness argument. Assumptions are in blue and the absence of local gauge anomalies is used implicitly throughout.}
    \label{fig:flowchart}
\end{figure}
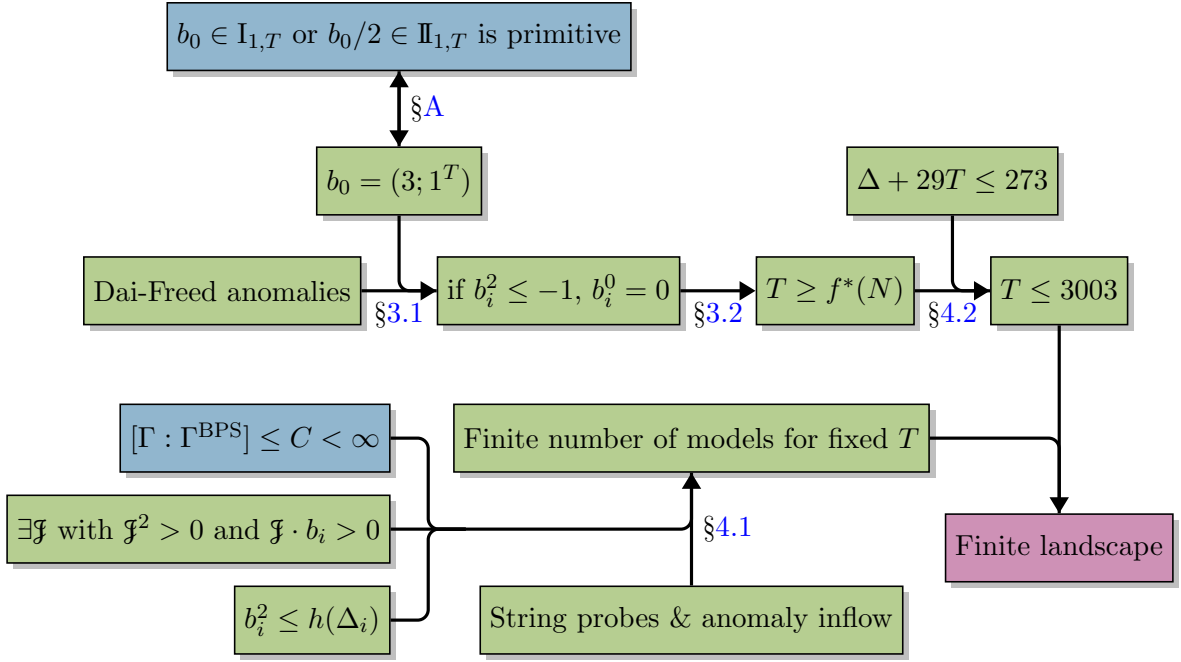

%%%%%%%%%%%%%%%%%%%%%%%%%%%%%%%%%%%%%%%%%%%%%%%%%%%%%%%%%%%%%%%%
\section{Preliminaries}
\label{sec:preliminaries}

In this section we review the structure of 6d, $\calN=(1,0)$ supergravities and set our notation.
First of all, with minimal supersymmetry there are four types of supermultiplets -- gravity, tensor, vector and hyper -- and a model $\calM$ amounts to a choice for the number of tensor multiplets $T$, gauge group $G$ and charged (i.e.\ non-trivial) hypermultiplet representations $\calH_\ch$.
As we will review below, the number of hypermultiplets transforming in the trivial representation is determined by anomaly cancellation and therefore implicitly fixed by these data.
The gauge group, $G=\prod_i G_i$, we take to be non-abelian and reserve $i,j$ to index data corresponding to the simple factors.
There are $T+1$ tensor fields, one in the gravity multiplet which is self-dual and one in each tensor multiplet which is anti-self-dual.
We reserve $\alpha,\beta\in\{0,1,\ldots,T\}$ to index these, with ``$0$'' always referring to the self-dual field.
To each gauge factor is associated a vector $b_i$ and there is a corresponding vector\footnote{The notation $a \coloneqq -b_0$ is common.} $b_0$ associated with the gravity multiplet; we refer to these collectively using the indices $I,J$.
These vectors live in a space of signature $(1,T)$ with inner product $\Omega$ which we abbreviate using $u\cdot v \coloneqq u^\alpha\Omega_{\alpha\beta}v^\beta$ and $u^2\coloneqq u\cdot u$.
Finally, the scalars in the tensor multiplets are written as $\calJ \in \SO(1,T)/\SO(T)$; these parametrize the gauge kinetic terms $\propto \calJ\cdot b_i$, which should be non-negative to avoid ghosts.

Local anomalies are captured by the $8$-form polynomial $I_8$ which receives contributions from the chiral matter in all multiplets~\cite{Alvarez-Gaume:1983ihn}.
In order to be canceled via the Green-Schwarz-West-Sagnotti (GSWS) mechanism, we must be able to write~\cite{Green:1984sg,Green:1984bx,Sagnotti:1992qw}
\begin{equation}
    I_8 + \frac{1}{2}\Omega_{\alpha\beta}X_4^\alpha\wedge X_4^\beta \stackrel{!}{=} 0 \,,
\end{equation}
where the $4$-form $X_4$ takes the following form:
\begin{equation}
    X_4^\alpha = -\frac{b_0^\alpha}{2}\tr\calR^2 + \frac{1}{2}\sum_i\frac{b_i^\alpha}{\lambda_i}\tr\calF_i^2 \,.
\end{equation}
The normalization constants $\lambda_i$ are listed in table~\ref{tab:group-constants}.
Having added the appropriate couplings between $B^\alpha$ and $X_4$ in the action, the Bianchi identities read
\begin{equation}
    \d{H^\alpha} = X_4^\alpha \,.
\end{equation}
Imposing the above cancellation requires that the irreducible terms ($\tr\calR^4$ and $\tr\calF_i^4$) in $I_8$ vanish, leading to the constraints
\begin{equation}
    H - V + 29T = 273 \,, \qquad \sum_R n_R^iB_R^i - B_\adj^i = 0 \,,
\end{equation}
while matching the reducible terms fixes all of the inner products for the $b_I$:
\begin{eqnalign}
\label{eq:inner-products}
    b_0^2 &= 9-T \,, &
    b_0\cdot b_i &= \frac{\lambda_i}{6}\Big(\sum_Rn_R^iA_R^i - A_\adj^i\Big) \,,\\
    b_i^2 &= \frac{\lambda_i^2}{3}\Big(\sum_Rn_R^iC_R^i - C_\adj^i\Big) \,, &\quad
    b_i\cdot b_j &= \lambda_i\lambda_j\sum_{R,S}n_{R,S}^{i,j}A_R^iA_S^j \quad (i\neq j) \,.
\end{eqnalign}
The non-negative coefficients $n_R^i$ ($n_{R,S}^{i,j}$) give the total number of hypermultiplets transforming in the representation $R$ ($R\otimes S$) of $G_i$ ($G_i\times G_j$), and the group-theoretic indices $A_R^i$, $B_R^i$ and $C_R^i$ are defined via
\begin{equation}
    \tr_R\calF_i^2 = A_R^i\tr\calF_i^2 \,, \qquad
    \tr_R\calF_i^4 = B_R^i\tr\calF_i^4 + C_R^i(\tr\calF_i^2)^2 \,,
\end{equation}
where ``$\tr$'' with no subscript refers to the trace in the fundamental representation.
Finally, introducing $\Delta\coloneqq H_\ch - V \leq H - V$ the absence of local gravitational anomalies amounts to the upper bound
\begin{equation}
    \label{eq:Delta-29T-upper-bound}
    \Delta + 29T \leq 273 \,.
\end{equation}
If this bound is satisfied then neutral hypermultiplets can always be added to make up the difference and cancel the gravitational anomaly exactly.

\begin{table}[t]
    \centering
    \begin{tabular}{>$c<$|*{8}{>$c<$}}
        \toprule
        G_i & \SU(N) & \SO(N) & \Sp(N) & E_6 & E_7 & E_8 & F_4 & G_2 \\ \midrule
        \lambda_i & 1 & 2 & 1 & 6 & 12 & 60 & 6 & 2\\
        h_i^\vee & N & N-2 & N+1 & 12 & 18 & 30 & 9 & 4 \\
        \bottomrule
    \end{tabular}
    \caption{Normalization constants and dual-Coxeter numbers for the classical and exceptional groups.}
    \label{tab:group-constants}
\end{table}

%%%%%%%%%%%%%%%%%%%%%%%%%%%%%%%%%%%%%%%%%%%%%%%%%%%%%%%%%%%%%%%%
\subsection{Anomaly and string charge lattices}
\label{sec:lattices}

A key figure in later sections will be the anomaly lattice $\Lat$ generated by the anomaly vectors $b_I$, as well as its sublattice $\lat\subseteq\Lat$ generated by the $b_i$ only.
While not immediately obvious since $A_R,B_R,C_R\in\Q$, it can be shown that the inner products in~\eqref{eq:inner-products} are all integers~\cite{Kumar:2010ru}.
That is, $\Lat$ is an integer lattice with corresponding Gram matrix $\Gram_{IJ} \coloneqq b_I\cdot b_J$ determined only by the chiral spectrum.
We will also make use of $\gram_{ij}\coloneqq b_i\cdot b_j$, the Gram matrix for $\lat$, as well as the integer vector $\u_i\coloneqq b_0\cdot b_i$ which are related to $\Gram$ via
\begin{eqnalign}
\label{eq:gram-decomp}
    \Gram = \left(\begin{array}{c|c}
        9-T & \u^\top\\ \hline
        \u & \gram
    \end{array}\right) \,.
\end{eqnalign}
We write the signature of $\Lat$ as $(n_+^\Lat, n_0^\Lat, n_-^\Lat)$.
The non-trivial requirement for local anomalies to be canceled via the GSWS mechanism is that $n_+^\Lat\leq 1$ and $n_-^\Lat \leq T$.
While $n_\pm^\Lat$ give the number of positive and negative eigenvalues of $\Gram$, null directions are more subtle.
Since $\Lat$ lives in $\R^{1,T}$ it is clear that $n_+^\Lat + n_0^\Lat \leq 1$ and $n_0^\Lat + n_-^\Lat \leq T$.
In contrast, $n_0^\Gram \coloneqq \nullity(\Gram) \geq n_0^\Lat$ can in principle be arbitrarily large if many of the $b_I$ are linearly dependent.
The above discussion applies also to $\lat$ and $\gram$, and by the Cauchy interlacing theorem we have $n_\pm^\lat\leq n_\pm^\Lat \leq n_\pm^\lat + 1$.

In addition to the bounds on the signature of $\Lat$ coming from the GSWS mechanism, it has been shown that $\Lat$ must be a sublattice of the unimodular string charge lattice $\Gamma=\Gamma_{1,T}$ of signature $(1,T)$~\cite{Seiberg:2011dr}.
In contrast to Euclidean unimodular lattices, the classification of indefinite unimodular lattices is quite simple and the lattice $\Gamma$ is essentially uniquely determined.
For each $T\geq 0$ there is a unique odd unimodular lattice of signature $(1,T)$, while even unimodular lattices of signature $(1,T)$ exist only for $T\equiv 1\mod{8}$ but again are unique.
We denote these lattices by $\I_{1,n}$ and $\II_{1,8n+1}$, respectively.
The lattice $\I_{1,n}$ consists of vectors $\Z^{n+1}$ with quadratic form $\Omega=\diag(1,(-1)^n)$.
The lattice $\II_{1,8n+1}$ can be described as the set of vectors
\begin{eqnarray}
    \Big\{x\in\Z^{8n+2}\cup\big(\tfrac{1}{2}+\Z\big)^{8n+2}\;\Big|\; \sum\nolimits_\alpha x^\alpha \equiv 0\mod{2}\Big\} \,,
\end{eqnarray}
again with quadratic form $\Omega=\diag(1,(-1)^{8n+1})$.
Thanks to its uniqueness, it can equally well be described as $\II_{1,8n+1} = U\oplus (-E_8)^{\oplus n}$, where $U$ is the unique even unimodular lattice of signature $(1,1)$ consisting of vectors $\Z^2$ with inner product $\Omega=(0\; 1; 1\; 0)$ and $(-E_8)$ is the $E_8$ root lattice with reversed signature.
The requirement $\Lambda\hookrightarrow\Gamma$ by itself will not play a major role in later sections since $T \geq n_-^\Gram + 5$ is a sufficient condition for such an embedding to exist~\cite{Nikulin:1980}.

Finally, one can show from the expressions of equation~\eqref{eq:inner-products} that for each $b_i$ we have $b_i^2\equiv b_0\cdot b_i\mod{2}$, i.e.\ $b_0$ is a characteristic vector of $\Lat$~\cite{Monnier:2017oqd}.
In addition, a careful analysis of the Green-Schwarz terms shows that $b_0$ must also be a characteristic vector of the full string charge lattice~\cite{Monnier:2018nfs}.
We will assume that $b_0\in\I_{1,T}$ or $b_0/2\in\II_{1,T}$ is primitive;\footnote{A vector $v$ in the lattice $\Gamma$ is primitive if there is no integer $k>1$ such that $v\in k\Gamma$.} note that in the even case $b_0$ cannot be both primitive and characteristic since $\II_{1,8n+1}$ contains no such vectors.\footnote{For example, consider the six vectors in $\II_{1,1}$ of norm $8$: $(3;1),(-3;-1)\in 2\II_{1,1}$ are characteristic but not primitive and $(3;-1),(-3;1),(\frac{9}{2};\frac{7}{2}),(-\frac{9}{2};-\frac{7}{2})\in\II_{1,1}$ are primitive but not characteristic (e.g.\ since $(3;-1)\cdot(\frac{1}{2};-\frac{1}{2})=1 \not\equiv 0\mod{2}$ and $(\frac{9}{2};\frac{7}{2})\cdot(1;1)=1 \not\equiv 0\mod{2}$).}
This allows us to always fix $b_0=(3;1^T)\in\I_{1,T}$ or $b_0=(3;1^T)\in 2\II_{1,T}$ by a suitable integral change of basis, which is, of course, the form distinguished by 6d F-theory constructions where $b_0$ is identified with the canonical class of the base manifold.
We provide a proof of this in appendix~\ref{app:orbits} by showing that there is only a single orbit under $\Aut(\I_{1,n})$ or $\Aut(\II_{1,8n+1})$ consisting of vectors with the required properties.

%%%%%%%%%%%%%%%%%%%%%%%%%%%%%%%%%%%%%%%%%%%%%%%%%%%%%%%%%%%%%%%%
\subsection{Model decomposition}
\label{sec:model-decomposition}

By permuting the $b_i$ the Gram matrix $\gram$ may be brought to block-diagonal form,
\begin{eqnalign}
    \gram = \bigoplus_a \gram_a \,,
\end{eqnalign}
where each of the $\gram_a$ is an irreducible matrix.
Associated to these $\gram_a$ are the vectors $\u_a$ from which $\u$ is built.
This gives us a natural decomposition of a model $\calM$ into a collection $\{\calM_a^\irr\}$ of irreducible models, indexed by $a$.
In general it is not the case that $\lat$ is a direct sum of the lattices $\lat_a$ since all null directions must be identified (recall that $n_0^\lat \leq 1$ as argued below~\eqref{eq:gram-decomp}).
In doing so all anomaly vectors with $b_i^2=0$ must lie on the future light cone so that $\calJ\cdot b_i>0$ is possible.
With this decomposition we can write
\begin{eqnalign}
    \Delta = \sum_a\Delta_a \,, \quad n_\pm^\lat = \sum_a n_\pm^{\lat_a} \,, \quad n_0^\lat \leq \sum_a n_0^{\lat_a} \,, \quad n_0^\gram = \sum_a n_0^{\gram_a} \,, %\quad \delta(\gram,\u) = \sum_a \delta(\gram_a,\u_a)
\end{eqnalign}
where $\Delta_a$, $n_\pm^{\lat_a}$ etc.\ have the obvious meanings.%, as well as
% \begin{eqnalign}
%     & \u\in(\ker\gram)^\perp \quad\Longleftrightarrow\quad \forall a \,,\; \u_a\in(\ker\gram_a)^\perp \,.
% \end{eqnalign}
We emphasize that each $\calM_a^\irr$ need not be consistent on their own for $\calM$ to be consistent; for example, one can easily find examples that have $\Delta_a\gg 273$ accompanied by a sufficient number of compensating negative contributions to $\Delta$.

\begin{table}
    \centering
    \begin{tabular}{*{6}{>$c<$}}
        \toprule
        G_a  & \calH_{\ch,a}       & \Delta_a & \gram_a & \u_a^\top & \text{Number}\\ \midrule
        E_8  & \text{--}           & -248     & [-12]   & [-10]  & N^{(12)}   \\
        E_7  & \text{--}           & -133     & [-8]    & [-6]  & N^{(8)}      \\
        E_7'  & \frac{1}{2}\rep{56} & -105     & [-7]    & [-5]  & N^{(7)}      \\
        E_6  & \text{--}           & -78      & [-6]    & [-4]  & N^{(6)}      \\
        F_4  & \text{--}           & -52      & [-5]    & [-3]  & N^{(5)}      \\
        \SO(8) & \text{--}           & -28      & [-4]    & [-2]  & N^{(4)}      \\
        \SU(3) & \text{--}           & -8       & [-3]    & [-1]  & N^{(3)}      \\
        G_2\times\SU(2) & \frac{1}{2}(\rep{1}\oplus\rep{7},\rep{2}) & -9 & \begin{bsmallmatrix}
            -3 & 1\\ 1 & -2
        \end{bsmallmatrix} & [-1 \; 0]  & N^{(32)} \\
        \SO(7)\times\SU(2)^2 & \frac{1}{2}(\rep{8},\rep{2},\rep{1})\oplus\frac{1}{2}(\rep{8},\rep{1},\rep{2}) & -11 & \begin{bsmallmatrix}
            -3 & 1 & 1\\ 1 & -2 \\ 1 & & -2
        \end{bsmallmatrix} & [{-1} \; 0 \; 0]  & N^{(322)} \\
        \SO(8)\times\SU(2)^3 & \scalebox{0.87}{$\begin{aligned}
            &\textstyle\frac{1}{2}(\repss{8}{v},\rep{2},\rep{1},\rep{1}) \oplus \frac{1}{2}(\repss{8}{s},\rep{1},\rep{2},\rep{1})\\[-5pt]
            &\textstyle\qquad \oplus\frac{1}{2}(\repss{8}{c},\rep{1},\rep{1},\rep{2})
        \end{aligned}$} & -13 & \begin{bsmallmatrix}
            -3 & 1 & 1 & 1 \\ 1 & -2 \\ 1 & & -2 \\ 1 & & & -2
        \end{bsmallmatrix} & [-1 \; 0 \; 0 \; 0]  & N^{(3222)} \\
        \SU(8) & \rep{36} & -27 & [-1] & [-1]  & N^{(1\perp)} \\
        \bottomrule
    \end{tabular}
    \caption{Examples of irreducible models with minimal numbers of hypermultiplets. All but the last two correspond to the non-Higgsable clusters identified in~\cite{Morrison:2012np}.}
    \label{tab:irr-model-examples}
\end{table}

There are eleven note-worthy irreducible models which we list in table~\ref{tab:irr-model-examples}; with so few hypermultiplets they comprise the main obstacle to showing finiteness.
All but two correspond to the non-Higgsable clusters identified in~\cite{Morrison:2012np}, and with hindsight the remaining two are needed to obtain sufficiently strong bounds.
Extending the list to include additional refinements can in principle lead to improved bounds, but we keep only this minimal set for simplicity.
By consulting the hypermultiplet content of simple factors with $b_i^2<0$ in table~\ref{tab:simple-bi2-neg}, one can quickly check that $b_i\cdot b_j = 0$ whenever $b_i^2,b_j^2\leq -3$, and when $b_i^2=-2$, $b_i\cdot b_j$ is non-zero for at most one $b_j$ with $b_j^2=-3$.
This means that one can unambiguously count the number of irreducible submatrices of $\gram$ that match those in the first ten rows of table~\ref{tab:irr-model-examples}.
Let us denote their number by
\begin{eqnalign}
    N \coloneqq \begin{bmatrix}
        N^{(12)} & N^{(8)} & N^{(7)} & N^{(6)} & N^{(5)} & N^{(4)} & N^{(3)} & N^{(32)} & N^{(322)} & N^{(3222)}
    \end{bmatrix}^\top \in \Z^{10} \,,
    \label{eq:N}
\end{eqnalign}
where $N^{(s)}$ counts the number of irreducible submatrices of $\gram$ with diagonal entries $-s$.
Having done so, let $N^{(1\perp)}$ be the maximal number of mutually orthogonal vectors with $b_i^2=b_0\cdot b_i=-1$ that are also orthogonal to all of the vectors tallied by $N$.
For example,
\begin{eqnalign}
    \Gram &= \begin{bsmallmatrix}
        9-T & 9 & 3 & -1 & -1 & -2\\
        9 & 7 & 6 & 2 & 2 & 2\\
        3 & 6 & 1\\
        -1 & 2 & & -3\\
        -1 & 2 & & & -3\\
        -2 & 2 & & & & -4
    \end{bsmallmatrix} \quad \longrightarrow \quad & N^{(4)} &= 1 \,,\;\; N^{(3)} = 2 \,,\\
    \Gram &= \begin{bsmallmatrix}
        9-T & -1 & 0 & -1 & -1\\
        -1 & -3 & 1\\
        0 & 1 & -2 & 1\\
        -1 & & 1 & -1 \\
        -1 & & & & -1
    \end{bsmallmatrix} \quad \longrightarrow \quad & N^{(32)} &= 1 \,,\;\; N^{(1\perp)} = 1 \,,\\
\end{eqnalign}
with all other $N^{(s)}$ vanishing.
Also define $N_a$ and $N_a^{(1\perp)}$ analogously for irreducible models, so $N\coloneqq \sum_a N_a$ and $N^{(1\perp)} = \sum_a N_a^{(1\perp)}$.
Also useful will be the constant vector
\begin{eqnalign}
    \underline{\Delta} \coloneqq \begin{bmatrix}
        -248 & -133 & -105 & -78 & -52 & -28 & -8 & -9 & -11 & -13
    \end{bmatrix}  \in \Z^{10} \,,
    \label{eq:Delta_underline}
\end{eqnalign}
consisting of the smallest values of $\Delta_a$ over irreducible models with Gram matrix $\gram_a$ that \emph{exactly} match the first ten rows of table~\ref{tab:irr-model-examples}.

\begin{table}[t]
    \centering
    \begin{tabular}{*{1}{>$c<$}*{1}{>$l<$}*{3}{>$c<$}}
        \toprule
        G_i    & \quad\calH_{\ch,i}                                & b_i^2 & b_0\cdot b_i & \text{Note}     \\
        \midrule
        \SU(N) &   (2N)\times\rep{N}                               & -2    &  0                             \\
        \SU(N) &  (N+8)\times\rep{N} \oplus \rep{N(N-1)/2}         & -1    &  1           & N \geq 4        \\
        \SU(N) &  (N-8)\times\rep{N} \oplus \rep{N(N+1)/2}         & -1    & -1           & N \geq 8        \\
        \SU(2) &     10\times\rep{2}                               & -1    &  1                             \\
        \SU(3) & \text{--}                                         & -3    & -1                             \\
        \SU(3) &     12\times\rep{3}                               & -1    &  1                             \\
        \SU(6) &    15\times\rep{6} \oplus \frac{1}{2}\rep{20}     & -1    &  1                             \\[7pt]
        \SO(N) &  (N-8)\times\rep{N}                               & -4    & -2           & N\geq 8         \\
        \SO(N) &  (N-7)\times\rep{N} \oplus (2^{\lfloor\frac{10-N}{2}\rfloor})\times\rep{2^{\lfloor\frac{N-1}{2}\rfloor}}
                                                                        & -3    & -1           & 7\leq N \leq 12 \\
        \SO(N) &  (N-6)\times\rep{N} \oplus (2\cdot 2^{\lfloor\frac{10-N}{2}\rfloor})\times\rep{2^{\lfloor\frac{N-1}{2}\rfloor}}
                                                                        & -2    &  0           & 7\leq N \leq 13 \\
        \SO(N) &  (N-5)\times\rep{N} \oplus (3\cdot 2^{\lfloor\frac{10-N}{2}\rfloor})\times\rep{2^{\lfloor\frac{N-1}{2}\rfloor}}
                                                                        & -1    &  1           & 7\leq N \leq 12 \\[7pt]
        \Sp(N) & (2N+8)\times\rep{2N}                              & -1    &  1                             \\[7pt]
        E_6    &      k\times\rep{27}                              & k-7   & k-5          & k\leq 6         \\
        E_7    & \frac{k}{2}\times\rep{56}                         & k-8   & k-6          & k\leq 7         \\
        E_8    & \text{--}                                         & -12   & -10                            \\
        F_4    &      k\times\rep{28}                              & k-5   & k-3          & k\leq 4         \\
        G_2    & (3k+1)\times\rep{7}                               & k-3   & k-1          & k\leq 2         \\
        \bottomrule
    \end{tabular}
    \caption{All simple groups and choices for charged hypermultiplets which give $b_i^2<0$.}
    \label{tab:simple-bi2-neg}
\end{table}

%%%%%%%%%%%%%%%%%%%%%%%%%%%%%%%%%%%%%%%%%%%%%%%%%%%%%%%%%%%%%%%%
\section{A lattice lower bound}
\label{sec:lattice-lower-bound}

The requirement that an embedding $\Lat\hookrightarrow\Gamma$ exists does not by itself constrain $T$ very much.
A sufficient condition for such an embedding to exist (for $\Gamma$ either even or odd) is simply that $\Lat$ be integral with $n_+^\Lat \leq 1$ and $n_-^\Lat \leq T - 5$~\cite{Nikulin:1980}.
Therefore without some extra information about the anomaly lattice, we cannot hope to improve substantially upon the trivial lower bound $T\geq n_-^\Lat$.
In section~\ref{sec:Dai-Freed-anomalies} we review the role of quadratic refinements and Dai-Freed anomalies and extract an additional property that the embedding $\Lat\hookrightarrow\Gamma$ must have to give a consistent model.
This extra information is then used in section~\ref{sec:linear-programming} to construct a linear program whose optimal solution gives a bound of the form $T\geq f^\ast(N)$, where typically $f^\ast(N)\gg n_-^\Lat$.

%%%%%%%%%%%%%%%%%%%%%%%%%%%%%%%%%%%%%%%%%%%%%%%%%%%%%%%%%%%%%%%%
\subsection{Quadratic refinements and Dai-Freed anomalies}
\label{sec:Dai-Freed-anomalies}

The purpose of this section is to use the absence of Dai-Freed global anomalies to argue that $b_i\in\lat$ with $m_i\coloneqq -b_i^2 \geq 1$ must have $b_i^0 = 0$.
This then implies that such vectors must take the form\footnote{To see this subtract $\sum_{\alpha=1}^T b_i^\alpha = m-2$ from $\sum_{\alpha=1}^T (b_i^\alpha)^2 = m$.}
\begin{eqnalign}
\label{eq:bi-form}
    b_i &= (0;-1,1^{m_i-1},0^{T-m_i}) \quad\text{or}\quad (0;2,1^{m_i-4},0^{T-m_i+3}) \,,
\end{eqnalign}
(the latter only for $m_i\geq 4$) up to a permutation of the spatial components.

\medskip

The modern treatment of chiral $p$-form fields on a space $X$ realizes the dynamical fields as boundary modes of a bulk theory defined on a space $Y$ of one higher dimension with $\partial Y=X$.
In addition, the correct language to use is that of differential cohomology which, as we will review briefly below, necessitates the use of quadratic refinements in order to make sense of the bulk theory on $Y$.
Dai-Freed anomalies capture to what extent physical quantities depend on the choice of $Y$ and should vanish if the theory is to be sound.
We adopt the notation of~\cite{Hsieh:2020jpj}.

The differential cohomology is constructed in the following way.
First of all, let $C_p(X)$ denote the space of $p$-chains ($p$-dimensional subspaces) of $X$ and $Z_p(X)\subset C_p(X)$ the space of closed subspaces.
Next, let $C^p(X,\R)$ be the space of cochains, i.e.\ linear functions $C_p(X)\to\R$, and similarly for $C^p(X,\Z)$, $Z^p(X,\R)$ and $Z^p(X,\Z)$.
Defining $\delta:C^p(X,\R)\to C^{p+1}(X,\R)$ via
\begin{equation}
    \int_N\delta\textsf{x} = \int_{\partial N}\textsf{x} \,,
\end{equation}
clearly $\delta^2=0$ and we can define the cohomology groups $H^p(X,\R) = Z^p(X,\R)/\delta C^{p-1}(X,\R)$ in the usual way, and similarly for $H^p(X,\Z)$.
The data relevant to describing a $p$-form gauge field is the $(p+1)$-form field strength $\textsf{F}\in Z^p(X,\R)$ and holonomies $\chi(M)$ for all $M\in Z_p(X)$ which must be related through
\begin{equation}
    \chi(\partial N) = \exp\!\left(2\pi i\int_N\textsf{F}\right) \,.
\end{equation}
An equivalent way to package these data is through a triple $\check{A} = (\textsf{N},\textsf{A},\textsf{F})$ with $\textsf{N}\in Z^{p+1}(X,\Z)$ and $\textsf{A}\in C^p(X,\R)$, subject to the constraint
\begin{equation}
    \textsf{N} = \textsf{F} - \delta\textsf{A} \,.
\end{equation}
There remains an ambiguity
\begin{equation}
    (\textsf{N},\textsf{A},\textsf{F}) \;\;\to\;\; (\textsf{N} - \delta\textsf{n}, \textsf{A} + \delta\textsf{a} + \textsf{n},\textsf{F}) \,,
\end{equation}
with $\textsf{n}\in C^p(X,\Z)$ and $\textsf{a}\in C^{p-1}(X,\R)$, under which the holonomies
\begin{equation}
    \chi(M) = \exp\!\Big(2\pi i\int_M\mathsf{A}\Big) \,, \qquad M\in Z_p(X)
\end{equation}
are unchanged.
The differential cohomology $\check{H}^{p+1}(X)$ is the set of such triples modulo identification under this gauge transformation.
The cohomology element $[\textsf{N}]_\Z \in H^{p+1}(X,\Z)$ is invariant under the above gauge transformation and gives the quantized fluxes for the gauge field.
When $[\textsf{N}]_\Z = 0$ there is a gauge transformation which allows us to set $\textsf{N}=0$ in which case $\textsf{A}$ is actually a $p$-form and $\textsf{F} = \d{\textsf{A}}$.

One can define a product between $\check{A}\in \check{H}^{p_A+1}(X)$ and $\check{B}\in \check{H}^{p_B+1}(X)$ with
\begin{eqnalign}
    \check{A}\star\check{B} = (\textsf{N}_{\check{A}\star\check{B}},\textsf{A}_{\check{A}\star\check{B}},\textsf{F}_{\check{A}\star\check{B}}) \coloneqq (\textsf{N}_A\cup\textsf{N}_B, \textsf{A}_A\cup\textsf{N}_B + \ldots, \mathsf{F}_A\wedge\mathsf{F}_B) \in \check{H}^{p_A+p_B+2}(X) \,,
\end{eqnalign}
where $\cup$ is the usual cup product on cochains.
With this, we can define a pairing
\begin{equation}
    (\check{A}_1,\check{A}_2) \coloneqq \int_Y \textsf{A}_{\check{A}_1\star\check{A}_2} \quad \in \R/\Z \,.
\end{equation}
Here $\check{A}_i\in\check{H}^{p+2}(Y)$ and $Y$ is $(2p+3)$-dimensional.
A quadratic refinement of this pairing is a function $\calQ:\check{H}^{p+1}(Y) \to \R/\Z$ satisfying
\begin{equation}
    \calQ(\check{A}_1 + \check{A}_2) - \calQ(\check{A}_1) - \calQ(\check{A}_2) + \calQ(0) = (\check{A}_1,\check{A}_2) \,.
\end{equation}
The shift $\tilde{\calQ}(\check{A}) \coloneqq \calQ(\check{A}) - \calQ(0)$ is also a quadratic refinement with $\tilde{\calQ}(0) = 0$ by definition.

In~\cite{Hsieh:2020jpj} it was shown that the following function is a quadratic refinement for the present case of $\check{A}\in\check{H}^4(Y)$ and $Y$ a seven-dimensional spin manifold:
\begin{eqnalign}
    \calQ(\check{A}) &\coloneqq \int_Z \left(\frac{1}{2}\textsf{F}_A\wedge\textsf{F}_A - \frac{1}{4}p_1(\calR)\wedge\textsf{F}_a + 29\hat{A}_2(\calR)\right)\\
    &= -\frac{1}{2}\eta(\calD_Y) + \int_Y\left(\frac{1}{2}\textsf{A}_a\wedge\textsf{A}_a + \textsf{A}_a\wedge\textsf{F}_S - \frac{1}{4}p_1(\calR)\wedge\textsf{A}_a\right) \,.
\end{eqnalign}
In the first line above, $Z$ is any eight-dimensional spin manifold with $\partial Z=Y$.
The second line follows from the Atiyah-Patodi-Singer index theorem after splitting $\check{A}=\check{S}+\check{a}$ with $\textsf{N}_a = 0$ so that $\textsf{A}_a$ is a differential form and $\textsf{F}_a = \d{\textsf{A}_a}$.
This shows that $\calQ$ is independent of the choice of $Z$.

Using this quadratic refinement we are then able to write down the action for a bulk field on $Y$:
\begin{eqnalign}
\label{eq:bulk-action}
    -S = -2\pi\int_Y \frac{1}{2g^2}\mathsf{F}_A\wedge{\star\mathsf{F}_A} +  2\pi i\kappa\tilde{\calQ}(\check{A}) + 2\pi i\kappa(\check{A},\check{C})_Y \,.
\end{eqnalign}
$\check{C}\in\check{H}^4(Y)$ is a background field for the electric $2$-form symmetry.
For $Y$ with $\partial Y=X$ and suitable boundary conditions, the above defines a theory with self-dual ($\kappa=+1$) or anti-self-dual ($\kappa=-1$) tensor field on the six-dimensional space $X$.

Aggregating all of the contributions from chiral fields, the partition function reads
\begin{eqnalign}
    \calZ(Y) &= \exp[2\pi i\calA(Y)] \,, \qquad \calA(Y)\in\R/\Z \,,
\end{eqnalign}
where the anomaly $\calA(Y)$ can be split as
\begin{eqnalign}
    \calA(Y) &\coloneqq \calA_\text{2-forms}(Y) + \calA_\text{fermions}(Y) \,.
\end{eqnalign}
We consider the case where $Y$ is a lens spaces $L_p^7=S^7/\Z_p$ for some $p\geq 2$.
One could take more general backgrounds, but as we will see having anomalies cancel on lens spaces is already strong enough for our purposes.
As shown in~\cite{Hsieh:2020jpj}, the contribution to the anomaly from $B_\pm^\alpha$ is
\begin{equation}
    \calA_\text{2-forms}(L_p^7) = {-\tilde{\calQ}^0(\check{C})} + \sum_{\alpha=1}^T\tilde{\calQ}^\alpha(\check{C}) + \calA_\text{grav}(L_p^7)
\end{equation}
with
\begin{equation}
    \calA_\text{grav}(L_p^7) = (273-28T)\eta_0^\text{Dirac}(L_p^7) - \eta^\text{gravitino}(L_p^7) \,.
\end{equation}
Let $\check{C}_1$ correspond to a generator of $H^4(L_p^7,\Z) \cong \Z_p$ and turn on a background for the simple gauge factor $G_i$.
Then through the GSWS couplings, $\check{C}^\alpha = kb_i^\alpha\check{C}_1$ and we find
\begin{eqnalign}
    \calA_\text{2-forms}(L_p^7) &= -\tilde{\calQ}(kb_i^0\check{C}_1) + \sum_{\alpha=1}^T\tilde{\calQ}(kb_i^\alpha\check{C}_1) + \calA_\text{grav}(L_p^7) \\
    &= -\frac{kb_i^0(p^2-2) + k^2(b_i^0)^2}{2p} + \sum_{\alpha=1}^T\frac{kb_i^\alpha(p^2-2) + k^2(b_i^\alpha)^2}{2p} + \calA_\text{grav}(L_p^7) \\
    &= -\frac{k^2b_i^2 + k(p^2-2)(b_0\cdot b_i - 2b_i^0)}{2p} + \calA_\text{grav}(L_p^7) \,.
\end{eqnalign}
We have used both $b_0=(3;1^T)$ and
\begin{equation}
    \tilde{\calQ}(n\check{C}_1) = n\tilde{\calQ}(\check{C}_1) + \binom{n}{2}(\check{C}_1,\check{C}_1)
\end{equation}
for $n\in\Z$, which follows from the defining property of quadratic refinements and induction on $n$.
The $\eta$-invariants on lens spaces are known: see~\cite{APS:1975}.
The invariants for a Dirac fermion of charge $q$ and gravitino are
\begin{eqnalign}
\label{eq:lens-space-eta}
    \eta_q^\text{Dirac}(L_p^7) &= -\frac{(p^2 - 1)(p^2 + 11) + 30q(p - q)(q^2 - pq - 2)}{720p} + \Z \,,\\
    \eta^\text{gravitino}(L_p^7) &= 4\big[\eta_1^\text{Dirac}(L_p^7) + \eta_{-1}^\text{Dirac}(L_p^7)\big] - 3 \eta_0^\text{Dirac}(L_p^7) = -\frac{(p^2-1)(p^2 - 37)}{144p} + \Z \,.
\end{eqnalign}
Notice that for the degenerate case $L_1^7 = S^7$ both $\eta^\text{gravitino}(S^7)=\Z$ and
\begin{equation}
    \eta_q^\text{Dirac}(S^7) = \binom{q+1}{4} + \Z = \Z
\end{equation}
are trivial.
The fermion contributions are
\begin{eqnalign}
    \calA_\text{fermions}(L_p^7) &= \eta^\text{gravitino}(L_p^7) - T\eta_0^\text{Dirac}(L_p^7) + \eta^\text{adj}(L_p^7) - \sum_R \eta^R(L_p^7) \,,
\end{eqnalign}
where $\eta^R(L_p^7)$ for the vectors and hypermultiplets can be written in terms of $\eta_q^\text{Dirac}(L_p^7)$ as described in appendix~\ref{app:eta-invariants}.
As a check, if the background is turned off ($k=0$) then we have
\begin{equation}
    \calA(L_p^7) = \eta^\text{gravitino}(L_p^7) - (H-V+T)\eta_0^\text{Dirac}(L_p^7) + \calA_\text{grav}(L_p^7)
\end{equation}
which vanishes exactly when $H-V+29T=273$.

As argued in~\cite{Basile:2023zng}, one should restrict attention to lens spaces $L_p^7$ for which $[\d{H^\alpha}]\in \Z_p$ is trivial for at least one $\alpha$.
For example, consider the $E_8$ non-Higgsable cluster for which we compute
\begin{eqnalign}
    \calA_\text{2-forms}(L_p^7) &= \frac{12k^2 + k(p^2-2)(10+2b_i^0)}{2p} + \calA_\text{grav}(L_p^7) \,,\\
    \calA_\text{fermions}(L_p^7) &= -\frac{(p^2-1)(p^2-37)}{144p} + \eta^{\rep{248}}(L_p^7) - (521-28T)\eta_0^\text{Dirac}(L_p^7) \,,\\
    \calA(L_p^7) &= \frac{12k^2+k(p^2-2)(10+2b_i^0)}{2p} + \Z = \frac{k(6k-5-b_i^0)}{p} + \Z \,.
\end{eqnalign}
Notice that the $T$-dependence has dropped out; this is a general feature.
Suppose that $b_i^0=0$, so either $b_i=(0;2,1^8,0^{T-9})$ or $b_i=(0;-1,1^{11},0^{T-12})$.
In the former case, the relevant parameters are
\begin{equation}
    (k,p) = (1,3) \,,\; (2,3) \,,\; (3,2) \,,\; (3,3) \,,\; (3,5) \,,\; (4,3) \,,\; (4,7) \,,
\end{equation}
while for the latter they are
\begin{equation}
    (k,p) = (1,2) \,,\; (1,3) \,,\; (2,3) \,,\; (3,2) \,,\; (3,3) \,,\; (3,4) \,,\; (4,3) \,,\; (4,5) \,,
\end{equation}
and it is quickly checked that the anomaly is trivial for all $15$ of these pairs.
In contrast, if one picks $b_i = (-1;-1^3,1^{10},0^{T-13})$ then for each $k$ there is at least one allowed value of $p$ for which there is an anomaly:
\begin{eqnalign}
    k &= 1: \quad & \calA(L_4^7) &= \tfrac{1}{2} + \Z \,,\\
    k &= 2: \quad & \calA(L_3^7) &= \tfrac{1}{3} + \Z \,,\\
    k &= 3: \quad & \calA(L_4^7) &= \tfrac{1}{2} + \Z \,,\\
    k &= 4: \quad & \calA(L_3^7) &= \tfrac{2}{3} + \Z \,.
\end{eqnalign}
More generally, one may pick $b_i^0$ and then find all candidate vectors $b_i$ using backtracking on the spatial components.
We have checked out to $|b_i^0|\leq 20$ and only for $b_i^0=0$ does the anomaly vanish for all $p$.
We take this as strong evidence that $b_i^0=0$ is required since as the components of $b_i$ grow there are ever more lens spaces that are allowed backgrounds and $\calA(L_p^7)\in\frac{1}{p}\Z$ is less likely to vanish when $p$ is large.

A similar analysis holds for each row of table~\ref{tab:simple-bi2-neg}.
Computing $\calA(L_p^7)$ using the values of $b_i^2$ and $b_0\cdot b_i$ and using appendix~\ref{app:eta-invariants} to determine the contributions of the vector multiplets and hypermultiplets, one finds that only for $b_i^0 = 0$ is the model free of anomalies.
In these calculations we assume the minimal choice for global structure of the gauge group.

%%%%%%%%%%%%%%%%%%%%%%%%%%%%%%%%%%%%%%%%%%%%%%%%%%%%%%%%%%%%%%%%
\subsection{Linear programming}
\label{sec:linear-programming}

Consider a set of $k$ vectors $b_i\in\Lat$ with $m_i\coloneqq -b_i^2 \geq 2$ such that for each $m_i=2$ there is exactly one $m_j=3$ for which $b_i\cdot b_j$ is nonzero and all other pairs of vectors are orthogonal.
The number of vectors of each number is captured by $N\in\Z^{10}$ (see~\eqref{eq:N} and table~\ref{tab:irr-model-examples}).
As we have seen, all of these vectors must each take the form of~\eqref{eq:bi-form} to avoid there being global anomalies.
Because at most one of their components are negative, the ways that two vectors can have non-zero components in common is quite limited.
For example, the only ways to make $b_1=(0;2,1^{m_1-4},0,\ldots)$ and $b_2=(0;-1,1^{m_2-1},0,\ldots)$ orthogonal by permuting their components are
\begin{eqnalign}
\label{eq:orthog-example}
    &\left\{\begin{array}{cccccc}
        b_1 = (0; &  2 & 1^2 & 1^{m_1-6} & 0^{m_2-3} & 0^{T-m_1-m_2+6}) \,, \\
        b_2 = (0; & -1 & 1^2 & 0^{m_1-6} & 1^{m_2-3} & 0^{T-m_1-m_2+6}) \,,
    \end{array}\right. \\[7pt]
    &\left\{\begin{array}{ccccccc}
        b_1 = (0; & 2 & 1^{m_1-6} &  1 & 1 & 0^{m_2-2} & 0^{T-m_1-m_2+5}) \,, \\
        b_2 = (0; & 0 & 0^{m_1-6} & -1 & 1 & 1^{m_2-2} & 0^{T-m_1-m_2+5}) \,,
    \end{array}\right. \\[7pt]
    &\left\{\begin{array}{cccccc}
        b_1 = (0; & 2 & 1^{m_1-4} &  0 & 0^{m_2-1} & 0^{T-m_1-m_2+3}) \,, \\
        b_2 = (0; & 0 & 0^{m_1-4} & -1 & 1^{m_2-1} & 0^{T-m_1-m_2+3}) \,,
    \end{array}\right.
\end{eqnalign}
which require $T\geq m_1 + m_2 - 6$, $T\geq m_1 + m_2 - 5$ and $T\geq m_1 + m_2 - 3$, respectively.
It is not hard to convince oneself that when all $m_i\geq 4$ the lowest that $T$ can be is $\sum_i(m_i-3)$, which can be achieved by taking combinations of non-overlapping $(2,1^{m-4})$s and blocks of the form
\begin{equation}
    \begin{array}{rrrrrrrrr}
        -1 & 1 & 1 & 1 &\;\; 1^{m_1-4} \\
        1 & -1 & 1 & 1 & & 1^{m_2-4} \\
        1 & 1 & -1 & 1 & & & 1^{m_3-4} \\
        1 & 1 & 1 & -1 & & & & 1^{m_4-4}
    \end{array}
\end{equation}
for example.
Clearly such constructions cannot work if any of $N^{(3)},\ldots,N^{(3222)}$ are non-zero, so we would like to be more systematic.

\medskip

Inspired by~\eqref{eq:orthog-example}, the idea will be to take the $k\times T$ matrix $b_i^\alpha$ ($i\in\{1,2,\ldots,k\}$ and $\alpha\in\{1,2,\ldots,T\}$) and shift our attention from the rows to the columns, which as we have just seen can only contain a very limited distribution of non-zero values.
It is clear that for each $\alpha\geq 1$, the multiset of non-zero values $\{b_i^\alpha\}_i$ can contain at most one $2$, at most one $-1$ and never a $2$ and $1$ together.
Less clear is that the number of $1$s cannot be more than three, but this can be shown by starting with a column containing four ones and trying to extend the vectors to achieve pair-wise orthogonality.
Therefore if a column contains non-zero entries they must be one of
\begin{eqnalign}
\label{eq:single-columns}
    &\begin{bmatrix}
        1
    \end{bmatrix}
    \begin{array}{l}
        m_1
    \end{array} \,, \quad 
    \begin{bmatrix}
        1 \\ 1
    \end{bmatrix}
    \begin{array}{l}
        m_1 \\ m_2
    \end{array} \,, \quad 
    \begin{bmatrix}
        1 \\ 1 \\ 1
    \end{bmatrix}
    \begin{array}{l}
        m_1 \\ m_2 \\ m_3
    \end{array} \,, \quad 
    \begin{bmatrix}
        2
    \end{bmatrix}
    \begin{array}{l}
        m_1
    \end{array} \,, \quad 
    \begin{bmatrix}
        -1 \\ 2
    \end{bmatrix}
    \begin{array}{l}
        m_1 \\ m_2
    \end{array} \,,\\
    &\begin{bmatrix}
        -1
    \end{bmatrix}
    \begin{array}{l}
        m_1
    \end{array} \,, \quad
    \begin{bmatrix}
        -1 \\ 1
    \end{bmatrix}
    \begin{array}{l}
        m_1 \\ m_2
    \end{array} \,, \quad
    \begin{bmatrix}
        -1 \\ 1 \\ 1
    \end{bmatrix}
    \begin{array}{l}
        m_1 \\ m_2 \\ m_3
    \end{array} \,, \quad 
    \begin{bmatrix}
        -1 \\ 1 \\ 1 \\ 1
    \end{bmatrix}
    \begin{array}{l}
        m_1 \\ m_2 \\ m_3 \\ m_4
    \end{array} \,,
\end{eqnalign}
where $m_i$ label the norms of the vectors to which the components in that row correspond.
With hindsight, working just with individual columns is insufficient to get a good bound when $N^{(3)},\ldots,N^{(3222)}$ are large compared to $N^{(12)},\ldots,N^{(4)}$.
To address this, columns which contain components of vectors with $m_i\in\{2,3\}$ will be grouped together.
For example, one possible column grouping is
\begin{eqnalign}
\label{eq:col-grp-ex-1}
    \begin{bmatrix}
        -1 & 1 & 1 & \\
        & & -1 & 1 \\
        2 & 1 & 1 & 1
    \end{bmatrix}
    \begin{array}{l}
        3 \\ 2 \\ m_1\geq 7
    \end{array} \,,
\end{eqnalign}
which contributes once to the tally for $N^{(32)}$.
Another example is
\begin{eqnalign}
\label{eq:col-grp-ex-2}
    \begin{bmatrix}
        & & -1 & 1 & 1 \\
        -1 & & 1 & 1 \\
        1 & -1 \\
        1 & 1 & 1 & & 1 \\
        1 & 1 & & 1 & -1
    \end{bmatrix}
    \begin{array}{l}
        3 \\ 3' \\ 2' \\ m_1\geq 5 \\ m_2\geq 5
    \end{array} \,,
\end{eqnalign}
which contributes once to both $N^{(3)}$ and $N^{(32)}$.
Notice that we must have $m_1,m_2\geq 5$ since the corresponding rows must share another non-zero component in common elsewhere in the column decomposition in order to be orthogonal.
In total there are $2961$ distinct column groups after accounting for the freedom to reorder their rows and columns.
These fall into $74$ different families like the two highlighted above: see appendix~\ref{app:column-groups} for a complete list.

\medskip

The distribution of column groups which make up the matrix $b_i^\alpha$ are subject to $37$ constraints depending on $N$.
Four come simply from the tallies for $N^{(3)}$, $N^{(32)}$, $N^{(322)}$ and $N^{(3222)}$.
Similarly, for each $m\geq 4$ the total number of $2$s and $-1$s in rows with this value of $m$ must match $N^{(m)}$ ($6$ constraints).
The remaining $27$ constraints come from aggregating contributions to the norms,
\begin{eqnalign}
\label{eq:norm-constraint}
    mN^{(m)} &= \sum_{\alpha\geq 1}^T\sum_{\substack{i \\ m_i=m}}^k(b_i^\alpha)^2 \,, &\quad m &\geq 4
    \quad (6\text{ constraints})\,,
\end{eqnalign}
and inner products amongst vectors of given norms:
\begin{eqnalign}
\label{eq:orthog-constraint}
    0 &= \sum_{\alpha\geq 1}^T\sum_{\substack{i\neq j\\ m_i=m_1\\ m_j=m_2}}^k b_i^\alpha b_j^\alpha \,, &\quad m_1 &\geq m_2\geq 4 
    \quad (21\text{ constraints})\,.
\end{eqnalign}
For example, the column group of~\eqref{eq:col-grp-ex-2} with $m_1=6$, $m_2=5$ contributes only to two of the six norm sums of~\eqref{eq:norm-constraint} ($4$ to both $m=5$ and $m=6$) and only one of the $21$ orthogonality sums of~\eqref{eq:orthog-constraint} ($1$ for $m_1=6$, $m_2=5$).
In addition, there are $30$ linear inequalities which stem from the following observation.
Fix $m_1>m_2\geq 4$ and consider a $2\times 2$ block of all $1$s such as appears in the last two rows of~\eqref{eq:col-grp-ex-2}.
Each such block requires either
\begin{eqnalign}
    \begin{bmatrix}
        -1\\ 2
    \end{bmatrix}\begin{array}{l}
        m_1 \\ m_2
    \end{array} \qquad \text{or}\qquad
    \begin{bmatrix}
        -1\\ 2
    \end{bmatrix}\begin{array}{l}
        m_2 \\ m_1
    \end{array} \qquad\text{or}\qquad\emph{both}\;\;
    \begin{bmatrix}
        -1\\ 1
    \end{bmatrix}\begin{array}{l}
        m_1 \\ m_2
    \end{array}\;\;\text{and}\;\;
    \begin{bmatrix}
        -1\\ 1
    \end{bmatrix}\begin{array}{l}
        m_2 \\ m_1
    \end{array}
\end{eqnalign}
somewhere in the decomposition to ensure those two vectors in particular are orthogonal.
Therefore we must have
\begin{eqnalign}
    \#\!\left(\begin{bmatrix}
        1 & 1 \\ 1 & 1
    \end{bmatrix}\begin{array}{l}
        m_1 \\ m_2
    \end{array}\right) \leq \#\!\left(\begin{bmatrix}
        -1\\ 2
    \end{bmatrix}\begin{array}{l}
        m_1 \\ m_2
    \end{array}\right) + 
    \#\!\left(\begin{bmatrix}
        -1\\ 2
    \end{bmatrix}\begin{array}{l}
        m_2 \\ m_1
    \end{array}\right) + 
    \#\!\left(\begin{bmatrix}
        -1\\ 1
    \end{bmatrix}\begin{array}{l}
        m_1 \\ m_2
    \end{array}\right) \,,\\
    \#\!\left(\begin{bmatrix}
        1 & 1 \\ 1 & 1
    \end{bmatrix}\begin{array}{l}
        m_1 \\ m_2
    \end{array}\right) \leq \#\!\left(\begin{bmatrix}
        -1\\ 2
    \end{bmatrix}\begin{array}{l}
        m_1 \\ m_2
    \end{array}\right) + 
    \#\!\left(\begin{bmatrix}
        -1\\ 2
    \end{bmatrix}\begin{array}{l}
        m_2 \\ m_1
    \end{array}\right) + 
    \#\!\left(\begin{bmatrix}
        -1\\ 1
    \end{bmatrix}\begin{array}{l}
        m_2 \\ m_1
    \end{array}\right) \,.
\end{eqnalign}

All of these conditions can be collected into the following primal linear program:
\begin{eqnalign}
\label{eq:primal-lin-program}
    \min_{x\in\R^{2961}} c^\top x \qquad \text{s.t.} \quad
    \left\{
    \begin{aligned}
        x &\geq 0\\
        Kx &= LN\\
        Mx &\leq 0
    \end{aligned}
    \right.
\end{eqnalign}
where $K\in\calM_{37\times2961}(\Z)$, $L\in\calM_{37\times 10}(\Z)$, $M\in\calM_{30\times 2961}(\Z)$ and $c\in\Z^{2961}$ are explicit integer matrices containing data on the $2961$ column groups and  $N \in \Z^{10}$ as in~\eqref{eq:N}.
$x$ represents the distribution of column groups and the vector $c$ counts the number of non-zero columns; for example, the entries of $c$ corresponding to column groups matching the templates in~\eqref{eq:col-grp-ex-1} and~\eqref{eq:col-grp-ex-2} are $4$ and $5$, respectively.
Writing $x^\ast(N)$ for the optimal solution of~\eqref{eq:primal-lin-program}, we therefore have the bound
\begin{eqnalign}
\label{eq:T-lower-bound}
    T - N^{(1\perp)} \geq f^\ast(N) \coloneqq c^\top x^\ast(N) \,.
\end{eqnalign}
Notice that we have relaxed $x\in\Z^{2961}$ to $x\in\R^{2961}$ in order to avoid having to wrangle with integer linear programming.
This is not much of a concession since we are ultimately interested in bounding $T$ when the components of $N$ are large.
The main advantage is that we can write $x^\ast(\lambda N)=\lambda x^\ast(N)$ for any $\lambda>0$ and thus
\begin{eqnalign}
\label{eq:f-star-linear}
    f^\ast(N) = \|N\|_1f^\ast(\Omega_N) \,, \qquad \Omega_N \coloneqq \frac{N}{\|N\|_1} \in\sigma^9 \,,
\end{eqnalign}
where $\|N\|_1$ is the $L^1$ norm.
This allows us a much better understanding of the bound since $\Omega_N$ lies in a compact domain, the standard $9$-simplex $\sigma^9 \coloneqq \{x\in\R^{10}\;|\;x\geq 0\,,\; \|x\|_1=1\}$.

\begin{table}
    \centering
    \begin{tabular}{*{13}{>$c<$}}
        \toprule
        \multicolumn{12}{c}{$v'\in V(F')$} & \mathrm{Vol}(D(v'))/\mathrm{Vol}(\sigma^9) \\
        \midrule
        {}[\hspace{-5pt} & 9               & 5               & 4               & 3             & 2              & 1             & 0               & 0              & 0                & 0                & \hspace{-5pt}]^\top & 0.488 \\[4pt]
        {}[\hspace{-5pt} & \frac{26}{3}    & \frac{14}{3}    & \frac{11}{3}    & \frac{8}{3}   & \frac{5}{3}    & 1             & \frac{1}{3}     & \frac{1}{3}    & \frac{1}{3}      & \frac{2}{3}      & \hspace{-5pt}]^\top & 0.083 \\[4pt]
        {}[\hspace{-5pt} & \frac{1}{2}     & \frac{1}{2}     & \frac{1}{2}     & \frac{1}{2}   & \frac{1}{2}    & \frac{2}{3}   & \frac{1}{2}     & 3              & \frac{9}{2}      & 6                & \hspace{-5pt}]^\top & 0.053 \\[4pt]
        {}[\hspace{-5pt} & \frac{1}{2}     & \frac{1}{2}     & \frac{1}{2}     & \frac{1}{2}   & \frac{1}{2}    & \frac{5}{6}   & 1               & 3              & 4                & 6                & \hspace{-5pt}]^\top & 0.016 \\[4pt]
        {}[\hspace{-5pt} & \frac{17}{2}    & \frac{9}{2}     & \frac{7}{2}     & \frac{5}{2}   & \frac{3}{2}    & \frac{11}{12} & \frac{1}{3}     & \frac{1}{2}    & \frac{1}{2}      & 1                & \hspace{-5pt}]^\top & 0.016 \\[4pt]
        {}[\hspace{-5pt} & \frac{1}{2}     & \frac{1}{2}     & \frac{1}{2}     & \frac{1}{2}   & \frac{1}{2}    & \frac{5}{6}   & \frac{2}{3}     & 3              & \frac{13}{3}     & 6                & \hspace{-5pt}]^\top & 0.015 \\[4pt]
        {}[\hspace{-5pt} & \frac{38}{21}   & \frac{74}{63}   & \frac{64}{63}   & \frac{6}{7}   & \frac{44}{63}  & \frac{46}{49} & \frac{74}{63}   & \frac{164}{63} & \frac{241}{63}   & \frac{323}{63}   & \hspace{-5pt}]^\top & 0.014 \\[4pt]
        {}[\hspace{-5pt} & 8               & 4               & 3               & 2             & 1              & \frac{4}{5}   & \frac{1}{3}     & \frac{1}{2}    & 1                & 2                & \hspace{-5pt}]^\top & 0.013 \\[4pt]
        {}[\hspace{-5pt} & \frac{17}{2}    & \frac{9}{2}     & \frac{7}{2}     & \frac{8}{3}   & \frac{11}{6}   & 1             & \frac{1}{3}     & \frac{1}{3}    & \frac{1}{2}      & \frac{2}{3}      & \hspace{-5pt}]^\top & 0.012 \\[4pt]
        {}[\hspace{-5pt} & \frac{347}{210} & \frac{689}{630} & \frac{601}{630} & \frac{57}{70} & \frac{85}{126} & 1             & \frac{739}{630} & \frac{167}{63} & \frac{2411}{630} & \frac{1648}{315} & \hspace{-5pt}]^\top & 0.008 \\[4pt]
        \bottomrule
    \end{tabular}
    \caption{Vertices $v'\in V(F')$ with the largest domains $D(v')\subset\sigma^9$ as determined by a Monte-Carlo estimate with $10^6$ uniformly sampled points $\Omega_N\in\sigma^9$.}
    \label{tab:polytope-vertices}
\end{table}

The nature of $f^\ast$ is further illuminated by the dual linear program, whose optimal value matches that of~\eqref{eq:primal-lin-program} thanks to strong duality:
\begin{eqnalign}
\label{eq:dual-lin-program}
    \max_{u\in\R^{30},\, v\in\R^{37}} v^\top LN \qquad \text{s.t.} \quad
    \left\{
    \begin{aligned}
        & \;\;\; v \text{ free} \,,\; u \leq 0 \,,\\
        & K^\top v + M^\top u \leq c \,.
    \end{aligned}
    \right.
\end{eqnalign}
The feasible region $F$ of the dual linear program is $F\coloneqq\{(u,v)\in\R^{30}\times\R^{37}\;|\;u\leq0\,,K^\top v + M^\top u \leq c\}$.
Importantly, $F$ is independent of $N$ so we can simply write
\begin{eqnalign}
    f^\ast(\Omega_N) &= \max_{(u,v)\in V(F)} v^\top L \Omega_N = \max_{v'\in V(F')}v'^\top \Omega_N \,,
\end{eqnalign}
where $V(F)$ is the set of vertices of the polytope $F\subset\R^{67}$ and similarly for the polytope $F'\coloneqq L^\top \Pi_vF \subset\R^{10}$, where $\Pi_v:\R^{67}\to\R^{37}$ projects out $u$.
For each vertex $v'\in V(F')$ there is a corresponding convex domain $D(v')\subset\sigma^9$ within which $f^\ast(\Omega_N)=v'^\top \Omega_N$.
The vertices corresponding to the ten largest domains are given in table~\ref{tab:polytope-vertices}.
The largest domain by far has $f^\ast(\Omega_N) = \sum_{m\geq 4}(m-3)\Omega_N^{(m)}$, the bound noticed above.
The remainder of $\sigma^9$ is covered by much smaller domains which improve on this bound when $\Omega_N^{(3)},\ldots,\Omega_N^{(3222)}$ are relatively large.
Figure~\ref{fig:triangle-bounds} shows the bounds and domains on some two-dimensional subspaces of $\partial\sigma^9$.

\begin{figure}[p]
    \centering
    \includegraphics[width=\textwidth]{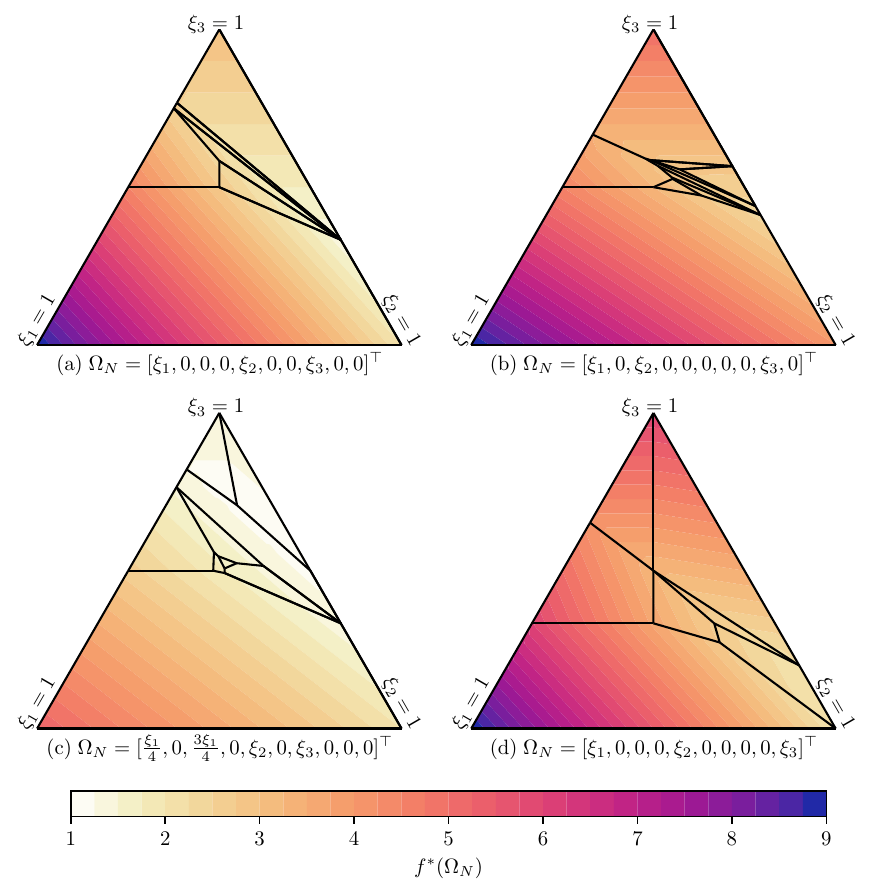}
    \caption{The function $f^\ast(\Omega_N)$ on four two-dimensional slices through the boundary of $\sigma^9$, each parametrized by $\xi_1,\xi_2,\xi_3\geq 0$ with $\xi_1+\xi_2+\xi_3 = 1$. Black lines show the partitioning into domains within which $f^\ast$ is given by different linear functions.}
    \label{fig:triangle-bounds}
\end{figure}

Given the number of inequalities defining these polytopes and the generic difficulty in enumerating the vertices of a convex polytope (e.g.\ see~\cite{Linial_1986}), it appears to be infeasible to actually obtain $V(F)$ or $V(F')$.\footnote{$F\subset\R^{67}$ is the intersection of $2961+30$ half-spaces and therefore has $\mathcal{O}(3000)$ facets (some of the defining inequalities may be redundant). Barnette's lower bound theorem~\cite{Barnette:1971,Barnette:1973} then gives $V(F)\gtrsim 3000\times 67\approx 200,\!000$, although the truth is undoubtedly much larger since $F$ is unlikely to be as highly structured as the polytopes which saturate Barnette's bound. Also, the bound extracted from McMullen's upper bound theorem~\cite{McMullen:1970} is only marginally better than the trivial bound $V(F)\leq \binom{3000}{67}\approx 10^{138}$.}
Instead, in practice we randomly sample $\Omega_N\in\sigma^9$ and solve~\eqref{eq:dual-lin-program} using the simplex method to both generate vertices and estimate the size of their domains.
Restricting to a small subset $\tilde{V}\subset V(F')$ of vertices with largest (estimated) domain sizes, we have the lower bound
\begin{equation}
    f^\ast(\Omega_N) \geq \tilde{f}^\ast(\Omega_N) \coloneqq \max_{v'\in\tilde{V}} \, v'^\top \Omega_N \,.
\end{equation}
Computer experiments suggest that if $\tilde{V}$ consists of only the top $100$ ($1000$) vertices by domain size then $\tilde{f}^\ast$ matches $f^\ast$ exactly on $\approx\! 90\%$ ($\approx\! 98\%$) of $\sigma^9$ and underestimates $f^\ast$ by at most $\approx\! 5\%$ ($\approx\! 0.5\%$) on the remainder of $\sigma^9$.
In section~\ref{sec:finiteness} below we use the estimate $\tilde{f}^\ast$ with $|\tilde{V}|=1000$ in place of $f^\ast$ when numerically minimizing and, once the extremizing model has been identified, compute $f^\ast$ exactly to confirm that they are extremal.\footnote{Code allowing one to solve these linear programs and generate vertices of the feasible region polytopes has been made available at \href{https://github.com/gloges/6d-sugra-linear-program}{github.com/gloges/6d-sugra-linear-program}.}

\medskip

We close this section with an explicit example.
Consider
\begin{equation}
\label{eq:omega-N-example}
    \Omega_N = \big[1-z \;\; 0 \;\; 0 \;\; 0 \;\; 0 \;\; 0 \;\; z \;\; 0 \;\; 0 \;\; 0 \big]^\top \,,
\end{equation}
which describes, for instance, models containing the non-Higgsable sector $E_8^{N^{(12)}}\times\SU(3)^{N^{(3)}}$ with $z=N^{(3)}/(N^{(12)}+N^{(3)})$.
There are only $31$ relevant column groups that have all $m_i$ either $3$ or $12$; nine are single columns with only $m_i=12$ (see~\eqref{eq:single-columns}) and the rest include things like
\begin{equation}
    \begin{bmatrix}
        -1 & 1 & 1
    \end{bmatrix}
    \begin{array}{l}
        3
    \end{array} \,, \quad
    \begin{bmatrix}
        -1 & 1 & 1\\
        2 & 1 & 1
    \end{bmatrix}
    \begin{array}{l}
        3 \\ 12
    \end{array} \,, \quad
    \begin{bmatrix}
        -1 & 1 & 1\\
        & -1 & 1\\
        1 & & 1\\
        1 & 1
    \end{bmatrix}
    \begin{array}{l}
        3 \\ 12 \\ 12 \\ 12
    \end{array} \,, \quad
    \begin{bmatrix}
        -1 & 1 & & 1\\
        & -1 & 1 & 1\\
        1 & & -1 & 1\\
        1 & 1 & 1
    \end{bmatrix}
    \begin{array}{l}
        3 \\ 3 \\ 3 \\ 12
    \end{array}
\end{equation}
(see appendix~\ref{app:column-groups} for the rest).
An optimal column group distribution found by solving the primal linear program for $z\in[0,\frac{1}{2}]$ is
\begin{equation}
    (1-2z) \times \left( 
    \begin{bmatrix}
        2
    \end{bmatrix}
    \begin{array}{l}
        12
    \end{array} \right) + 
    z \times \left( 
    \begin{bmatrix}
        -1 & 1 & 1\\
        2 & 1 & 1
    \end{bmatrix}
    \begin{array}{l}
        3 \\ 12
    \end{array} \right) + 
    (8-10z) \times \left( 
    \begin{bmatrix}
        1
    \end{bmatrix}
    \begin{array}{l}
        12
    \end{array} \right) \,.
\end{equation}
Similarly, for $z\in[\frac{1}{2},\frac{7}{8}]$ an optimal solution is
\begin{equation}
    (1-z) \times \left( 
    \begin{bmatrix}
        -1 & 1 & 1\\
        2 & 1 & 1
    \end{bmatrix}
    \begin{array}{l}
        3 \\ 12
    \end{array} \right) + 
    \frac{2z-1}{3} \times \left( 
    \begin{bmatrix}
        -1 & 1 & & 1\\
        & -1 & 1 & 1\\
        1 & & -1 & 1\\
        1 & 1 & 1
    \end{bmatrix}
    \begin{array}{l}
        3 \\ 3 \\ 3 \\ 12
    \end{array} \right) + 
    (7-8z) \times \left( 
    \begin{bmatrix}
        1
    \end{bmatrix}
    \begin{array}{l}
        12
    \end{array} \right) \,,
\end{equation}
for $z\in[\frac{7}{8},\frac{25}{28}]$ an optimal solution is
\begin{eqnalign}
    &\frac{25-28z}{4} \times \left( 
    \begin{bmatrix}
        -1 & 1 & 1\\
        2 & 1 & 1
    \end{bmatrix}
    \begin{array}{l}
        3 \\ 12
    \end{array} \right) + 
    \frac{8z-7}{4} \times \left( 
    \begin{bmatrix}
        -1 & 1 & 1\\
        & -1 & 1\\
        & 1 & -1\\
        2 & 1 & 1
    \end{bmatrix}
    \begin{array}{l}
        3 \\ 12 \\ 12 \\ 12
    \end{array} \right) \\
    &\qquad\qquad + 
    \frac{4z-3}{2} \times \left( 
    \begin{bmatrix}
        -1 & 1 & & 1\\
        & -1 & 1 & 1\\
        1 & & -1 & 1\\
        1 & 1 & 1
    \end{bmatrix}
    \begin{array}{l}
        3 \\ 3 \\ 3 \\ 12
    \end{array} \right) + 
    \frac{8z-7}{6} \times \left( 
    \begin{bmatrix}
        1 \\ 1 \\ 1
    \end{bmatrix}
    \begin{array}{l}
        12 \\ 12 \\ 12
    \end{array} \right) \,,
\end{eqnalign}
and for $z\in[\frac{25}{28},1]$ an optimal solution is
\begin{eqnalign}
    &\frac{28z-25}{9} \times \left( 
    \begin{bmatrix}
        -1 & 1 & & 1\\
        & -1 & 1 & 1\\
        1 & & -1 & 1
    \end{bmatrix}
    \begin{array}{l}
        3 \\ 3 \\ 3
    \end{array} \right) + 
    \frac{1-z}{3} \times \left( 
    \begin{bmatrix}
        -1 & 1 & 1\\
        & -1 & 1\\
        & 1 & -1\\
        2 & 1 & 1
    \end{bmatrix}
    \begin{array}{l}
        3 \\ 12 \\ 12 \\ 12
    \end{array} \right) \\
    &\qquad\qquad + 
    \frac{8-8z}{3} \times \left( 
    \begin{bmatrix}
        -1 & 1 & & 1\\
        & -1 & 1 & 1\\
        1 & & -1 & 1\\
        1 & 1 & 1
    \end{bmatrix}
    \begin{array}{l}
        3 \\ 3 \\ 3 \\ 12
    \end{array} \right) + 
    \frac{2-2z}{9} \times \left( 
    \begin{bmatrix}
        1 \\ 1 \\ 1
    \end{bmatrix}
    \begin{array}{l}
        12 \\ 12 \\ 12
    \end{array} \right) \,.
\end{eqnalign}
Together these give
\begin{equation}
    f^\ast(\Omega_N) = \begin{cases}
        9(1-z) & z \in [0,\tfrac{1}{2}) \,, \\
        \frac{26-25z}{3} & z \in [\tfrac{1}{2},\tfrac{7}{8}) \,, \\
        \frac{19-17z}{3} & z \in [\tfrac{7}{8}, \tfrac{25}{28}) \,, \\
        \frac{7 + 5z}{9} & z \in [\tfrac{25}{28},1] \,.
    \end{cases}
\end{equation}
In particular, we see that when $N^{(3)}<N^{(12)}$ the bound $T\geq \sum_{m\geq 4}(m-3)N^{(m)} = 9N^{(12)}$ from before is best possible, but for $N^{(3)}>N^{(12)}$ the bound is strengthened.

%%%%%%%%%%%%%%%%%%%%%%%%%%%%%%%%%%%%%%%%%%%%%%%%%%%%%%%%%%%%%%%%
\section{Finiteness}
\label{sec:finiteness}

Having obtained a non-trivial lower bound on $T$ in section~\ref{sec:lattice-lower-bound}, we now turn to the question of finiteness.
In section~\ref{sec:string-probes} we use string probes to show that there are only finitely many models for any fixed $T$.
Section~\ref{sec:upper-bounds} then brings the various bounds together to derive upper bounds on $T$ and identify extremal models.

%%%%%%%%%%%%%%%%%%%%%%%%%%%%%%%%%%%%%%%%%%%%%%%%%%%%%%%%%%%%%%%%
\subsection{String probes, anomaly inflow and unitarity bounds}
\label{sec:string-probes}

Introducing a string probe into a background introduces potential anomalies which must be canceled by the string's world-volume degrees of freedom via the anomaly inflow mechanism.
The central charges and levels of the current algebra were computed in~\cite{Kim:2019vuc} for a BPS string of charge $Q\in\Gamma^\BPS\subseteq\Gamma$; after removing the center-of-mass contributions, they read
\begin{eqnalign}
    c_\text{L} &= 3Q^2 + 9Q\cdot b_0 + 2 \,, \quad &
    k_\ell &= \frac{1}{2}\big(Q^2 - Q\cdot b_0 + 2\big) \,,\\
    c_\text{R} &= 3Q^2 + 3Q\cdot b_0 \,, &
    k_i &= Q\cdot b_i \,.
\end{eqnalign}
When the string tension $\mu_Q \propto Q\cdot\calJ$ and all of $c_\text{L}$, $c_\text{R}$, $k_\ell$ and $k_i$ are non-negative, imposing unitarity amounts to the bound
\begin{eqnalign}
    \sum_i \frac{k_i\dim G_i}{k_i + h_i^\vee} \leq c_\text{L} \,,
\end{eqnalign}
where $h_i^\vee$ is the dual-Coxeter number of $G_i$ (see table~\ref{tab:group-constants}).

\medskip

\begin{figure}[t]
    \centering
    \includegraphics[width=\textwidth]{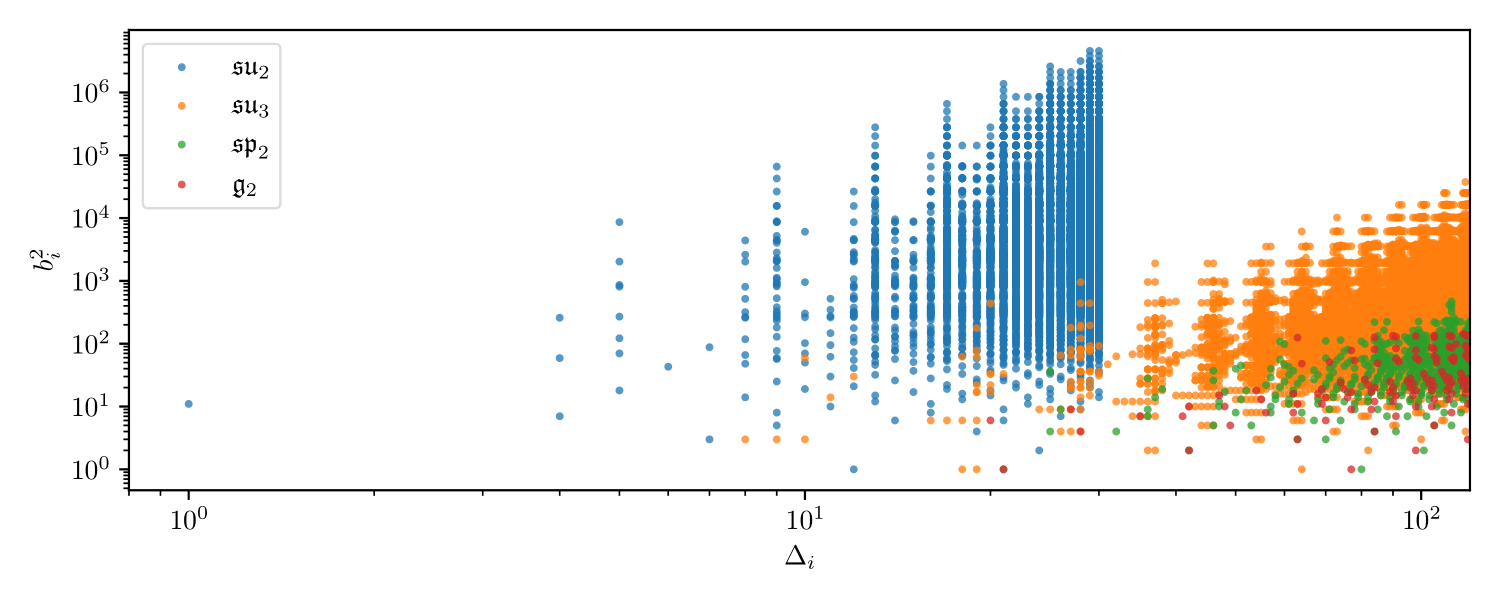}
    \caption{The growth of $b_i^2$ for low-rank non-abelian groups. For $\SU(2)$ the number of models grows very quickly and we have limited $\Delta_i\leq 30$.}
    \label{fig:bounded-bi-sqr}
\end{figure}

Along with the above consistency condition, we will need two basic facts.
The first is that there exists some constant $c$ for which $\Delta_a + cn_-^{\lat_a}$ is bounded below and all irreducible models for which it is non-positive have $n_-^{\lat_a}>0$.
Empirically, we find that $c=29$ works, but the precise value is inconsequential.
The second fact is that there exists a function $h:\Z\to\R$ such that
\begin{eqnalign}
\label{eq:bounded-diagonal}
    b_i^2 \leq h(\Delta_i)
\end{eqnalign}
for all irreducible models with simple gauge group.
This is obviously the case since achieving large $b_i^2$ requires having many hypermultiplets which then boosts the value of $\Delta_i$.
However, it turns out that $h$ must grow quickly with $\Delta_i$.
Using the numerical methods developed in~\cite{Hamada:2023zol,Hamada:2024oap}, we can generate a complete list of simple irreducible models for the lowest-rank groups with $\Delta_i\lesssim 150$ to observe the growth of $b_i^2$: see figure~\ref{fig:bounded-bi-sqr}.
The maximal values of $b_i^2$ all come from $\SU(2)$ with unusually large hypermultiplet representations, such as
\begin{eqnalign}
    G = \SU(2) \,, \qquad \calH_\ch = 2\times\rep{2} \oplus \rep{24} \,,
\end{eqnalign}
which has $b_i^2 = 131,\!941$ despite only having $\Delta_i = 25$.
Of course, with hindsight such extreme examples never appear as a component piece of a consistent model.

We now provide a short argument that for any fixed $T$ the number of consistent models is finite.
Work via contradiction; suppose that there are infinitely many distinct consistent models with the same value of $T$.
First, we claim that there are only finitely many possible anomaly lattices $\Lat$.
To see this, note that $n_-^{\lat_a}\leq T$ is bounded, so the number of irreducible components with $\Delta_a + cn_-^{\lat_a}<0$ is bounded.
This then means that $\Delta_a + cn_-^{\lat_a}$ is bounded for each irreducible model $\calM_a^\irr$ and thus thanks to~\eqref{eq:bounded-diagonal}, all of the diagonal elements of $\gram$ are bounded.
Furthermore, for fixed $b_i^2$ and $b_j^2$, the choice of $b_0\cdot b_i$ and $b_i\cdot b_j$ is finite.
Consequently, $\u$ and the off-diagonal elements of $\gram$ are both bounded by functions of $\diag\gram$. We conclude that there are only finitely many possible Gram matrices $\Gram$ and thus finitely many corresponding lattices.
Therefore we can restrict attention to an infinite subset of consistent models with $\Lat$ fixed.
Also, thanks to the assumption about the BPS charge lattice there are only finitely many possible $\Gamma^\BPS\subseteq\Gamma$, each with full rank, and $\Gamma^\BPS \cap \{x\in\R^{1,T}\;|\;x\cdot b_i>0 \,,x^2>0\}$ is non-empty because of the existence of the tensor branch scalars $\calJ$.
Fix $Q$ from this intersection, large enough so that $c_\text{L}$, $c_\text{R}$ and $k_\ell$ are all positive.
Since $c_\text{L}$ is fixed and $k_i\geq 1$ for each $i$,
\begin{eqnalign}
    \rank G \leq \sum_i\frac{\dim G_i}{1 + h_i^\vee} \leq \sum_i \frac{k_i\dim G_i}{k_i + h_i^\vee} \leq c_\text{L}
\end{eqnalign}
gives an upper bound on the size of the gauge group.
The precise bound depends implicitly on the upper bound for the index of BPS string charge lattice $\Gamma^\BPS$ in the string charge lattice $\Gamma$, $[\Gamma:\Gamma^\BPS]\coloneqq|\Gamma/\Gamma^\BPS|$, which controls the possible choices for $Q$ and thus $c_\text{L}$,\footnote{See~\cite{Montero:2022vva} for string theory examples with $[\Gamma:\Gamma^\BPS]>1$.} but regardless there are only finitely many possible gauge groups for any fixed $T$.
Further restricting to an infinite subset of models with constant gauge group, now $H_\ch\leq 273 + V - 29T = \text{const.}$ is bounded so there are only finitely many choices for charged matter and thus only finitely many distinct models, a contradiction.

%%%%%%%%%%%%%%%%%%%%%%%%%%%%%%%%%%%%%%%%%%%%%%%%%%%%%%%%%%%%%%%%
\subsection{Upper bounds and extremal models}
\label{sec:upper-bounds}

In this section we use the linear programming bound to obtain an upper bound on $T$.
Together with the argument of the previous section, this proves that there are only finitely many consistent models.

First of all, every irreducible model $\calM_a^\irr$ satisfies the bound
\begin{eqnalign}
\label{eq:bounded-NHCs}
    \Delta_a \geq -27N_a^{(1\perp)} + \underline{\Delta}N_a \,.
\end{eqnalign}
For irreducible models with only a single simple factor this is true by definition of $\underline{\Delta}$~\eqref{eq:Delta_underline}.
However, the bound also holds for larger irreducible models which can have multi-charged hypermultiplets reducing the value of $\Delta_a$ compared to the sum of its constituents.
Physically this is obvious: upon Higgsing (and dropping any resulting $\U(1)$s) the value of $\Delta_a$ only decreases.
Let us consider a subset of models for which $\Omega_N\in S$ where $S$ is a subset of $\sigma^9$, $S\subseteq \sigma^9$.
Then we have
\begin{eqnalign}
\label{eq:delta-29T-lower-bound}
    \Delta + 29T &\overset{\eqref{eq:bounded-NHCs}}{\geq} -27N^{(1\perp)} + \underline{\Delta}N + 29T \,,\\
    &\overset{\eqref{eq:T-lower-bound}}{\geq} \max_{0\leq\lambda\leq 29}\big[\lambda T + (2-\lambda)N^{(1\perp)} + \underline{\Delta}N + (29-\lambda)f^\ast(N)\big] \,,\\
    &\overset{\phantom{\eqref{eq:T-lower-bound}}}{=} \max_{0\leq\lambda\leq 29}\Big\{\lambda T + (2-\lambda)N^{(1\perp)} + |\underline{\Delta}N|\left[(29-\lambda)\tfrac{f^\ast(\Omega_N)}{|\underline{\Delta}\Omega_N|} - 1 \right]\Big\} \,,\\
    &\overset{\phantom{\eqref{eq:T-lower-bound}}}{\geq} \lambda^\ast(S)\, T \,,
\end{eqnalign}
where we have used $\underline{\Delta}N\leq 0$ and
\begin{equation}
    f^*(N) = \|N\|_1 f^*(\Omega_N)
    = \frac{\underline{\Delta}N}{\underline{\Delta}\Omega_N^\ast} f^*(\Omega_N)
    = \frac{|\underline{\Delta}N|}{|\underline{\Delta}\Omega_N^\ast|} f^*(\Omega_N)\,,
\end{equation}
in the third line and introduced
\begin{equation}
\label{eq:lambda-star-defn}
    \lambda^\ast(S) \coloneqq \min\!\left\{ 2 \,,\; 29 - \frac{|\underline{\Delta}\Omega_N^\ast|}{f^\ast(\Omega_N^\ast)} \right\} \,,
\end{equation}
in the last line. Here $\Omega_N^\ast\in S$ maximizes $|\underline{\Delta}\Omega_N|/f^\ast(\Omega_N)$ over $S$.
Provided $\lambda^\ast(S)>0$, combining this with the gravitational anomaly bound then gives
\begin{equation}
    T \overset{\eqref{eq:delta-29T-lower-bound}}{\leq} \frac{\Delta + 29T}{\lambda^\ast(S)}  \overset{\eqref{eq:Delta-29T-upper-bound}}{\leq} \frac{273}{\lambda^\ast(S)}
\end{equation}
for all models with $\Omega_N\in S$.
Determining $\Omega_N^\ast(S)$ and $\lambda^\ast(S)$ is straightforward since $\underline{\Delta}\Omega_N$ is linear and $f^\ast(\Omega_N)$ is piece-wise linear.

\begin{figure}[t]
    \centering
    \includegraphics[width=\textwidth]{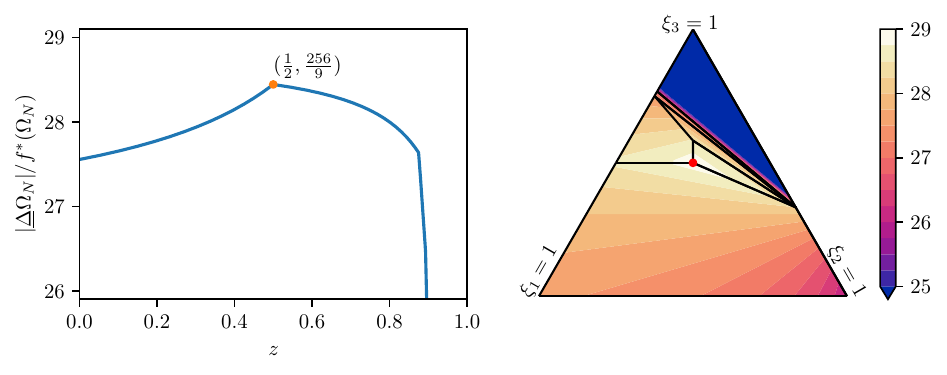}
    \caption{The ratio $|\underline{\Delta}\Omega_N|/f^\ast(\Omega_N)$ appearing in equation~\eqref{eq:lambda-star-defn} for (left) $\Omega_N$ of equation~\eqref{eq:omega-N-example} and (right) $\Omega_N=[\xi_1,0,0,0,\xi_2,0,\xi_3,0,0]^\top$ with $\xi_1+\xi_2+\xi_3 = 1$ (cf.\ figure~\ref{fig:triangle-bounds}a). The red dot indicates the point with $2\Omega_N^{(12)}=2\Omega_N^{(5)}=\Omega_N^{(32)}=\frac{1}{2}$ at which the ratio achieves its maximum value on \emph{all} of $\sigma^9$.}
    \label{fig:ratio-max}
\end{figure}

Before turning to the general case, let us return to the example where only $\Omega_N^{(12)}$ and $\Omega_N^{(3)}$ are nonzero, for which we worked out $f^\ast(\Omega_N)$ at the end of section~\ref{sec:lattice-lower-bound}.
The ratio to be maximized is shown in the left panel of figure~\ref{fig:ratio-max}; the maximum occurs at the highlighted corner where $\Omega_N^{(12)}=\Omega_N^{(3)}$.
This gives $\lambda^\ast(S) = \frac{5}{9}$ and the bound $T<492$.
We can also read off that the models which come closest to saturating this bound have equal numbers of $E_8$ and $\SU(3)$ gauge factors:
\begin{equation}
    G = E_8^\ell\times\SU(3)^\ell \,,\qquad \Delta = -256\ell \,,\qquad T\geq 9\ell \,,\qquad \Delta+29T\geq 5\ell \,.
\end{equation}
The largest value of $\ell$ compatible with the gravitational bound is $\ell=54$, for which $T=486$.

\medskip

With no restrictions on $\Omega_N$, i.e.\ $S=\sigma^9$, we find $\lambda^\ast = \frac{1}{11}$, achieved at the point with $\Omega_N^{\ast(12)}=\Omega_N^{\ast(5)}=\frac{1}{4}$ and $\Omega_N^{\ast(32)}=\frac{1}{2}$.
The right panel of figure~\ref{fig:ratio-max} shows this point within one of the low-dimensional triangular facets of $\partial\sigma^9$, showing how the three linear functions giving $f^\ast$ in the surrounding domains conspire to keep $|\underline{\Delta}\Omega_N|/f^\ast(\Omega_N)$ strictly less than $29$.
This value of $\lambda^\ast$ gives the advertised universal bound
\begin{equation}
\label{eq:T-leq-3003}
    T \leq 11\cdot 273 = 3003 \,.
\end{equation}
The simplest examples of models with the extremizing value of $\Omega_N$ have gauge groups of the form
\begin{equation}
    G = \big[E_8\times F_4\times(G_2\times\SU(3))^2\big]^\ell  \,.
\end{equation}
The maximum value of $\ell$ allowed by the gravitational anomaly is $\ell=273$, for which~\eqref{eq:T-leq-3003} is saturated.
This family bears a striking resemblance to the F-theory model with largest $T$, which has~\cite{Aspinwall:1997ye,Candelas:1997eh,Morrison:2012bb}
\begin{equation}
    G = E_8^{17}\times F_4^{16}\times (G_2\times\SU(3))^{32} \,, \qquad T = 193 \,.
\end{equation}
While our bounds are not strong enough to restrict $T$ or $\rank G$ this small, it is interesting that just with linear programming we are able to correctly identify the shape of consistent models with maximal $T$.

Some examples of the bounds and extremal models obtained when $\Omega_N$ is restricted to a subset $S\subseteq \sigma^9$ are summarized in the following table:
\begin{center}
    \begin{tabular}{*{4}{>$c<$}}
        \toprule
        S & \lambda^\ast & \lfloor 273/\lambda^\ast\rfloor & \text{Extremal model} \\
        \midrule
        \sigma^9
            & \frac{1}{11}
            & 3003
            & [E_8\times F_4\times(G_2\times\SU(3))^2]^{273}\\
        \{\Omega_N^{(5)} = 0\;\text{or}\;\Omega_N^{(32)}=0\}
            & \frac{2}{13} 
            & 1774
            & [E_8\times E_7'\times (\SO(7)\times\SU(2)^2)^2]^{136} \\
        \{\Omega_N^{(12)} = 0\}
            & \frac{1}{5}
            & 1365
            & [E_7\times(\SO(7)\times\SU(2)^2)]^{273} \\
        \{\Omega_N^{(32)} = \Omega_N^{(322)} = 0\}
            & \frac{6}{29}
            & 1319
            & [E_8\times E_7'^3\times F_4^4\times\SU(3)^8]^{35} \\
        \{\Omega_N^{(12)} = \Omega_N^{(8)} = 0\}
            & \frac{1}{3}
            & 819
            & [E_6\times\SU(3)]^{273} \\
        \{\Omega_N^{(s)}=0 ,\, s\notin\{12,3\}\}
            & \frac{5}{9}
            & 491
            & [E_8\times\SU(3)]^{54} \\
        \{\Omega_N^{(3)} = \ldots = \Omega_N^{(3222)} = 0\}
            & 1
            & 273
            & \SO(8)^{273} \\
        \bottomrule
    \end{tabular}
\end{center}
For these extremal models there is essentially zero slack in the inequalities of~\eqref{eq:delta-29T-lower-bound} and their structure is very rigid.
More general models which also include irreducible components with one or more $b_i^2>0$ will have $T$ and $N$ even more constrained.

%%%%%%%%%%%%%%%%%%%%%%%%%%%%%%%%%%%%%%%%%%%%%%%%%%%%%%%%%%%%%%%%
\section{Discussion}
\label{sec:discussion}

In this work we have explored the extent to which anomaly cancellation alone constrains the space of consistent 6d, $\calN=(1,0)$ supergravities with non-abelian gauge group.
There are some features of 6d, $\calN=(1,0)$ supergravities that played an extremely minor role in our analysis.
Most notably, the $\SO(1,T)/\SO(T)$ moduli space and BPS strings only appear briefly in section~\ref{sec:string-probes} to show that $\rank G$ is bounded for fixed $T$.
The main actor in showing that the number of consistent models is finite is the high-dimensional linear program of section~\ref{sec:lattice-lower-bound} which allowed us to control the number of tensor multiplets in terms of the number of constituents with $b_i^2<0$.
In its construction the linear program takes as input that the anomaly vectors $b_I\in\Lat$ with $b_I^2<0$ must take a very particular form in the unimodular lattice $\Gamma$.
We argued for $b_i^0=0$ when $b_i^2<0$ based on the absence of Dai-Freed anomalies when placing the theory on a curved background.
This relied on some numerics to check that only for the simplest choices of the vectors $b_i$ does the anomaly vanish, and it would be worthwhile to understand if there is an underlying reason for this phenomenon.

The linear programming lower bound on the dimension $T$ can be significantly better than the simple bound coming just asking that $\Lat$ embed in $\Gamma$.
By playing this lower bound off the upper bound $\Delta+29T\leq 273$ coming from the gravitational anomaly, we obtained a universal upper bound of $T\leq 3003$.
Imposing additional constraints by hand, such as forbidding $E_8$, improves this bound significantly.
For any set of constraints we are able to identify the extremizing solution of the linear program and therefore identify the types of models which come closest to saturating our bounds.
While the resulting extreme models fall far well beyond the bounds of~\cite{Kim:2024hxe}, interestingly our methods are still able to correctly identify the ``directions'' in model space with largest $T$.

A major outstanding question is whether the landscape of consistent \emph{abelian} models is also finite.
F-theory examples with surprisingly large charges are known~\cite{Raghuram:2018hjn} and fairly simple infinite families of seemingly consistent models with three and four charges were found in~\cite{Taylor:2018khc}.
The Dai-Freed global anomalies for these families were considered in~\cite{Basile:2023zng} and it was shown that one can only rule out certain residue classes for the charges.
It remains to be seen whether there is some other consistency condition or principle which can restrict these models to a finite number as well.

\acknowledgments

Thanks to Hee-Cheol Kim and Markus Dierigl for useful conversations and thanks to Richard Borcherds for comments on appendix~\ref{app:orbits}.
This work is supported by MEXT Leading Initiative for Excellent Young Researchers Grant No.~JPMXS0320210099 [YH] and JSPS KAKENHI Grant Nos.~JP24H00976 [YH], JP24K07035 [YH], JP24KF0167 [YH and GL].

%%%%%%%%%%%%%%%%%%%%%%%%%%%%%%%%%%%%%%%%%%%%%%%%%%%%%%%%%%%%%%%%
\appendix
\section{Orbits in unimodular Lorentzian lattices}
\label{app:orbits}

In this appendix, we prove that $\I_{1,n}$ contains a single orbit of primitive characteristic vectors of norm $9-n$ under $\Aut(\I_{1,n})$ and that $\II_{1,8n+1}$ contains a single orbit of primitive vectors of norm $2(1-n)$ under $\Aut(\II_{1,8n+1})$.
In the main text this is used along with the assumption that $b_0\in\I_{1,T}$ or $b_0/2\in\II_{1,T}$ is primitive to conclude that there is always an integral change of basis which brings $b_0$ to the form $(3;1^T)$.

\medskip

The low-dimensional cases are easily dealt with: one can quickly enumerate all vectors of the required norm for the lattices $\I_{1,0}$, $\I_{1,1}$ and $\II_{1,1}$, and understanding their orbits is trivial since $\Aut(\I_{1,0})\cong \Z/2\Z$ and $\Aut(\I_{1,1})\cong\Aut(\II_{1,1})\cong(\Z/2\Z)^2$ are finite groups.
For the higher-dimensional lattices we have the following theorems, generalizing somewhat to allow for vectors of different negative norms:
\begin{thm}
\label{thm:orbits-small-n}
    For every $2\leq n\leq 9$, the lattice $\I_{1,n}$ has exactly one orbit of primitive characteristic vectors of norm $9-n\geq 0$.
\end{thm}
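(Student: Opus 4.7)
The plan is to invoke Eichler's theorem for indefinite unimodular lattices, which makes the uniqueness of the orbit essentially automatic once the existence of a hyperbolic plane summand is verified. Concretely, I would first show that for $n\geq 2$ the lattice $\I_{1,n}$ admits an orthogonal decomposition $\I_{1,n}\cong U\oplus \langle -1\rangle^{n-1}$, where $U$ is the hyperbolic plane of signature $(1,1)$. Take $u_1=e_0+e_1$ and $u_2=e_0+e_2$, which satisfy $u_1^2=u_2^2=0$ and $u_1\cdot u_2=1$, and thus span an embedded copy of $U$. The orthogonal complement is generated by $w_0=e_0+e_1+e_2$ (of norm $-1$) together with $e_3,\ldots, e_n$, and a one-line determinant check shows that $\{u_1,u_2,w_0,e_3,\ldots,e_n\}$ is a $\Z$-basis of $\I_{1,n}$, confirming that the decomposition is an orthogonal direct summand and not merely a sublattice.

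Next, I would apply Eichler's theorem (in the Kneser--Wall formulation): for an indefinite unimodular lattice of rank $\geq 3$ containing a hyperbolic plane as an orthogonal summand, the orthogonal group acts transitively on the set of primitive vectors of any fixed norm. Since $\I_{1,n}$ has rank $n+1\geq 3$, this yields that the primitive vectors of norm $9-n$ in $\I_{1,n}$ form a single $\Aut(\I_{1,n})$-orbit. Note that I do not need the characteristic condition for this step, which is really the core of the argument.

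To conclude, the property of being characteristic is preserved by any lattice automorphism, so the set of primitive characteristic vectors of norm $9-n$ is either empty or a single orbit, as it lies inside the single orbit of all primitive vectors of that norm. Since $(3;1^n)$ is evidently primitive, has norm $9-n$, and has all components odd (so it is characteristic in $\I_{1,n}$), this set is nonempty and thus forms exactly one orbit, as claimed.

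The main obstacle is only a matter of exposition: Eichler's theorem in the form needed is classical, and the small explicit basis exhibiting $\I_{1,n}\cong U\oplus \langle -1\rangle^{n-1}$ is immediate. One might prefer a more hands-on argument by induction on $|a_0|$, using reflections in roots of norm $\pm 1$ or $\pm 2$ such as $(3;1^7,0^{n-7})$ when $n\geq 7$ (whose efficacy follows from a Cauchy--Schwarz bound forcing $3a_0>\sum_{i=1}^7 a_i$), and ad hoc choices for the finitely many remaining cases $n=2,\ldots,6$; but the Eichler route covers all $n\geq 2$ uniformly.
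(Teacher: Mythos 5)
Your explicit decomposition $\I_{1,n}\cong U\oplus\langle-1\rangle^{n-1}$ is correct, but the version of Eichler's theorem you invoke is false, and the core of the argument collapses with it. There is no statement guaranteeing that $\Aut(L)$ acts transitively on primitive vectors of a fixed norm in an odd indefinite unimodular lattice with one hyperbolic summand: being characteristic is itself an $\Aut$-invariant, so primitive characteristic and primitive non-characteristic vectors of the same norm can \emph{never} lie in one orbit. Concretely, in $\I_{1,2}$ the vectors $(3;1,1)$ and $(4;3,0)$ are both primitive of norm $7$, but only the first is characteristic (all components odd), so there are at least two orbits of primitive norm-$7$ vectors --- exactly the case $n=2$ of the theorem you are trying to prove. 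Your remark that ``I do not need the characteristic condition for this step'' is the red flag: any assertion conjugating characteristic to non-characteristic vectors is immediately contradicted. The genuine Eichler criterion applies to \emph{even} lattices split by $U\oplus U$ and classifies orbits (under the stable orthogonal group) by the norm \emph{together with} the image in the discriminant group; one hyperbolic plane, norm alone, and an odd lattice are each insufficient. Worse, the paper's appendix exhibits $(1;5,5)$ and $(1;1,7)$ in $\I_{1,2}$: primitive, characteristic, both of norm $-49$, and inequivalent --- so even fixing all your invariants, transitivity fails in rank $3$ for negative norms. Any correct proof must therefore make real use of $9-n\geq 0$, which your route never touches.

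The paper's actual proof is the ``hands-on'' descent you relegate to a closing remark, and the details you wave at do not suffice. After normalizing $u^0>u^1\geq\cdots\geq u^n\geq 1$, the paper reflects in the root $r=(1;1^3,0^{n-3})$ (or $(1;1^2)$ for $n=2$) and proves the \emph{two-sided} bound $0<-u\cdot r<-(r\cdot r)\,u^0$, which is what guarantees $|u^0|$ strictly decreases; the hard direction, $u^1+u^2+u^3>u^0$, is exactly where the norm condition $u^2=9-n\geq 0$ and the characteristic/primitivity hypotheses (forcing $u^1>1$ for $n<9$, $u^1>u^9$ for $n=9$) enter. Your sketched alternative with $r=(3;1^7,0^{n-7})$ only establishes the sign $u\cdot r>0$ via Cauchy--Schwarz; without a matching upper bound $u\cdot r<\tfrac{2}{3}u^0$ the reflection $u^0\mapsto u^0-3\,(u\cdot r)$ can overshoot past $-u^0$, so no descent follows from the stated inequality. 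In short: the lattice-theoretic machinery (discriminant forms, uniqueness in a spinor genus) is the right tool for the paper's theorems~3 and~4 in higher rank, where the characteristic condition is encoded in the discriminant form of $u^\perp$, but for the present statement it cannot replace an argument that actually uses both the characteristic condition and $9-n\geq 0$.
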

\begin{thm}
\label{thm:orbits-n=9}
    The lattice $\II_{1,9}$ has exactly one orbit of primitive vectors of norm $0$.
\end{thm}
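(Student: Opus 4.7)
The plan is to reduce everything to showing that any primitive isotropic vector in $\II_{1,9}$ can be sent, by an automorphism, to a fixed ``standard'' isotropic vector in the presentation $\II_{1,9}\cong U\oplus(-E_8)$. Let $e,f$ be the standard generators of $U$ with $e^2=f^2=0$ and $e\cdot f=1$; I would show every primitive $v$ with $v^2=0$ is equivalent to $e$.

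The argument has four steps. First, by primitivity of $v$ and unimodularity of $\II_{1,9}$, the linear functional $u\mapsto v\cdot u$ is a surjection onto $\Z$, so some $w'\in\II_{1,9}$ pairs with $v$ to give $1$. Since $\II_{1,9}$ is even, $(w')^2\in 2\Z$, and the shift $w:=w'-\tfrac{(w')^2}{2}v$ satisfies $w^2=0$ and $v\cdot w=1$. Second, the sublattice $U_v:=\Z v\oplus\Z w$ is then isomorphic to $U$ and is itself unimodular, so $\II_{1,9}$ splits orthogonally as $\II_{1,9}=U_v\oplus U_v^{\perp}$. Third, $U_v^{\perp}$ is an even unimodular lattice of signature $(0,8)$; invoking the classical fact that $-E_8$ is the unique such lattice up to isomorphism, we obtain $U_v^{\perp}\cong -E_8$. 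Fourth, combining an isomorphism $U_v\xrightarrow{\sim}U$ that sends $v\mapsto e$ (which exists because $\Aut(U)$ is generated by the swap $e\leftrightarrow f$ and the sign changes, and thus acts transitively on primitive isotropic vectors) with any isomorphism $U_v^{\perp}\xrightarrow{\sim}-E_8$ produces an automorphism of $\II_{1,9}$ carrying $v$ to $e$.

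The main obstacle is Step 3, the uniqueness of $-E_8$ as an even unimodular definite lattice of rank $8$. I would invoke this as a black box citing the classification of positive-definite even unimodular lattices in low rank (the genus of $E_8$ is a single class). All other steps are elementary manipulations inside the lattice, and the passage from ``$v^{\perp}/\langle v\rangle$ is determined'' to ``orbits of $v$ are determined'' is precisely what the hyperbolic-plane splitting in Steps 1--2 accomplishes. No further global input about $\Aut(\II_{1,9})$ is needed beyond the ability to glue isomorphisms across the orthogonal decomposition.
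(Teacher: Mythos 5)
Your proof is correct and follows essentially the same route as the paper: there the theorem is deduced from lemma~\ref{lem:orbits-zero-norm} (the correspondence between orbits of primitive null vectors and positive-definite even unimodular lattices of rank $8n$, cited to Borcherds) together with the uniqueness of $E_8$, and your hyperbolic-plane splitting $\II_{1,9}=U_v\oplus U_v^\perp$ is exactly the standard argument underlying the transitivity direction of that lemma. The only difference is that you make the cited step self-contained rather than invoking it as a black box.
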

\begin{thm}
\label{thm:orbits-even}
    For every $n\geq 1$ and $m\geq1$, the lattice $\II_{1,8n+1}$ has exactly one orbit of primitive vectors of norm $-2m<0$.
\end{thm}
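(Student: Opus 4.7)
The plan is to show that every primitive $v \in L := \II_{1,8n+1}$ with $v^2 = -2m$ lies in the $O(L)$-orbit of the explicit representative $v_0 := e - m f$, where $\{e, f\}$ is the standard hyperbolic basis of the $U$-summand in the splitting $L = U \oplus (-E_8)^{\oplus n}$. This $v_0$ is manifestly primitive with $v_0^2 = -2m$, so existence of the orbit is trivial and the content of the theorem is uniqueness.

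The argument proceeds in two steps. First I would construct a hyperbolic pair containing $v$ by finding an isotropic $u \in L$ with $u \cdot v = 1$. Granted such $u$, the vectors $e' := v + m u$ and $u$ satisfy $(e')^2 = u^2 = 0$ and $e' \cdot u = 1$, so they span a primitive sublattice $U' \cong U$ in $L$ and $v = e' - m u$. To produce $u$, primitivity of $v$ yields some $w_0 \in L$ with $v \cdot w_0 = 1$; any candidate $u$ then lies in the affine set $w_0 + v^\perp$, and writing $u = w_0 + x$ with $x \in v^\perp$ the condition $u^2 = 0$ reduces to a representability problem for the indefinite $\Z$-quadratic form $x \mapsto (w_0 + x)^2$ on the lattice $v^\perp$, which has signature $(1, 8n)$ and rank $8n + 1 \geq 9$. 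A short completion-of-the-square calculation shows this is equivalent to asking whether the rational translate $w_0' + v^\perp \subset v^\perp \otimes \Q$ (with $w_0'$ the orthogonal projection of $w_0$ onto $v^\perp \otimes \Q$) contains an element of norm $1/(2m)$. Meyer's theorem together with Hasse--Minkowski for indefinite forms makes this possible modulo local obstructions, which are killed by unimodularity of $L$ and primitivity of $v$.

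The second step is to transport the hyperbolic pair $(e', u)$ to the standard pair $(e, f)$ by an automorphism of $L$. Since $U'$ is unimodular and primitively embedded, it splits off as a direct summand of $L$, and the orthogonal complement $K' := (U')^\perp$ is an even unimodular negative-definite lattice of rank $8n$. Although $K'$ need not be isomorphic to $(-E_8)^{\oplus n}$ for $n \geq 2$, the uniqueness of $L$ together with transitivity of $O(L)$ on ordered primitive hyperbolic pairs---a classical consequence of the theory of Eichler transvections in indefinite unimodular lattices of rank $\geq 3$---produces $\phi \in O(L)$ with $\phi(e') = e$ and $\phi(u) = f$. Then $\phi(v) = e - m f = v_0$, as required.

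The main obstacle will be the first step, the existence of the isotropic partner $u$. While existence follows in principle from standard local-global machinery for indefinite quadratic forms, verifying that rational solutions to $u^2 = 0$ on the coset $w_0 + v^\perp$ lift to $L$ requires a careful use of the discriminant form on the rank-one sublattice $\Z v$ (a cyclic group of order $2m$) and its pairing with that of $v^\perp$. Everything else reduces to well-known structural features of the unique even unimodular lattice of signature $(1, 8n+1)$ and the hyperbolic plane summand it contains.
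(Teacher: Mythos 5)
Your second step contains the fatal gap: $\Aut(\II_{1,8n+1})$ is \emph{not} transitive on ordered primitive hyperbolic pairs once $n\geq 2$. If $\phi$ sent $(e',u)$ to $(e,f)$, it would carry $K'=\{e',u\}^\perp$ isometrically onto $(-E_8)^{\oplus n}$; but hyperbolic pairs with $K'\not\cong(-E_8)^{\oplus n}$ exist, e.g.\ $\II_{1,17}\cong U\oplus(-E_8)^{\oplus 2}\cong U\oplus(-D_{16}^+)$ exhibits a pair with complement $(-D_{16}^+)$. More systematically, by lemma~\ref{lem:orbits-zero-norm} the orbits of primitive isotropic vectors in $\II_{1,8n+1}$ are in bijection with the rank-$8n$ positive-definite even unimodular lattices --- two for $n=2$, twenty-four (the Niemeier lattices) for $n=3$ --- and since every primitive isotropic vector completes to a hyperbolic pair (being of divisor $1$ in a unimodular lattice), pair transitivity would force a single orbit of isotropic vectors, a contradiction. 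The appeal to Eichler transvections is inapplicable here: the Eichler criterion requires two orthogonal hyperbolic planes $U\oplus U\hookrightarrow L$, which is impossible in signature $(1,8n+1)$. Note that you yourself flag that $K'$ need not be $(-E_8)^{\oplus n}$, and this observation by itself already refutes the transitivity invoked in the same sentence; your argument is correct only for $n=1$, where uniqueness of $E_8$ saves it.

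The structural point, and the reason the paper's route succeeds where yours fails, is that uniqueness-in-genus (theorems~\ref{thm:Eichler} and~\ref{thm:Conway}) is a feature of \emph{indefinite} lattices of rank $\geq 3$. The complement $v^\perp$ of a norm $-2m$ vector has signature $(1,8n)$ --- indefinite, with cyclic discriminant group $\Z/2m\Z$ --- so it is unique in its genus, namely $\langle 2m\rangle\oplus(-E_8)^{\oplus n}$; the paper then only has to show $\Aut(v^\perp)$ acts transitively on the admissible generators of the discriminant group. Your reduction to a hyperbolic pair trades this indefinite complement for the \emph{definite} lattice $K'$, precisely the setting where genus does not determine class and uniqueness genuinely fails. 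To salvage your approach you would need to show that the isotropic $u$ with $u\cdot v=1$ can be \emph{chosen} so that $\{v+mu,\,u\}^\perp\cong(-E_8)^{\oplus n}$, which your proposal does not address and which is essentially as hard as the original statement. (Your first step is, by contrast, repairable along the lines you sketch: the relevant coset of $v^\perp$ carries the discriminant-form value $\tfrac{1}{2m}+2\Z$, and indefiniteness of $v^\perp$ with rank $8n+1\geq 9$ makes the local-global argument go through; the irreparable step is the second.)
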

\begin{thm}
\label{thm:orbits-odd}
    For every $n\geq 3$ and $1\leq m < n$ with $m\equiv n-1\mod{8}$, the lattice $\I_{1,n}$ has exactly one orbit of primitive characteristic vectors of norm $-m<0$.
\end{thm}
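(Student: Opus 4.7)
The approach mirrors the arguments used for Theorems~\ref{thm:orbits-small-n}-\ref{thm:orbits-even}, with additional bookkeeping to track the characteristic condition. The plan has two halves: exhibit a canonical representative of the orbit, then show that every primitive characteristic vector of the specified norm can be moved to it by an element of $\Aut(\I_{1,n})$.

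For existence, set $k \coloneqq (n-1-m)/8$, which is a non-negative integer by the hypotheses $m \equiv n-1 \pmod{8}$ and $1 \leq m < n$. Consider the candidate $v^\ast \coloneqq (2k+1;\, 2k-1,\, 1^{n-1}) \in \I_{1,n}$. All of its components are odd, so $v^\ast$ is characteristic, and $\gcd(2k+1, 2k-1) = \gcd(2k+1, 2) = 1$ makes it primitive. A direct computation gives $(v^\ast)^2 = (2k+1)^2 - (2k-1)^2 - (n-1) = 8k - (n-1) = -m$, as required. (For $k = 0$ the vector $v^\ast = (1;-1,1^{n-1})$ lies in the same orbit as $(1;1^n)$ via a sign flip on $e_1$.)

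For uniqueness, let $v$ be any primitive characteristic vector of norm $-m$. Using permutations of the spatial basis vectors and the sign-flip reflections $r_{e_\alpha}$, I would first bring $v$ into the normal form $v^0 \geq 1$, $v^1 \geq v^2 \geq \cdots \geq v^n \geq 1$ (all odd). If $v^0 = 1$, then $\sum_{\alpha\geq 1}(v^\alpha)^2 = 1 + m$ combined with $(v^\alpha)^2 \geq 1$ for $n$ positive odd integers forces $m \geq n-1$, hence $m = n-1$ and $v = (1; 1^n)$, settling this case. For $v^0 \geq 3$ the plan is to iteratively apply reflections $r_w$ in vectors $w \in \I_{1,n}$ of norm in $\{\pm 1, \pm 2\}$ --- the only norms for which $r_w \in \Aut(\I_{1,n})$, since lattice-preservation forces $2(v\cdot w)/w^2 \in \Z$ for all $v \in \I_{1,n}$ --- so as to descend along a lexicographic complexity measure such as $(v^0, v^1, -v^n)$. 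The relevant $w$ of norm $-2$ take the forms $\pm e_i \pm e_j$ (giving signed permutations) and $\pm a\,e_0 \pm e_{i_1} \pm \cdots \pm e_{i_{a^2+2}}$ for integers $a \geq 1$.

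The main obstacle is establishing that the descent step is always possible when $v \neq v^\ast$, uniformly across the admissible range of $(n,m)$: when $v^0$ is close to its minimum value $\lceil\sqrt{n-m}\rceil$ it is delicate to locate a reflection which strictly reduces the chosen complexity. A cleaner, shorter route is to invoke the classical result, due to Eichler and Wall, that orbits of primitive vectors under the automorphism group of an indefinite unimodular lattice of rank at least three are determined by their norm together with their image in $L/2L$. Since the characteristic vectors of $\I_{1,n}$ are precisely those in the coset $(1,1,\ldots,1) + 2\I_{1,n}$ and $n+1 \geq 4$, this classification immediately yields the stated uniqueness.
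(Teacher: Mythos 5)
Your existence half is correct and even a nice complement to the paper: $v^\ast=(2k+1;\,2k-1,\,1^{n-1})$ with $k=(n-1-m)/8$ is indeed primitive (it has unit components), characteristic (all components odd), and of norm $8k-(n-1)=-m$, and your treatment of the $v^0=1$ case is sound. The uniqueness half, however, is where the theorem's entire content lies, and neither of your two routes establishes it. The reflection descent is explicitly left unfinished at its crucial step (``establishing that the descent step is always possible''), and it cannot be completed in general: the reflection subgroup has finite index in $\Aut(\I_{1,n})$ only for $n\leq 19$, so a descent through root reflections has no reason to connect all primitive characteristic vectors of a given norm for large $n$ --- this is precisely why the elementary method of theorem~\ref{thm:orbits-small-n} does not extend.

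The fallback citation is false as stated, and the paper itself contains the counterexample. A result asserting that orbits of primitive vectors in an indefinite unimodular lattice of rank \emph{at least three} are determined by norm and image in $L/2L$ would apply to $\I_{1,2}$; but $(1;5,5)$ and $(1;1,7)$ are primitive characteristic vectors of norm $-49$ which are congruent modulo $2\I_{1,2}$ (characteristic vectors always form a single coset of $2L$, since the difference of two of them pairs evenly with everything and $L$ is unimodular), yet they lie in \emph{different} orbits: their orthogonal complements
\begin{equation*}
    \begin{bmatrix} 0 & 7 \\ 7 & -2 \end{bmatrix} \qquad\text{and}\qquad
    \begin{bmatrix} 0 & 7 \\ 7 & -4 \end{bmatrix}
\end{equation*}
are inequivalent over $\Z$ (the first represents $-2$, the second does not). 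The correct threshold is structural: uniqueness is controlled by the complement $v^\perp$, an indefinite lattice of signature $(1,n-1)$, and Eichler's one-class-per-spinor-genus theorem requires rank at least three \emph{there}, i.e.\ $n\geq 3$ --- ambient rank at least four, not three. Moreover, even once Eichler--Nikulin pins down the isomorphism class of $v^\perp$ (it is $U\oplus(-A_{m-1})\oplus(-E_8)^r$), this alone does not finish the argument: one must additionally show that $\Aut(v^\perp)$ acts transitively on the generators $[b]$ of the discriminant group $\Z/m\Z$ with the prescribed value $q([b])=\frac{m+1}{m}+2\Z$, and this transitivity --- carried out in the paper by an explicit iterative construction of a new $(-A_{m-1})$ basis containing a prescribed $v_\ast'$ --- is the bulk of the work and is not supplied by any off-the-shelf statement. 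Since the result is, as the paper notes, known to experts but absent from the literature, your appeal to a ``classical result due to Eichler and Wall'' amounts to assuming the theorem; to repair the proof you would need to carry out the complement-plus-discriminant-transitivity analysis yourself, with the hypothesis $n\geq 3$ entering exactly where Eichler's theorem is applied to $v^\perp$.
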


\noindent
First, some comments.
The modulo-$8$ condition on $m$ in theorem~\ref{thm:orbits-odd} ensures that characteristic vectors of norm $-m$ exist in the first place.
Theorem~\ref{thm:orbits-even} is stated clearly and proved in~\cite[theorem~3.9.1]{Borcherds_1999a} (in signature $(8n+1,1)$).
Theorem~\ref{thm:orbits-odd} is known to experts but, to our knowledge, does not appear in the literature.
More focus has been placed on the counterparts of theorems~\ref{thm:orbits-even} and~\ref{thm:orbits-odd} for \emph{non-negative} norm vectors, as they display a richer structure of orbits and play a key role in classifications of classes of positive-definite lattices.
For example, orbits of primitive vectors of norm $0$, $2$ and $4$ in $\II_{1,25}$ are in one-to-one correspondence with the $24$ Niemeier lattices (including the enigmatic Leech lattice), the $121$ $25$-dimensional positive-definite even lattices of determinant $2$ and the $665$ $25$-dimensional positive-definite unimodular lattices, respectively~\cite{Borcherds_1999b}.
We give a very \emph{mechanical} proof of theorem~\ref{thm:orbits-odd} which requires the upper bound $m < n$ but the claim holds for larger $m$ as well.
However, $n\geq 3$ really is necessary; for example, there is no automorphism of $\I_{1,2}$ which relates the two primitive characteristic vectors $(1;5,5)$ and $(1;1,7)$ of norm $-49$.

\medskip

Next, we dispense with an elementary proof of theorem~\ref{thm:orbits-small-n}:

\begin{proof}[Proof of theorem~\ref{thm:orbits-small-n}]
    Let $u\in\I_{1,n}$, $2\leq n\leq 9$, be a primitive characteristic vector of norm $9-n\geq 0$ with $|u^0|>3$.
    This means that all components $u^\alpha$ are odd and $\gcd(u^\alpha)=1$.
    It suffices to show that $u$ is related through $\Aut(\I_{1,n})$ to another vector $\tilde{u}$ with $|\tilde{u}^0|<|u^0|$, since by iteratively reducing the zeroth component we must eventually arrive at a primitive characteristic vector of norm $9-n$ which has $|u^0|\leq 3$, but up to signs the only such vector is $(3;1^n)$.

    Without loss of generality we may assume that $u^0>u^1\geq \ldots\geq u^n\geq 1$ by reflecting into the positive orthant and permuting the spatial components.
    In addition, it must be that $u^1>1$ for $n<9$ and $u^1>u^9$ for $n=9$ since otherwise $u\propto(3;1^n)$, violating either $u^0>3$ or primitivity.
    Next, fix $r=(1;1^2)$ for $n=2$ and $r=(1;1^3,0^{n-3})$ otherwise;
    $r$ is a root of $\I_{1,n}$ so reflection through the plane orthogonal to $r$ is an automorphism.
    It remains only to check that
    \begin{eqnalign}
        \left|u^0 - \frac{2(u\cdot r)}{r\cdot r}\,r^0\right| < u^0 \,, \quad\text{or equivalently}\quad 0 < -u\cdot r < -(r\cdot r)u^0 \,,
    \end{eqnalign}
    having used $r^0=1$ and $r\cdot r<0$.
    The upper bound on $-u\cdot r$ follows immediately from $u^0>u^1\geq\ldots\geq u^n$.
    For $n=2$ it is easily checked that $u^1u^2\geq 9$, giving
    \begin{eqnalign}
        (u^1+u^2)^2 = (u^1)^2 + (u^2)^2 + 2u^1u^2 = (u^0)^2 - 7 + 2u^1u^2 > (u^0)^2
    \end{eqnalign}
    and therefore $-u\cdot r = -u^0 + u^1 + u^2 > 0$ as desired.
    Similarly, for $3\leq n\leq 9$ we have
    \begin{eqnalign}
        (u^1+u^2+u^3)^2 &= (u^1)^2 + (u^2)^2 + (u^3)^2 + 2(u^1u^2 + u^2u^3 + u^3u^1) \\
        &\geq (u^1)^2 + (u^2)^2 + (u^3)^2 + 6(u^3)^2 \\
        &= (9-n)(u^3)^2 + (u^1)^2 + (u^2)^2 + (u^3)^2 + (n-3)(u^3)^2 \\
        &\geq (9-n)(u^3)^2 + \sum_{i=1}^n(u^i)^2\\
        &\geq (9-n) + \sum_{i=1}^n(u^i)^2\\
        &= (u^0)^2 \,.
    \end{eqnalign}
    If $n<9$ then $u^1>1$ and either the first or last inequality is strict, and if $n=9$ then $u^1>u^9$ and one of the first two inequalities is strict.
    In either case, we again conclude that $-u\cdot r=-u^0+u^1+u^2+u^3 > 0$.
\end{proof}

Perhaps unsurprisingly, there is a qualitative change at ten dimensions where the norm of the vectors of interest changes sign.
It is also clear that following the above strategy of using reflections through roots becomes impractical, if not impossible in higher dimensions because the reflection subgroup has finite index in $\Aut(\I_{1,n})$ only for $n\leq 19$~\cite{Vinberg:1967,Vinberg:1972a,Vinberg:1972b,Vinberg:1975,Vinberg:1978}.
Therefore in order to build up to the proofs of theorems~\ref{thm:orbits-n=9}, \ref{thm:orbits-even} and~\ref{thm:orbits-odd} we will need to make use of some additional machinery.
Here we provide only a brief overview of the relevant definitions and quote some results, referring the interested reader to~\cite{Cassels_1978,Conway_1999} for more thorough treatments.

Let $L^\vee = \{x\in \operatorname{span}L\;|\; \forall y\in L\,,\; x\cdot y\in\Z\}$ denote the dual of the integer lattice $L$.
The discriminant group $G_L\coloneqq L^\vee/L$ is a finite abelian group and naturally inherits a $\Q/\Z$-valued finite bilinear form $b_L$ from $L$ and additionally a $\Q/2\Z$-valued finite quadratic form $q_L$ when $L$ is even.
For $[g],[h]\in G_L$ these finite forms are given simply by
\begin{eqnalign}
    b_L([g],[h]) &\coloneqq g\cdot h + \Z \,,\\
    q_L([g]) &\coloneqq g\cdot g + 2\Z \,,\qquad
    \text{for even $L$}\,,
\end{eqnalign}
and it is quickly checked that these are well-defined, i.e.\ independent of the particular representatives $g,h\in L^\vee$ of $[g]$ and $[h]$ chosen (critically using that $L$ is even for $q_L$).
For $L$ even, $b_L$ can be recovered from $q_L$ in the usual way:
\begin{eqnalign}
    q_L([g]+[h]) - q_L([g]) - q_L([h]) \eqqcolon 2b_L([g],[h]) \,.
\end{eqnalign}

The main move in the classification of integer lattices is to replace the ring of integers by $\R$, $\Q$, the field of $p$-adic rationals $\Q_p$ or ring of $p$-adic integers $\Z_p$.
Lattices are collected into equivalence classes called \emph{genera}, where two lattices are in the same genus if and only if they are equivalent over $\R$ and $\Z_p$ for every prime $p$.
That is, the genus of a lattice $L$ is characterized by $L\otimes\R$ and the set of $p$-adic lattices $\{L\otimes\Z_p\}$ up to $p$-adic integral equivalence.
Thanks to Sylvester's law of inertia, the signature $\sign(L)=(n_+,n_-)$ exactly characterizes when two lattices are equivalent over $\R$.
Similarly, complete sets of $p$-adic invariants allow one to systematically compute whether two $p$-adic lattices are integral equivalent.

The so-called \emph{spinor genus} is a refinement of the genus introduced by Eichler which we will not define carefully here thanks to the following well-known result~\cite{Eichler_1952}:

\begin{thm}
\label{thm:Eichler}
    For indefinite lattices of dimension at least three, a spinor genus contains exactly one integral equivalence class of lattices.
\end{thm}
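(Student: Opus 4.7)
The plan is to prove Eichler's theorem via the standard adelic framework combined with strong approximation for the spin group. The statement we want is that for an indefinite integral lattice $L$ of rank $\geq 3$, the genus of $L$ partitions into spinor genera each containing a single isometry class. I would work throughout with the rational quadratic space $V = L\otimes\Q$ and its adelization $V_\mathbb{A}$, translating the classification of lattices in the genus of $L$ into a problem about double cosets of the orthogonal group.

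First I would set up the bookkeeping. Two lattices $L, L'\subset V$ lie in the same genus exactly when there exists $g=(g_v)\in \mathrm{O}(V_\mathbb{A})$ with $g_vL_v = L'_v$ for every place $v$, so classes in the genus are parametrized by the double coset $\mathrm{O}(V)\backslash \mathrm{O}(V_\mathbb{A})/\Stab(L)$, where $\Stab(L)=\prod_v \mathrm{O}(L_v)$. The spinor genus refinement is cut out by the spinor norm $\theta\colon \mathrm{O}(V_v)\to \Q_v^\times/(\Q_v^\times)^2$: two lattices are spinor-equivalent when they are genus-equivalent via some $g$ whose local spinor norms can be trivialized by elements of the stabilizer. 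Making this precise, classes inside a fixed spinor genus are parametrized by a quotient of the form $\mathrm{Spin}(V)\backslash \mathrm{Spin}(V_\mathbb{A})/K$, where $K\subset \mathrm{Spin}(V_\mathbb{A})$ is an open subgroup containing a compact-open piece at every finite place and an appropriate connected piece at infinity, modulo finite-index corrections from the exact sequence $1\to\mu_2\to\mathrm{Spin}\to\mathrm{SO}\to 1$.

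Second, and this is the main input, I would invoke strong approximation for $\mathrm{Spin}(V)$. Since $\mathrm{Spin}$ is a simply connected, semisimple, absolutely almost simple algebraic $\Q$-group, Kneser's strong approximation theorem applies provided the archimedean component $\mathrm{Spin}(V_\R)$ is non-compact; this non-compactness holds precisely when $V$ is indefinite of rank $\geq 3$ (in rank $\leq 2$ the spin group is a torus or trivial and the theorem fails for exactly this reason). Strong approximation then tells me that $\mathrm{Spin}(V)$ is dense in $\mathrm{Spin}(V_{\mathbb{A}_f})$, so its image meets every open neighborhood; applied to the open subgroup $K$ above, this forces the double coset to be a singleton. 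Unwinding the $\mu_2$-sequence and the small discrepancy between $\ker\theta$ and the image of $\mathrm{Spin}$ contributes only trivial factors, which are already absorbed into the definition of the spinor genus.

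The main obstacle is the strong approximation step itself: it is a deep arithmetic theorem (proved by Kneser, extended by Platonov, and exploiting the specific structure of $\mathrm{Spin}$ as a simply connected group) and is where the hypotheses \emph{indefinite} and \emph{rank $\geq 3$} enter crucially. In a self-contained treatment I would have to either reproduce Kneser's proof via the arithmetic of quaternion algebras and Witt's theorem, or else cite it as a black box. For the present purpose I would follow the authors and cite it, since a full proof is outside the scope of this appendix and well documented in \cite{Cassels_1978,Conway_1999}.
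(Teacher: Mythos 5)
Your proposal is sound: the outline you give --- adelic double cosets for the genus, the spinor-norm refinement, and Kneser's strong approximation theorem for the simply connected group $\mathrm{Spin}(V)$, whose hypotheses are met exactly because indefiniteness gives non-compactness at the archimedean place and rank $\geq 3$ makes $\mathrm{Spin}$ semisimple --- is precisely the standard proof underlying Eichler's theorem. The paper does not reprove this result either; it cites it as a black box (\cite[Theorem~15.19]{Conway_1999}), which is also your stated fallback, so your treatment is essentially the same as the paper's.
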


\noindent
We also have the following two useful results:

\begin{thm}
\label{thm:Conway}
    If $L$ is indefinite of dimension of at least three and $G_L$ is cyclic, then the genus of $L$ contains exactly one spinor genus.
\end{thm}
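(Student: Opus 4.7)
The plan is to use the adelic characterization of the quotient of a genus by spinor equivalence. For an indefinite $\Z$-lattice $L$ of rank $\geq 3$, a classical theorem of Eichler and Kneser states that
\begin{equation}
    \big|\mathrm{gen}(L)/\mathrm{spn}(L)\big| \;=\; \big[J_\Q^+ \,:\, \Q^\times \cdot \theta(O^+(L_{\mathbb{A}}))\big] \,,
\end{equation}
where $J_\Q^+$ is the group of ideles of $\Q$ with positive real component, $\theta$ denotes the spinor norm, and $O^+(L_{\mathbb{A}}) = \prod_p O^+(L\otimes\Z_p)$ is the proper adelic orthogonal group. To prove the theorem it therefore suffices to show that $\Z_p^\times \subseteq \theta(O^+(L\otimes\Z_p))$ for every prime $p$: strong approximation for the spin group (applicable since $L$ is indefinite of rank $\geq 3$) then collapses the bracket above to $1$.

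The local computation proceeds via the $p$-adic Jordan decomposition $L\otimes\Z_p \cong \bigoplus_{k\geq 0} p^k M_k$ with each $M_k$ unimodular over $\Z_p$. The $p$-primary component of the discriminant group $G_L$ is controlled by the scaled summands with $k > 0$, and cyclicity of $G_L$ forces each $p$-part of $G_L$ to be cyclic as well. This is a strong restriction on the Jordan decomposition: at each prime $p$ there is at most one non-unimodular Jordan block, and if present it must have rank one. Consequently $L\otimes\Z_p$ always contains a unimodular Jordan summand of rank $\geq \dim L - 1 \geq 2$. For odd $p$ a standard computation gives that the spinor norm of the orthogonal group of any $\Z_p$-unimodular lattice of rank $\geq 2$ equals $\Z_p^\times \cdot (\Q_p^\times)^2$, so that $\Z_p^\times \subseteq \theta(O^+(L\otimes\Z_p))$ is immediate.

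The main obstacle is the case $p = 2$, where the spinor norm of the orthogonal group of a $2$-adic lattice is famously subtle: it depends on the parities (Type~I vs.\ Type~II) of the individual Jordan blocks, on the specific $2$-adic units appearing in their diagonalizations, and on interactions between consecutive Jordan scales. To handle this I would again exploit cyclicity of the $2$-primary part of $G_L$ to reduce $L\otimes\Z_2$ to a short list of possible Jordan shapes, and then verify in each shape, using the standard tables of $2$-adic spinor norms, that $\Z_2^\times \subseteq \theta(O^+(L\otimes\Z_2))$ continues to hold. Once every local factor contains $\Z_p^\times$, combining with $\Q^\times$ recovers all of $J_\Q^+$ by weak approximation, and the spinor genus is unique.
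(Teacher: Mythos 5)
You should first know that the paper contains no proof of this statement at all: it is quoted from the literature, with the remark that it ``falls out of the work of Nikulin on discriminant forms (e.g.\ see corollary~1.9.4)'' of the cited Nikulin paper. So your idelic spinor-norm approach is necessarily an independent route, and parts of it are sound: cyclicity of $G_L$ does force, at each prime $p$, at most one non-unimodular Jordan block, necessarily of rank one, hence a unimodular Jordan summand of rank $\geq \dim L - 1 \geq 2$; and for odd $p$ the standard computation indeed gives $\Z_p^\times\subseteq\theta(O^+(L\otimes\Z_p))$, so all odd places are handled correctly.

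The genuine gap is at $p=2$, and it is not merely that you defer the verification -- the inclusion $\Z_2^\times\subseteq\theta(O^+(L\otimes\Z_2))$ that your whole strategy rests on is \emph{false} for some lattices satisfying the hypotheses, so no case-check against the tables can close it. Take $L=\langle 1\rangle\oplus\langle 1\rangle\oplus\langle -48\rangle$: indefinite, rank $3$, $G_L\cong\Z/48\Z$ cyclic. Then $L\otimes\Z_2\cong\langle 1,1\rangle\perp\langle 2^4u\rangle$ with $u\equiv 5\pmod 8$. Since the Jordan scales differ by $4$, the Earnest--Hsia determination of $2$-adic spinor norms gives $\theta(O^+(L\otimes\Z_2))=\{1,5,2,10\}\cdot(\Q_2^\times)^2$: the binary form $x^2+y^2$ represents only the square classes $\{1,5,2,10\}$, and the scaled block contributes $2^4u\equiv 5$ modulo squares, which enlarges nothing. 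Thus the unit part is the index-two subgroup $\{1,5\}(\Z_2^\times)^2$, and $3,7$ are never attained. The theorem is nevertheless true for this $L$, but for a global reason your plan never engages: in the index $\big[J_\Q:\Q^\times\theta(O^+(L_{\mathbb A}))\big]$ the rational element $-1\equiv 7\pmod 8$ is a unit at every odd place and lies in $\theta$ at the archimedean place \emph{precisely because $L$ is indefinite}, so it glues in the missing $2$-adic square classes and the index is still $1$. In other words, uniqueness of the spinor genus under cyclicity of $G_L$ is not a prime-by-prime local statement; your closing step (``once every local factor contains $\Z_p^\times$\dots'') rests on a premise that fails, and any correct proof along these lines must explicitly incorporate the $\Q^\times$-contribution at $p=2$ (this interplay is exactly what the train/compartment conditions in Conway--Sloane's chapter on genus symbols, or Nikulin's discriminant-form machinery as invoked by the paper, are encoding).
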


\begin{thm}
\label{thm:Nikulin}
    The invariants $\sign(L)$ and $(G_L,q_L)$ determine the genus of $L$.
\end{thm}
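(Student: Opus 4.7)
The plan is to prove Nikulin's theorem by exploiting the local--global structure of the genus. By definition, two integer lattices $L_1$ and $L_2$ lie in the same genus precisely when $L_1\otimes\R\cong L_2\otimes\R$ and $L_1\otimes\Z_p\cong L_2\otimes\Z_p$ for every prime $p$. By Sylvester's law of inertia the real equivalence is completely controlled by $\sign(L)$, so it suffices to show that for each prime $p$, the $p$-adic lattice $L\otimes\Z_p$ is determined up to $\Z_p$-equivalence by $\sign(L)$ together with the discriminant form $(G_L,q_L)$. Since $G_L=\bigoplus_p(G_L)_p$ splits into $p$-primary parts, and each $(G_L)_p$ together with the induced $q_L|_p$ depends only on $L\otimes\Z_p$, the problem reduces one prime at a time.

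The main structural tool is the Jordan decomposition: any $\Z_p$-lattice admits a splitting $L\otimes\Z_p=\bigoplus_k p^k L_k$ with each $L_k$ a unimodular $\Z_p$-lattice. The discriminant form $(G_L)_p$ then decomposes into contributions from each Jordan piece $p^k L_k$ with $k\geq 1$, while the unimodular part $L_0$ is responsible for the ``invisible'' part of the lattice. The strategy is thus to show that the combined data of $\sign(L)$, the rank of $L$ (computable from $\sign$), and the $p$-primary discriminant form determine the rank of each Jordan piece and all additional local invariants.

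For odd primes $p$, a unimodular $\Z_p$-lattice is classified entirely by its rank and its determinant modulo $(\Z_p^*)^2$, where the square-class group $\Z_p^*/(\Z_p^*)^2\cong(\Z/2\Z)$. The discriminant form on each $p^k L_k$ with $k\geq1$ encodes precisely the rank and determinant of $L_k$, so $(G_L)_p$ determines $\bigoplus_{k\geq 1}p^k L_k$. The rank of $L_0$ then follows from the total rank of $L$, and its determinant is fixed modulo squares by requiring the global determinant to have the correct sign compatible with $\sign(L)$ via the product formula for Hilbert symbols. Thus the odd-prime case goes through cleanly.

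The main obstacle is the prime $p=2$. Jordan pieces at $2$ come in two flavours (even type and odd type), and classifying them requires additional invariants beyond rank and determinant --- notably the oddity and the $2$-adic signs of each block. The task is to show that these invariants are all recoverable from $(G_L)_2$ together with $\sign(L)$, which is not immediate because different Jordan forms can produce identical discriminant forms. Nikulin's resolution is a detailed case-by-case analysis: one enumerates the possible Jordan forms giving a prescribed $(G_L)_2$, then uses the oddity formula (which relates the $2$-adic oddity to $\sign(L)\bmod 8$ via Milgram's Gauss sum) and the Hasse--Minkowski product formula at $p=2$ to pin down the remaining ambiguity. This $2$-adic bookkeeping is where essentially all of the real work lies; the odd-prime and archimedean arguments are comparatively short.
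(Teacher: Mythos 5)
The paper offers no proof of theorem~\ref{thm:Nikulin} to compare against: it is quoted as a known result (it is Nikulin's corollary~1.9.4), with the reader pointed to~\cite{Cassels_1978,Conway_1999,Nikulin:1980} for the machinery, so your proposal must stand on its own. Its skeleton is indeed the standard route --- genus $=$ real class plus $\Z_p$-classes for all $p$, Sylvester at the archimedean place, Jordan splitting, and the rank-plus-determinant classification of unimodular $\Z_p$-lattices for odd $p$ --- and the odd-$p$ paragraph is essentially fine (though the appeal to the Hilbert-symbol product formula is misplaced: $\det L = (-1)^{n_-}|G_L|$ is already fixed by $\sign(L)$ and $|G_L|$, and the product formula is needed for \emph{existence} of lattices with prescribed local data, not for the uniqueness you are proving). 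The genuine gap is at $p=2$, where you yourself locate ``essentially all of the real work'' and then discharge it by writing that ``Nikulin's resolution is a detailed case-by-case analysis'' --- that is, you cite the theorem being proved. What must actually be established is: (i) the scale-$\geq 1$ part of $L\otimes\Z_2$ is recovered from the finite quadratic form $(G_L)_2$ even though inequivalent $2$-adic Jordan forms can induce identical discriminant forms (this is the content of Nikulin's propositions~1.8.1--1.8.2, via oddity fusion and sign walking, where such Jordan forms are shown to give isomorphic $\Z_2$-lattices); and (ii) the scale-$0$ block is then pinned down by total rank and determinant. Neither step is stated precisely, let alone carried out.

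Step (ii) also exposes a second gap: you never invoke evenness of $L$, but the theorem is false --- indeed ill-posed --- without it. As the paper notes in its definitions, $q_L$ takes values in $\Q/2\Z$ only when $L$ is even, and it is exactly this refinement of the bilinear form $b_L$, together with the fact that the scale-$0$ Jordan constituent of $L\otimes\Z_2$ is then of \emph{even} type, that makes the $2$-adic class recoverable: even unimodular $\Z_2$-lattices are classified by rank and determinant alone (sums of $U$- and $V$-blocks subject to $U\oplus U\cong V\oplus V$). For odd lattices the scale-$0$ block can be of odd type, its oddity is an extra invariant invisible to the discriminant data, and signature plus discriminant form no longer determine the genus. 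Your sketch should state the hypothesis ``$L$ even'' up front and use it at precisely this point; without it, the $p=2$ argument you gesture at cannot close.
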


\noindent
Theorem~\ref{thm:Eichler} follows from~\cite[Theorem 15.19]{Conway_1999}, for example, and theorem \ref{thm:Conway} falls out of the work of Nikulin on discriminant forms~\cite{Nikulin:1980} (e.g.\ see corollary~1.9.4).
For the lattices we encounter below, theorems~\ref{thm:Eichler},~\ref{thm:Conway} and~\ref{thm:Nikulin} imply that knowing the signature and discriminant form uniquely fixes the lattice.

\medskip

With these preliminaries we proceed towards proving theorems~\ref{thm:orbits-n=9}, \ref{thm:orbits-even} and~\ref{thm:orbits-odd}.
The first step is to translate the problem of understanding orbits of primitive vectors to one of understanding certain classes of lower-dimensional lattices.
The basic idea is that one can go back and forth between a primitive vector $u$ and its orthogonal complement $u^\perp$ (or $u^\perp/u$ in the case where $u$ is null).

\begin{lemma}
\label{lem:orbits-zero-norm}
    For every $n\geq 1$, there is a 1--1 correspondence between the following two sets:
    \begin{enumerate}
        \item Orbits of primitive norm-zero vectors of $\II_{1,8n+1}$ under $\Aut(\II_{1,8n+1})$.
        \item $8n$-dimensional positive-definite even unimodular lattices.
    \end{enumerate}
\end{lemma}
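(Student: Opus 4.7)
The plan is to construct mutually inverse maps between the two sets. For the forward direction, given a primitive isotropic $u\in\II_{1,8n+1}$, I would set $\Lambda(u)\coloneqq u^\perp/\langle u\rangle$ with sign reversed so as to be positive-definite. For the reverse direction, given a positive-definite even unimodular lattice $\Lambda$ of rank $8n$, the direct sum $U\oplus(-\Lambda)$ has signature $(1,8n+1)$ and is even and unimodular, hence is isomorphic to $\II_{1,8n+1}$ by uniqueness of even indefinite unimodular lattices (cited in the main text); I would then assign to $\Lambda$ the orbit of a standard isotropic generator of the $U$ factor under this identification.

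To verify that $\Lambda(u)$ really is positive-definite, even and unimodular of rank $8n$, I would argue as follows. Signature considerations show that $u^\perp\subset\II_{1,8n+1}$ has rank $8n+1$ and signature $(0,8n,1)$ with $\langle u\rangle$ spanning the null direction, so the quotient is negative-definite of rank $8n$; evenness passes to the quotient from $\II_{1,8n+1}$. For unimodularity, primitivity of $u$ together with self-duality of $\II_{1,8n+1}$ produces some $v_0$ with $u\cdot v_0=1$, and because $v_0^2$ is even the shift $v\coloneqq v_0-\tfrac{1}{2}v_0^2\,u$ achieves $v^2=0$. Then $\langle u,v\rangle\cong U$ splits off as an orthogonal direct summand, giving $\II_{1,8n+1}=U\oplus W$ with $W$ unimodular; a direct check shows $u^\perp=\langle u\rangle\oplus W$, so that $\Lambda(u)\cong -W$ as claimed. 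That the two maps are mutual inverses on underlying sets then follows by unwinding: starting from $u$, the decomposition $U\oplus W$ displays $u$ as a standard $U$-generator in a copy of $U\oplus(-\Lambda(u))$; starting from $\Lambda$, the quotient $(e_1)^\perp/\langle e_1\rangle$ inside $U\oplus(-\Lambda)$ returns $-\Lambda$.

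The main technical obstacle is showing that the forward map really descends to a well-defined bijection on orbits; concretely, that $\Lambda(u_1)\cong\Lambda(u_2)$ forces $u_1$ and $u_2$ to lie in the same $\Aut(\II_{1,8n+1})$-orbit. The plan is to use the splitting constructed above for each $u_i$ to write $\II_{1,8n+1}\cong U_i\oplus W_i$ with $W_i\cong -\Lambda(u_i)$; any lattice isomorphism $\Lambda(u_1)\to\Lambda(u_2)$, combined with the identification of the $U_i$ sending $u_1\mapsto u_2$ and $v_1\mapsto v_2$, then assembles into an automorphism of $\II_{1,8n+1}$ carrying $u_1$ to $u_2$. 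The entire argument rests on the existence of the complementary null vector $v$, which depends essentially on $\II_{1,8n+1}$ being even; the analogous lemma in the odd case will require a somewhat more delicate discussion.
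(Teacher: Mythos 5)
Your proposal is correct, and it is the same argument as the paper's: the paper itself defers to~\cite[section~2.3]{Borcherds_1999a}, whose proof is precisely this correspondence $u \mapsto u^\perp/\langle u\rangle$, with unimodularity and the orbit statement both obtained by splitting off a hyperbolic plane $U=\langle u,v\rangle$ via a null $v$ with $u\cdot v=1$ (using evenness to arrange $v^2=0$) and invoking uniqueness of the indefinite even unimodular lattice. Your observation that the shift $v=v_0-\tfrac{1}{2}v_0^2\,u$ fails in the odd case correctly anticipates why lemmas~\ref{lem:orbits-even} and~\ref{lem:orbits-odd} require the glue-vector formulation instead.
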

\begin{lemma}
\label{lem:orbits-even}
    For every $n\geq 1$ and $m\geq1$, there is a 1--1 correspondence between the following two sets:
    \begin{enumerate}
        \item Orbits of primitive vectors of norm $-2m<0$ in $\II_{1,8n+1}$ under $\Aut(\II_{1,8n+1})$.
        \item Pairs $(A,[a])$ where $A$ is an even lattice of signature $(1,8n)$ and $[a]$ is a generator of $G_A\cong\Z/2m\Z$ with $q([a]) = \frac{1}{2m} + 2\Z$.
    \end{enumerate}
\end{lemma}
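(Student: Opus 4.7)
The plan is to realise the correspondence by associating to each primitive $u \in \II_{1,8n+1}$ with $u^2 = -2m$ its orthogonal complement $A = u^\perp \cap \II_{1,8n+1}$, equipped with a canonical glue element $[a] \in G_A$; the inverse sends $(A,[a])$ to the even unimodular overlattice of $A \oplus \Z u$ obtained by gluing $[\tfrac{1}{2m}u]$ to $[a]$.

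For the forward direction, set $L = \II_{1,8n+1}$ and $M = A \oplus \Z u$. First one checks that $A$ is an even primitive sublattice of signature $(1,8n)$ and rank $8n+1$. The inclusions $M \subseteq L = L^\vee \subseteq M^\vee$ present $L$ as an even overlattice of $M$; the image $H = L/M \subseteq G_M = G_A \oplus G_{\Z u}$ is isotropic, and because $L$ is unimodular it satisfies $|H|^2 = |G_M| = 2m \cdot |G_A|$. Primitivity of $A$ and of $\Z u$ forces the projections $H \to G_A$ and $H \to G_{\Z u}$ to be injective, hence isomorphisms, so $H$ is the graph of an isomorphism $\phi : G_{\Z u} \xrightarrow{\sim} G_A$ satisfying $q_A \circ \phi = -q_{\Z u}$. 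Since $q_{\Z u}([\tfrac{1}{2m}u]) = \tfrac{u^2}{4m^2} = -\tfrac{1}{2m} \bmod 2\Z$, setting $[a] := \phi([\tfrac{1}{2m}u])$ yields a generator of $G_A \cong \Z/2m\Z$ with $q_A([a]) = \tfrac{1}{2m} \bmod 2\Z$, exactly as required.

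For the inverse, starting from $(A,[a])$, I would pick a lift $\tilde{a} \in A^\vee$ of $[a]$ and form $L := (A \oplus \Z u) + \Z\bigl(\tilde{a} + \tfrac{1}{2m}u\bigr)$ with $u^2 = -2m$. The relation $q_A([a]) + q_{\Z u}([\tfrac{1}{2m}u]) \equiv 0 \bmod 2\Z$ ensures $L$ is even; a direct count gives $[L:M] = 2m$ and hence $|G_L| = 4m^2/(2m)^2 = 1$; and the signature is clearly $(1,8n+1)$. To identify $L$ with $\II_{1,8n+1}$ I invoke the three classification results quoted above in turn: by theorem~\ref{thm:Nikulin} the signature and trivial discriminant form determine the genus; by theorem~\ref{thm:Conway} this genus contains a unique spinor genus (since $G_L = 0$ is cyclic); and by theorem~\ref{thm:Eichler} that spinor genus contains a unique integral equivalence class ($L$ is indefinite of dimension at least $10$ for $n \geq 1$). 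A short additional calculation confirms that $u$ remains primitive in $L$ with $u^\perp \cap L = A$, closing the loop.

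Finally, to upgrade the two constructions to a bijection of orbits and isomorphism classes, I would check that they are mutually inverse up to isomorphism and that any isomorphism $(A,[a]) \xrightarrow{\sim} (A',[a'])$ extends canonically through the glue to an isomorphism $L \xrightarrow{\sim} L'$ carrying $u$ to $u'$, and conversely. The main technical nuisance, and the spot where I expect the argument to need the most care, is the sign ambiguity $\phi \leftrightarrow -\phi$ in the glue (equivalently, the automorphism $u \mapsto -u$ of $\Z u$); packaging the data as the element $[a]$ rather than the map $\phi$ itself absorbs this ambiguity, so no genuine orbit collisions arise.
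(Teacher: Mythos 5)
Your proposal is correct and follows essentially the same route as the paper's proof: take $A=u^\perp$ with the canonical glue generator $[a]$ in one direction, and in the other glue $A\oplus\langle-2m\rangle$ along $[a]\oplus[u/2m]$ and identify the resulting even unimodular lattice of signature $(1,8n+1)$ with $\II_{1,8n+1}$ by uniqueness. The only differences are presentational — you phrase the forward direction in Nikulin's discriminant-form/graph language where the paper works concretely with a vector $v$ satisfying $v\cdot u=1$, and you reach $\II_{1,8n+1}$ via the Eichler--Conway--Nikulin chain rather than citing the classical uniqueness of indefinite even unimodular lattices directly — and your observation that the $u\mapsto -u$ sign ambiguity is absorbed because $(A,[a])\cong(A,-[a])$ via $-\mathrm{id}_A$ is a point the paper leaves implicit.
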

\begin{lemma}
\label{lem:orbits-odd}
    For every $n\geq 1$ and $m\geq 1$ with $m\equiv n-1\mod{8}$, there is a 1--1 correspondence between the following two sets:
    \begin{enumerate}
        \item Orbits of primitive characteristic vectors of norm $-m<0$ in $\I_{1,n}$ under $\Aut(\I_{1,n})$.
        \item Pairs $(B,[b])$ where $B$ is an even lattice of signature $(1,n-1)$ and $[b]$ is a generator of $G_B\cong\Z/m\Z$ with $q([b])=\frac{m+1}{m} + 2\Z$.
    \end{enumerate}
\end{lemma}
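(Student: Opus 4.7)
The plan is to establish the bijection by giving explicit maps in both directions and verifying that they descend to mutually inverse maps on equivalence classes. In the forward direction, to a primitive characteristic $u\in\I_{1,n}$ with $u^2=-m$ I would associate the pair $(B,[b])$ where $B=u^\perp\subset\I_{1,n}$ and $[b]\in G_B$ is a distinguished generator constructed below. In the reverse direction, starting from a pair $(B,[b])$ I would reconstruct $\I_{1,n}$ together with its distinguished $u$ via a gluing construction applied to $B\oplus\langle u\rangle$.

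For the forward map, I would first observe that $B=u^\perp$ has signature $(1,n-1)$ since $u$ has negative norm, and that $B$ is even because for any $y\in B$ the characteristic property gives $y\cdot y\equiv u\cdot y=0\pmod 2$. Since $\I_{1,n}$ is unimodular and $u$ is primitive with $u^2=-m$, the sublattice $\Z u\oplus B$ has index $m$ in $\I_{1,n}$, forcing $G_B\cong \Z/m\Z$. To fix the generator I would pick $v\in\I_{1,n}$ with $u\cdot v=1$ (which exists by primitivity plus unimodularity) and set $b\coloneqq v+u/m\in B^\vee$. Different admissible choices of $v$ differ by elements of $B$, so $[b]\in G_B$ is well-defined, and $[b]$ generates $G_B$ because $mb=mv+u$ exits $\Z u\oplus B$ non-trivially. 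Computing $b\cdot b = v\cdot v + 1/m$ and using $v\cdot v\equiv u\cdot v=1\pmod 2$ from the characteristic property then gives $q([b])=(m+1)/m + 2\Z$, matching the required data.

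For the reverse map, given a pair $(B,[b])$ I would form $B\oplus\langle u\rangle$ with $u^2=-m$ and pass to the overlattice $L\coloneqq(B\oplus\langle u\rangle)+\Z w$ with glue element $w\coloneqq b+u/m$. A direct computation gives $q([w]) = q([b]) + q([u/m]) = (m+1)/m - 1/m = 1 \pmod{2\Z}$, so $\langle[w]\rangle$ is an isotropic cyclic subgroup of order $m$ in $G_{B\oplus\langle u\rangle}$, making $L$ integral of discriminant $m^2/m^2=1$, hence unimodular. Because $q([w])\equiv 1\not\equiv 0\pmod 2$, the lattice $L$ is odd, so uniqueness of the odd unimodular lattice of signature $(1,n)$ forces $L\cong\I_{1,n}$. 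A short verification shows that $u$ remains primitive in $L$ and is still characteristic (the only non-trivial test is $u\cdot w=-1\equiv w\cdot w\pmod 2$), while $u^\perp=B$ recovers the original data.

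The main obstacle is to check that these maps descend to mutually inverse maps on equivalence classes. In one direction, any $\phi\in\Aut(\I_{1,n})$ relating two vectors $u,u'$ in the same orbit restricts to an isometry $B\to B'$ sending $v+u/m\mapsto \phi(v)+u'/m$, hence inducing an isomorphism of pairs. Conversely, any isomorphism of pairs $(B,[b])\cong(B',[b'])$ extends by $u\mapsto u'$ to an isometry of $B\oplus\langle u\rangle \to B'\oplus\langle u'\rangle$, and the matching of distinguished generators ensures this isometry carries $w\to w'$ and therefore lifts to an isometry of the unimodular overlattices. The subtlest routine check is that the reverse construction depends only on the class $[b]$ and not on the lift $b\in B^\vee$: replacing $b$ by $b+y$ with $y\in B$ replaces $w$ by $w+y$, which produces the same overlattice $L$ and the same distinguished $u$.
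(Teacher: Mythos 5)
Your proof is correct and takes essentially the same route as the paper's (which adapts~\cite[lemma~3.1.1]{Borcherds_1999a}): the forward map via the orthogonal complement $B=u^\perp$ with distinguished generator $[v+u/m]$, the reverse map via gluing $w=b+u/m$ onto $B\oplus\langle u\rangle$, with the characteristic property of $u$ supplying both the evenness of $B$ and $q([b])=\tfrac{m+1}{m}+2\Z$, and the odd norm of the glue vector pinning down the overlattice as $\I_{1,n}$. One cosmetic point: the glue group $\langle[w]\rangle$ is not isotropic for the quadratic form (you compute $q([w])=1+2\Z\neq 0$); what your calculation actually establishes is that the bilinear form is integer-valued on it, which gives an \emph{odd} integral unimodular overlattice --- exactly what the uniqueness argument requires.
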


\noindent
For a proof of lemma~\ref{lem:orbits-zero-norm} see~\cite[section~2.3]{Borcherds_1999a}; theorem~\ref{thm:orbits-n=9} follows immediately since famously $E_8$ is the unique $8$-dimensional positive-definite even unimodular lattice.
Lemmas~\ref{lem:orbits-even} and~\ref{lem:orbits-odd} are straightforward generalizations of~\cite[lemma~3.1.1]{Borcherds_1999a} (which uses signature $(8n+1,1)$) and we mimic the proof presented there, filling in some details:

\begin{proof}[Proof of lemma~\ref{lem:orbits-even}]
    To get from (1) to (2), let $u\in\II_{1,8n+1}$ be primitive with $u^2=-2m$ and take $A\coloneqq u^\perp$.
    $A$ is clearly even with the correct signature and $G_A\cong \Z/2m\Z$ since $u$ is primitive.
    Also, it is always possible to find $v\in\II_{1,8n+1}$ with $v\cdot u=1$ since $u$ is not characteristic and then define $a\coloneqq v + u/2m \in A^\vee$.
    This gives a well-defined element $[a]\in G_A$ which is independent of the choice of $v$, generates $G_A$ since $u$ is primitive. One checks that $q([a])=\frac{1}{2m}+2\Z$.
    To get from (2) to (1), form the lattice $A'\coloneqq A\oplus\langle -2m\rangle$ where $\langle -2m\rangle$ is a one-dimensional lattice generated by $u$ of norm $-2m$.
    $A'$ clearly has signature $(1,8n)$ and $G_{A'}\cong (\Z/2m\Z)^2$.
    Pick any representative $a\in A^\vee$ of $[a]\in G_A$ and define $a'\coloneqq a\oplus(u/2m)\in (A')^\vee$.
    Adding $a'$ to the lattice $A'$ results in an even integral lattice of unit determinant, hence $\II_{1,8n+1}$.
    These two maps are inverses and so we have a bijection.
\end{proof}

\noindent
The proof of lemma~\ref{lem:orbits-odd} is essentially identical after replacing $\II_{1,8n+1}$, $2m$, $A$ and $a$ by $\I_{1,n}$, $m$, $B$ and $b$ throughout, requiring only three minor changes: (i) the existence of $v\in\I_{1,n}$ with $v\cdot u=1$ is guaranteed by the primitivity of $u$ and B\'ezout's identity, (ii) the computation of $q([b]) = q([v+u/m]) = \frac{m+1}{m} + 2\Z$ uses that $u$ is characteristic, and (iii) $b'\coloneqq b\oplus(u/m)$ has odd norm so in the final step the resulting unimodular lattice is definitely $\I_{1,n}$.

\medskip

At this point we are ready to prove theorems~\ref{thm:orbits-even} and~\ref{thm:orbits-odd}.
For theorem~\ref{thm:orbits-even} we follow the proof found in~\cite{Borcherds_1999a}, adapting to signature $(1,8n+1)$:
\begin{proof}[Proof of theorem~\ref{thm:orbits-even}]
    Thanks to lemma~\ref{lem:orbits-even}, orbits correspond to pairs $(A,[a])$ with $A$ even of signature $(1,8n)$ and $[a]$ a generator of $G_A\cong\Z/2m\Z$ with $q([a])=\frac{1}{2m}+2\Z$.
    From the above discussion (theorems~\ref{thm:Eichler},~\ref{thm:Conway} and~\ref{thm:Nikulin}), the lattice $A$ is uniquely determined by these data. In fact one can check that $A\cong \langle u\rangle\oplus(-E_8)^n$ where $u^2=2m$, since this has the correct signature and discriminant form.

    Although $A$ is unique, there might be several inequivalent choices of $[a]$.
    Therefore, it remains to show that automorphisms of $A$ act transitively on the set of generators of $G_A$ which have norm $\frac{1}{2m}+2\Z$.
    We focus on $n=1$ since we only need one of the $(-E_8)$s; for larger $n$ the additional $(-E_8)$s come along for the ride.
    The map $[a]\mapsto [ka]$, $k\in\Z/2m\Z$, preserves the discriminant form exactly when $k^2\equiv 1\mod{4m}$.
    Given any $k$ satisfying this condition, we can find a vector of the form $u'=ku + 2my$, $y\in (-E_8)$, which not only has $[u'/2m] = [ku/2m] = [ka]$ but also $u'^2 = u^2 = 2m$ by choosing $y$ to have norm $y^2 = -\frac{k^2-1}{2m}\in2\Z$ (which is always possible since $E_8$ has vectors of every non-negative even norm).
    Since $u'\cdot v = 0\mod{u'^2}$ for all $v\in A$, $A\cong \langle u'\rangle\oplus u'^\perp$ and in fact $u'^\perp$ must be an even unimodular lattice of signature $(0,8)$, namely $(-E_8)$.
    Since $A\cong \langle u\rangle\oplus(-E_8)\cong\langle u'\rangle\oplus(-E_8)$, there is clearly an automorphism of $A$ which sends $u$ to $u'$ and therefore $[a]$ to $[ka]$ in $G_A$.
\end{proof}

\begin{proof}[Proof of theorem~\ref{thm:orbits-odd}]
    Thanks to lemma~\ref{lem:orbits-odd}, orbits correspond to pairs $(B,[b])$ with $B$ even of signature $(1,n-1)$ and $[b]$ a generator of $G_B\cong\Z/m\Z$ with $q([b])=\frac{m+1}{m}+2\Z$.
    From the above discussion, the lattice $B$ is uniquely determined by these data and in fact one can check that $B\cong U\oplus (-A_{m-1})\oplus (-E_8)^{r}$ ($n-m-1=8r\geq 0$) since this has the correct signature and discriminant form.
    To be explicit, $B$ has a basis $\{f_1,f_2,e_1,e_2,\ldots,e_{m-1}\}$ where $f_i$ are a basis for $U$ and the $e_i$ are a basis for $(-A_{m-1})$ with Gram matrix
    \begin{equation}
    \label{eq:An-Gram}
        \begin{bmatrix}
            -2 & 1 \\
            1 & -2 \\
            & &\ddots \\
            & & & -2 & 1\\
            & & & 1 & -2
        \end{bmatrix} \,.
    \end{equation}
    The vector $v_\ast\coloneqq e_1+2e_2+\ldots+(m-1)e_{m-1}$ has $v_\ast^2 = m(1-m)$ and $[v_\ast/m] = [b]$.

    It remains only to show that automorphisms of $B$ act transitively on the set of generators of $G_B$ which have norm $\frac{m+1}{m}+2\Z$.
    For $m=1$ this is vacuously true, so take $m\geq 2$.
    Also take $r=0$ for simplicity, larger $r$ working identically but with auxiliary $(-E_8)$s like before.
    The values of $k\in\Z/m\Z$ for which $[b]\mapsto[kb]$ preserves the discriminant form are those with $k^2\equiv 1\mod{2m}$ for $m$ even or $k^2\equiv 1\mod{m}$ for $m$ odd.\footnote{The first few non-trivial ($k\neq \pm1$) pairs are $(m,k)=(12,\pm5),\,(15,\pm4),\,(20,\pm9),\,(21,\pm8),\,(24,\pm7)$.}
    Given any $k$ satisfying the appropriate condition, define $v_\ast'\coloneqq (m\ell,m)\oplus kv_\ast \in U\oplus (-A_{m-1})$ where $\ell = \frac{(k^2-1)(m-1)}{2m}\in\Z$. By design, $[v_\ast'/m] = [kv_\ast/m] = [kb]$ and $v_\ast'^2 = v_\ast^2 = m(1-m)$.
    As we will show, given $v_\ast'$ one can always find a new basis for $B$ in which $v_\ast' = e_1'+2e_2'+\ldots+(m-1)e_{m-1}'$, where the $e_i'$ are a basis for $(-A_{m-1})$ with the same Gram matrix as the $e_i$.
    The map which sends $f_i$ to $f_i'$ and $e_i$ to $e_i'$ is then clearly an automorphism of $B$ which maps $v_\ast$ to $v_\ast'$, as desired.

    To this end, we proceed by extending $v_\ast'$ to a basis $\{v_\ast',\,e_2',\,e_3',\,\ldots,\,e_{m-1}',\,f_1',\,f_2'\}$ using the following iterative procedure.
    We can check that $v_\ast'\cdot x\equiv 0\mod{m}$ for all $x\in B$ and in fact one can easily find $x$ with $v_\ast'\cdot x = m$.
    Projecting this $x$ onto $v_\ast'^\perp\otimes\Q$ gives an element $\tilde{x} = x + \frac{v_\ast'}{m-1} \in (v_\ast'^\perp)^\vee$.
    Clearly $(m-1)\tilde{x}$ is the smallest multiple of $\tilde{x}$ in $v_\ast'^\perp$, so $G_{v_\ast'^\perp}\cong\Z/(m-1)\Z$.
    One quickly checks that $\tilde{x}^2 = \frac{m}{m-1}\mod{2}$ and so by the same logic as above $v_\ast'^\perp\cong U\oplus(-A_{m-2})$.
    Using this $U$, one can then find $y\in v_\ast'^\perp$ so as to make $e_2' \coloneqq x + y$ have norm $-2$.
    The Gram matrix for $\langle v_\ast',e_2'\rangle$ is then
    \begin{equation}
        \begin{bmatrix}
            m(1-m) & m \\ m & -2
        \end{bmatrix} \,,
    \end{equation}
    of determinant $m(m-2)$.
    Next, picking $e_3'$ is done in much the same way.
    $\langle v_\ast',e_2'\rangle$ cannot be an orthogonal summand, so we can find $x\in v_\ast'^\perp$ with $e_2'\cdot x = 1$ and the projection $\tilde{x} = x + \frac{v_\ast'}{m-2} + \frac{m-1}{m-2}e_2'$ of $x$ onto $\langle v_\ast',e_2'\rangle^\perp$ shows that $G_{\langle v_\ast',e_2'\rangle^\perp} \cong \Z/(m-2)\Z$ with generator of norm $\frac{m-1}{m-2}\mod{2}$ and therefore $\langle v_\ast',e_2'\rangle^\perp \cong U\oplus(-A_{m-3})$.
    Using this $U$, one can then find $y\in\langle v_\ast',e_2'\rangle^\perp$ such that $e_3'\coloneqq x+y$ has norm $-2$, giving the following Gram matrix for $\langle v_\ast',e_2',e_3'\rangle$:
    \begin{equation}
        \begin{bmatrix}
            m(1-m) & m \\ m & -2 & 1 \\ & 1 & -2
        \end{bmatrix} \,.
    \end{equation}
    This continues until all of $e_2',\ldots,e_{m-1}'$ have been chosen.
    At each intermediate step, the Gram matrix for $\langle v_\ast',e_2',\ldots,e_s'\rangle$ is
    \begin{equation}
        \begin{bmatrix}
            m(1-m) & m \\
            m & -2 & 1 \\
            & 1 & -2 \\
            & & & \ddots \\
            & & & & -2 & 1\\
            & & & & 1 & -2
        \end{bmatrix}_{s\times s}
    \end{equation}
    of determinant $(-1)^sm(m-s)$ and $\langle v_\ast',e_2',\ldots,e_s'\rangle^\perp \cong U\oplus(-A_{m-s-1})$.
    Once step $s=m-1$ is completed, $f_1',f_2'$ can be taken to be the standard basis for the remaining orthogonal $U$ summand.
    Finally, replacing $v_\ast'$ in the basis with
    \begin{equation}
        e_1' \coloneqq v_\ast' - 2e_2' - 3e_3' - \ldots - (m-1)e_{m-1}'
    \end{equation}
    returns the Gram matrix for the $(-A_{m-1})$ sublattice to the standard form~\eqref{eq:An-Gram} with $v_\ast' = e_1'+2e_2' +\cdots+ (m-1)e_{m-1}'$ as desired.
\end{proof}

The same strategy as above should work when $n<m$, but the form of $B$ clearly must be different.
We have found
\begin{eqnalign}
    m &\equiv 0\mod{8} \;: \quad &
        B &\cong \langle m\rangle \oplus (-E_8)^{(n-1)/8} \,,\\
    m &\equiv 1\mod{8} \;: &
        B &\cong \begin{bsmallmatrix}
            \frac{m-1}{4} & 1 \\ 1 & -4
        \end{bsmallmatrix} \oplus (-E_8)^{(n-2)/8} \,,\\
    m &\equiv 2\mod{8} \;: &
        B &\cong \begin{bsmallmatrix}
            \frac{m-2}{4} & 1 \\ 1 & -2
        \end{bsmallmatrix} \oplus \langle -2\rangle \oplus (-E_8)^{(n-3)/8} \,,\\
    m &\equiv 3\mod{8} \;: &
        B &\cong U\oplus\begin{bsmallmatrix}
            -\frac{m+1}{2} & 1 \\ 1 & -2
        \end{bsmallmatrix} \oplus (-E_8)^{(n-4)/8} \,,
\end{eqnalign}
again by simply checking that the signature and discriminant form are correct.
The proof of theorem~\ref{thm:orbits-even} can be co-opted essentially unchanged for $m\equiv 0\mod{8}$.
One could characterize $B$ for the remaining residue classes and analyze them all individually, but we have not attempted to do so.

The proof of theorem~\ref{thm:orbits-odd} breaks for $n=2$ because although lemma~\ref{lem:orbits-odd} still holds, the lattice $B$ has signature $(1,1)$ and is no longer necessarily uniquely fixed by its discriminant form since theorem~\ref{thm:Eichler} does not apply.
For the example of $v_1=(1;5,5)$ and $v_2=(1;1,7)$ mentioned at the beginning of this section, one finds 
\begin{equation}
    B_1 = v_1^\perp \cong \begin{bmatrix}
        0 & 7 \\ 7 & -2
    \end{bmatrix} \,, \quad
    B_2 = v_2^\perp \cong \begin{bmatrix}
        0 & 7 \\ 7 & -4
    \end{bmatrix}
\end{equation}
which are inequivalent over $\Z$ (e.g.\ $B_1$ represents $-2$ while $B_2$ does not), despite being equivalent over $\R$ and $\Z_p$ for all $p$.

%%%%%%%%%%%%%%%%%%%%%%%%%%%%%%%%%%%%%%%%%%%%%%%%%%%%%%%%%%%%%%%%
\section{\texorpdfstring{$\eta$}{eta}-invariants}
\label{app:eta-invariants}

Here we provide some details for the computation of $\eta$-invariants for vector multiplets and hypermultiplets.
We consider an $\SU(2)^k$ subgroup of the group $G$ and turn on a background for $\U(1)^k\subset \SU(2)^k$.
Using the abbreviations $L^0 \coloneqq \calL_0$ and $L^q \coloneqq \calL_q\oplus\calL_{-q}$ for $q\geq 1$, a representation $R$ of $G$ that decomposes as
\begin{equation}
    R \to \bigoplus_{q\geq 0} n_q^R L^q
\end{equation}
has $\eta$-invariant
\begin{equation}
    \eta^R(L_p^7) = n_0^R\eta_0^\text{Dirac}(L_p^7) + \sum_{q\geq 1}n_q^R\big[\eta_q^\text{Dirac}(L_p^7) + \eta_{-q}^\text{Dirac}(L_p^7)\big] \,,
\end{equation}
where
\begin{equation}
    \eta_q^\text{Dirac}(L_p^7) = -\frac{(p^2 - 1)(p^2 + 11) + 30q(p - q)(q^2 - pq - 2)}{720p} + \Z
\end{equation}
as in equation~\eqref{eq:lens-space-eta}.
The decomposition of various representations of $G$ can be determined in a systematic way and we list the results for some common representations below.

\paragraph{\boldmath $\SU(N)$}

We use the following embedding:
\begin{equation}
    \SU(N)\supset \SU(N-2)\times \SU(2)\times \U(1)\supset \SU(N-2k)\times \SU(2)^k\times \U(1)^k\,,
\end{equation}
where $k\in\{1,2,\ldots, \lfloor N/2\rfloor\}$.
The branching rules for $\SU(N)\to \SU(N-2)\times \SU(2)\times \U(1)$ include
\begin{eqnalign}
    \rep{N} &\to (\rep{1},\rep{2})_{-N+1} \oplus (\rep{N-2},\rep{1})_2 \,,\\
    \rep{adj} &\to (\rep{1},\rep{1})_0 \oplus (\rep{1},\rep{3})_0 \oplus (\rep{N-2},\rep{2})_N \oplus (\repbar{N-2},\rep{2})_{-N} \oplus (\rep{adj},\rep{1})_0 \,,\\
    \rep{N(N-1)/2} &\to (\rep{1},\rep{1})_{-2N+4} \oplus (\rep{N-2},\rep{2})_{-N+4} \oplus (\rep{(N-2)(N-3)/2},\rep{1})_4 \,,\\
    \rep{N(N+1)/2} &\to (\rep{1},\rep{3})_{-2N+4} \oplus (\rep{N-2},\rep{2})_{-N+4} \oplus (\rep{(N-2)(N-1)/2},\rep{1})_4 \,.
\end{eqnalign}
From these and $\rep{2}\to (1)\oplus(-1)$, $\rep{3}\to(2)\oplus(0)\oplus(-2)$, etc.\ for $\SU(2)\to\U(1)$, we obtain
\begin{eqnalign}
    \rep{N} &\to k L \,,\\
    \rep{adj} &\to k^2 L^2 \oplus 2k(N-2k) L \,,\\
    \rep{N(N-1)/2} &\to \tfrac{k(k-1)}{2}L^2 \oplus k(N-2k)L \,,\\
    \rep{N(N+1)/2} &\to \tfrac{k(k+1)}{2} L^2 \oplus k(N-2k) L \,, \\
    % \rep{N(N-1)(N-2)/6} &\to \tfrac{k(k-1)(k-2)}{6}L^3 \oplus \tfrac{1}{2}k(k-1)(N-2k)L^2 \\
    % &\qquad\qquad \oplus \tfrac{k}{2}[(N-2k)(N-2k-1)+k(k-1)]L \,,\\
    % \rep{\binom{N}{4}} &\to
    % \tfrac{k(k-1)(k-2)(k-3)}{24} L^4 \oplus \tfrac{k(k-1)(k-2)}{6}(N-2k) L^3 \\
    % &\qquad\qquad \oplus \tfrac{k(k-1)}{2}\left(\tfrac{k(k-2)}{3}+\tfrac{(N-2k)(N-2k-1)}{2}\right) L^2 \\
    % &\qquad\qquad \oplus \tfrac{k(N-2k)}{2}\left(\tfrac{(N-2k-1)(N-2k-2)}{3}+k(k-1)\right)L\,,
\end{eqnalign}
where here and below we leave $L^0$ unwritten to reduce clutter. Their number can be determined easily by comparing dimensions; for example the first line above stands for $\rep{N}\to kL \oplus (N-2k)L^0$.

\paragraph{\boldmath $\SO(2N)$}

We use the following embedding:
\begin{equation}
    SO(2N)\supset \SU(N)\times \U(1)\supset \SU(N-2k)\times \SU(2)^k\times \U(1)^{k+1}\,,
\end{equation}
The branching rules for $\SO(2N)\supset \SU(N)\times \U(1)$ include
\begin{eqnalign}
    \rep{2N} &\to \rep{N}_2 \oplus \repbar{N}_{-2} \,,\\
    \rep{adj} &\to \rep{1}_0 \oplus \rep{N(N-1)/2}_q \oplus \repbar{N(N-1)/2}_{-q} \oplus \rep{adj}_0 \,,\\
    \rep{(N+1)(2N-1)} &\to \rep{N(N+1)/2}_q \oplus \repbar{N(N+1)/2}_{-q} \oplus \rep{adj}_0 \,,
\end{eqnalign}
where $q=2$ for even $N$ and $q=4$ for odd $N$.
From these we obtain
\begin{eqnalign}
    \rep{2N} &\to 2kL \,,\\
    \rep{2N}\otimes\rep{2N} &\to 4k^2L^2 \oplus 8k(N-2k)L \,,\\
    \rep{adj} &\to k(2k+1)L^2 \oplus 4k(N-2k)L \,,\\
    \rep{(N+1)(2N-1)} &\to k(2k-1)L^2 \oplus 4k(N-2k)L \,.
\end{eqnalign}
The spinor representation is more involved.
As an example, take the spinor representation of $\SO(16)$.
Under $\SO(16)\to \SU(8)$, we obtain
\begin{eqnalign}
    \rep[+]{128} &\to 2\times\rep{8} \oplus 2\times\rep{56} \,,\\
    \rep[-]{128} &\to 2\times\rep{1} \oplus 2\times\rep{28} \oplus \rep{70} \,,
\end{eqnalign}
and thus
\begin{eqnalign}
    \rep[+]{128} &\to \tfrac{k(k-1)(k-2)}{3}L^3 \oplus k(k-1)(8-2k)L^2 \oplus k(5k^2-31k + 58)L \,,\\
    \rep[-]{128} &\to \tfrac{k(k-1)(k-2)(k-3)}{24}L^4 \oplus \tfrac{k(k-1)(k-2)}{3}(4-k)L^3 \\
    &\qquad\qquad \oplus k(k-1)\tfrac{7k^2 - 47k + 90}{6} L^2 \oplus k(4-k)\tfrac{7k^2 - 29k + 54}{3} L \,.
\end{eqnalign}
Similarly, for the spinor representation of $\SO(12)$, under $\SO(12)\to \SU(6)$ we have
\begin{eqnalign}
    \rep[+]{32} &\to 2\times\rep{6} \oplus \rep{20} \,,\\
    \rep[-]{32} &\to 2\times\rep{1} \oplus 2\times\rep{15} \,,
\end{eqnalign}
and
\begin{eqnalign}
    \rep[+]{32} &\to \tfrac{k(k-1)(k-2)}{6}L^3 \oplus k(k-1)(3-k)L^2 \oplus k\tfrac{5k^2-23k+34}{2}L \,,\\
    \rep[-]{32} &\to k(k-1)L^2 \oplus 4k(3-k)L \,.
\end{eqnalign}

\paragraph{\boldmath $\SO(2N+1)$}

This case is almost the same as for $\SO(2N)$:
\begin{equation}
    \SO(2N+1)\supset \SO(2N)\supset \SU(N)\times \U(1)\supset \SU(N-2k)\times \SU(2)^k\times \U(1)^{k+1}\,,
\end{equation}
where $k\in\{1,2,\ldots, \lfloor N/2\rfloor\}$.
We obtain
\begin{eqnalign}
    \rep{2N+1} &\to 2kL \,,\\
    \rep{adj} &\to k(2k-1)L^2 \oplus 2k(2N-4k+1)L \,,\\
    \rep{N(2N+3)} &\to k(2k+1)L^2 \oplus 2k(2N-4k+1)L \,.
\end{eqnalign}

\paragraph{\boldmath $\Sp(N)$}

We use the following embedding:
\begin{equation}
    \Sp(N)\supset \Sp(N-k)\times \SU(2)^k\,,
\end{equation}
where $k\in\{1,\ldots,N\}$.
The branching rules for $\Sp(N)\to \Sp(N-1) \times \SU(2)$ include
\begin{eqnalign}
    \rep{2N} &\to (\rep{1},\rep{2}) \oplus (\rep{2N-2},\rep{1}) \,,\\
    \rep{adj} &\to (\rep{1},\rep{1})\oplus(\rep{2N-2},\rep{2})\oplus(\rep{adj},\rep{1}) \,,\\
    \rep{N(2N+1)} &\to (\rep{1},\rep{3})\oplus (\rep{2N-2},\rep{2})\oplus(\rep{(N-1)(2N-1)},\rep{1}) \,.
\end{eqnalign}
We obtain
\begin{eqnalign}
    \rep{2N} &\to kL \,,\\
    \rep{adj} &\to \tfrac{k(k-1)}{2}L^2 \oplus (N-k)L \,,\\
    \rep{N(2N+1)} &\to \tfrac{k(k+1)}{2}L^2 \oplus (N-k)L \,.
\end{eqnalign}

% \paragraph{\boldmath $E_N,F_4,G_2$}

%%%%%%%%%%%%%%%%%%%%%%%%%%%%%%%%%%%%%%%%%%%%%%%%%%%%%%%%%%%%%%%%
\section{Column group data}
\label{app:column-groups}

Here we give a complete list of the column groups which appear in the linear program of section~\ref{sec:linear-programming}.
The $2961$ distinct column groups fall into $74$ families which differ only in the values of $m_i\geq 4$ for some of the rows.
In the following table listing these families, $m_i\in\{4,5,6,7,8,12\}$ unless otherwise restricted.

\begin{longtable}{*{7}{>$c<$}}
    \toprule
    \text{Column group} & \text{Restrictions} & \#(3) & \#(32) & \#(322) & \#(3222) & \text{Count} \\
    
    \midrule

    % Single columns
    \begin{bsmallmatrix}
        1
    \end{bsmallmatrix}
    \begin{smallmatrix}
        m_1
    \end{smallmatrix} & & 0 & 0 & 0 & 0 & 6 \\[3pt]
    \begin{bsmallmatrix}
        1 \\ 1
    \end{bsmallmatrix}
    \begin{smallmatrix}
        m_1 \\ m_2
    \end{smallmatrix} & m_1 \geq m_2 & 0 & 0 & 0 & 0 & 21 \\[6pt]
    \begin{bsmallmatrix}
        1 \\ 1 \\ 1
    \end{bsmallmatrix}
    \begin{smallmatrix}
        m_1 \\ m_2 \\ m_3
    \end{smallmatrix} & m_1\geq m_2 \geq m_3 & 0 & 0 & 0 & 0 & 56 \\[9pt]
    \begin{bsmallmatrix}
        -1
    \end{bsmallmatrix}
    \begin{smallmatrix}
        m_1
    \end{smallmatrix} & & 0 & 0 & 0 & 0 & 6 \\[3pt]
    \begin{bsmallmatrix}
        -1 \\ 1
    \end{bsmallmatrix}
    \begin{smallmatrix}
        m_1 \\ m_2
    \end{smallmatrix} & & 0 & 0 & 0 & 0 & 36 \\[6pt]
    \begin{bsmallmatrix}
        -1 \\ 1 \\ 1
    \end{bsmallmatrix}
    \begin{smallmatrix}
        m_1 \\ m_2 \\ m_3
    \end{smallmatrix} & m_2 \geq m_3 & 0 & 0 & 0 & 0 & 126 \\[9pt]
    \begin{bsmallmatrix}
        -1 \\ 1 \\ 1 \\ 1
    \end{bsmallmatrix}
    \begin{smallmatrix}
        m_1 \\ m_2 \\ m_3 \\ m_4
    \end{smallmatrix} & m_2 \geq m_3 \geq m_4 & 0 & 0 & 0 & 0 & 336 \\[12pt]
    \begin{bsmallmatrix}
        2
    \end{bsmallmatrix}
    \begin{smallmatrix}
        m_1
    \end{smallmatrix} & & 0 & 0 & 0 & 0 & 6 \\[3pt]
    \begin{bsmallmatrix}
        -1 \\ 2
    \end{bsmallmatrix}
    \begin{smallmatrix}
        m_1 \\ m_2
    \end{smallmatrix} & m_2\geq 6 & 0 & 0 & 0 & 0 & 24 \\
    
    \midrule

    % 1x (3)
    \begin{bsmallmatrix}
        -1 & 1 & 1
    \end{bsmallmatrix}
    \begin{smallmatrix}
        3
    \end{smallmatrix} & & 1 & 0 & 0 & 0 & 1 \\[3pt]
    \begin{bsmallmatrix}
        -1 & 1 & 1 \\
        2 & 1 & 1
    \end{bsmallmatrix}
    \begin{smallmatrix}
        3 \\ m_1
    \end{smallmatrix} & m_1 \geq 6 & 1 & 0 & 0 & 0 & 4 \\[6pt]
    \begin{bsmallmatrix}
        -1 & 1 & 1 \\
        1 & 1
    \end{bsmallmatrix}
    \begin{smallmatrix}
        3 \\ m_1
    \end{smallmatrix} & & 1 & 0 & 0 & 0 & 6 \\[6pt]
    \begin{bsmallmatrix}
        -1 & 1 & 1 \\
        & -1 & 1
    \end{bsmallmatrix}
    \begin{smallmatrix}
        3 \\ m_1
    \end{smallmatrix} & & 1 & 0 & 0 & 0 & 6 \\[6pt]
    \begin{bsmallmatrix}
        -1 & 1 & 1 \\
        2 & 1 & 1 \\
        & -1 & 1
    \end{bsmallmatrix}
    \begin{smallmatrix}
        3 \\ m_1 \\ m_2
    \end{smallmatrix} & m_1\geq 6 & 1 & 0 & 0 & 0 & 24 \\[9pt]
    \begin{bsmallmatrix}
        -1 & 1 & 1 \\
        1 & 1 \\
        1 & 1
    \end{bsmallmatrix}
    \begin{smallmatrix}
        3 \\ m_1 \\ m_2
    \end{smallmatrix} & m_1\geq m_2 & 1 & 0 & 0 & 0 & 21 \\[9pt]
    \begin{bsmallmatrix}
        -1 & 1 & 1 \\
        1 & 1 \\
        1 & & 1
    \end{bsmallmatrix}
    \begin{smallmatrix}
        3 \\ m_1 \\ m_2
    \end{smallmatrix} & m_1\geq m_2 & 1 & 0 & 0 & 0 & 21 \\[9pt]
    \begin{bsmallmatrix}
        -1 & 1 & 1 \\
        1 & 1 \\
        & -1 & 1
    \end{bsmallmatrix}
    \begin{smallmatrix}
        3 \\ m_1 \\ m_2
    \end{smallmatrix} & & 1 & 0 & 0 & 0 & 36 \\[9pt]
    \begin{bsmallmatrix}
        -1 & 1 & 1 \\
        1 & 1 \\
        & 1 & -1
    \end{bsmallmatrix}
    \begin{smallmatrix}
        3 \\ m_1 \\ m_2
    \end{smallmatrix} & & 1 & 0 & 0 & 0 & 36 \\[9pt]
    \begin{bsmallmatrix}
        -1 & 1 & 1 \\
        & -1 & 1 \\
        & 1 & -1
    \end{bsmallmatrix}
    \begin{smallmatrix}
        3 \\ m_1 \\ m_2
    \end{smallmatrix} & m_1\geq m_2 & 1 & 0 & 0 & 0 & 21 \\[9pt]
    \begin{bsmallmatrix}
        -1 & 1 & 1 \\
        2 & 1 & 1 \\
        & -1 & 1 \\
        & 1 & -1
    \end{bsmallmatrix}
    \begin{smallmatrix}
        3 \\ m_1 \\ m_2 \\ m_3
    \end{smallmatrix} & m_1\geq 6 \,,\; m_2\geq m_3 & 1 & 0 & 0 & 0 & 84 \\[12pt]
    \begin{bsmallmatrix}
        -1 & 1 & 1 \\
        1 & 1 \\
        1 & 1 \\
        & -1 & 1
    \end{bsmallmatrix}
    \begin{smallmatrix}
        3 \\ m_1 \\ m_2 \\ m_3
    \end{smallmatrix} & \begin{gathered}
        m_1 \geq m_2 \geq 5 \;\text{or}\\[-5pt]
        m_1\geq 7 \,,\; m_2 = 4
    \end{gathered} & 1 & 0 & 0 & 0 & 108 \\[12pt]
    \begin{bsmallmatrix}
        -1 & 1 & 1 \\
        1 & 1 \\
        1 & & 1 \\
        & -1 & 1
    \end{bsmallmatrix}
    \begin{smallmatrix}
        3 \\ m_1 \\ m_2 \\ m_3
    \end{smallmatrix} & & 1 & 0 & 0 & 0 & 216 \\[12pt]
    \begin{bsmallmatrix}
        -1 & 1 & 1 \\
        1 & 1 \\
        & -1 & 1\\
        & 1 & -1
    \end{bsmallmatrix}
    \begin{smallmatrix}
        3 \\ m_1 \\ m_2 \\ m_3
    \end{smallmatrix} & m_2,m_3\geq 5 & 1 & 0 & 0 & 0 & 150 \\[12pt]
    \begin{bsmallmatrix}
        -1 & 1 & 1 \\
        1 & 1 \\
        1 & & 1 \\
        & -1 & 1\\
        & 1 & -1
    \end{bsmallmatrix}
    \begin{smallmatrix}
        3 \\ m_1 \\ m_2 \\ m_3 \\ m_4
    \end{smallmatrix} & \scalebox{0.88}{$\begin{gathered}
        m_3>m_4\geq 5 \;\text{or}\\[-7pt]
        m_1\geq m_2 \,,\; m_3=m_4\geq 5
    \end{gathered}$} & 1 & 0 & 0 & 0 & 465 \\

    \midrule

    % 2x (3)
    \begin{bsmallmatrix}
        -1 & 1 & & 1\\
        & -1 & 1 & 1
    \end{bsmallmatrix}
    \begin{smallmatrix}
        3 \\ 3'
    \end{smallmatrix} & & 2 & 0 & 0 & 0 & 1 \\[6pt]
    \begin{bsmallmatrix}
        -1 & 1 & & 1\\
        & -1 & 1 & 1\\
        2 & 1 & & 1
    \end{bsmallmatrix}
    \begin{smallmatrix}
        3 \\ 3' \\ m_1
    \end{smallmatrix} & m_1\geq 6 & 2 & 0 & 0 & 0 & 4 \\[9pt]
    \begin{bsmallmatrix}
        -1 & 1 & & 1\\
        & -1 & 1 & 1\\
        1 & 1 & 1
    \end{bsmallmatrix}
    \begin{smallmatrix}
        3 \\ 3' \\ m_1
    \end{smallmatrix} & & 2 & 0 & 0 & 0 & 6 \\[9pt]
    \begin{bsmallmatrix}
        -1 & 1 & & 1\\
        & -1 & 1 & 1\\
        1 & & -1 & 1
    \end{bsmallmatrix}
    \begin{smallmatrix}
        3 \\ 3' \\ m_1
    \end{smallmatrix} & & 2 & 0 & 0 & 0 & 6 \\[9pt]
    \begin{bsmallmatrix}
        -1 & 1 & & 1\\
        & -1 & 1 & 1\\
        1 & 1 & 1\\
        1 & & -1 & 1
    \end{bsmallmatrix}
    \begin{smallmatrix}
        3 \\ 3' \\ m_1 \\ m_2
    \end{smallmatrix} & & 2 & 0 & 0 & 0 & 36 \\

    \midrule

    % 3x (3)
    \begin{bsmallmatrix}
        -1 & 1 & & 1 \\
        & -1 & 1 & 1 \\
        1 & & -1 & 1
    \end{bsmallmatrix}
    \begin{smallmatrix}
        3 \\ 3' \\ 3''
    \end{smallmatrix} & & 3 & 0 & 0 & 0 & 1\\[9pt]
    \begin{bsmallmatrix}
        -1 & 1 & & 1 \\
        & -1 & 1 & 1 \\
        1 & & -1 & 1 \\ 
        1 & 1 & 1
    \end{bsmallmatrix}
    \begin{smallmatrix}
        3 \\ 3' \\ 3'' \\ m_1
    \end{smallmatrix} & & 3 & 0 & 0 & 0 & 6\\

    \midrule

    % 1x (32v1)
    \begin{bsmallmatrix}
        -1 & 1 & 1 \\
        & -1 & & 1
    \end{bsmallmatrix}
    \begin{smallmatrix}
        3 \\ 2
    \end{smallmatrix} & & 0 & 1 & 0 & 0 & 1\\[6pt]
    \begin{bsmallmatrix}
        -1 & 1 & 1 \\
        & -1 & & 1 \\
        2 & 1 & 1 & 1
    \end{bsmallmatrix}
    \begin{smallmatrix}
        3 \\ 2 \\ m_1
    \end{smallmatrix} & m_1 \geq 7 & 0 & 1 & 0 & 0 & 3\\[9pt]
    \begin{bsmallmatrix}
        -1 & 1 & 1 \\
        & -1 & & 1 \\
        1 & 1 & & 1
    \end{bsmallmatrix}
    \begin{smallmatrix}
        3 \\ 2 \\ m_1
    \end{smallmatrix} & & 0 & 1 & 0 & 0 & 6\\[9pt]
    \begin{bsmallmatrix}
        -1 & 1 & 1 \\
        & -1 & & 1 \\
        1 & & 1
    \end{bsmallmatrix}
    \begin{smallmatrix}
        3 \\ 2 \\ m_1
    \end{smallmatrix} & & 0 & 1 & 0 & 0 & 6\\[9pt]
    \begin{bsmallmatrix}
        -1 & 1 & 1 \\
        & -1 & & 1 \\
        & 1 & -1 & 1
    \end{bsmallmatrix}
    \begin{smallmatrix}
        3 \\ 2 \\ m_1
    \end{smallmatrix} & & 0 & 1 & 0 & 0 & 6\\[9pt]
    \begin{bsmallmatrix}
        -1 & 1 & 1 \\
        & -1 & & 1 \\
        1 & 1 & & 1 \\
        1 & & 1
    \end{bsmallmatrix}
    \begin{smallmatrix}
        3 \\ 2 \\ m_1 \\ m_2
    \end{smallmatrix} & & 0 & 1 & 0 & 0 & 36\\[12pt]
    \begin{bsmallmatrix}
        -1 & 1 & 1 \\
        & -1 & & 1 \\
        1 & & 1 \\
        1 & & 1
    \end{bsmallmatrix}
    \begin{smallmatrix}
        3 \\ 2 \\ m_1 \\ m_2
    \end{smallmatrix} & m_1\geq m_2 & 0 & 1 & 0 & 0 & 21\\[12pt]
    \begin{bsmallmatrix}
        -1 & 1 & 1 \\
        & -1 & & 1 \\
        1 & & 1 \\
        & 1 & -1 & 1
    \end{bsmallmatrix}
    \begin{smallmatrix}
        3 \\ 2 \\ m_1 \\ m_2
    \end{smallmatrix} & & 0 & 1 & 0 & 0 & 36\\[12pt]
    \begin{bsmallmatrix}
        -1 & 1 & 1 \\
        & -1 & & 1 \\
        1 & & 1 \\
        1 & & 1 \\
        & 1 & -1 & 1
    \end{bsmallmatrix}
    \begin{smallmatrix}
        3 \\ 2 \\ m_1 \\ m_2 \\ m_3
    \end{smallmatrix} & \begin{gathered}
        m_1,m_2,m_3\geq 5\,,\\[-7pt]
        m_1\geq m_2
    \end{gathered} & 0 & 1 & 0 & 0 & 75\\

    \midrule

    % 1x (32v2)
    \begin{bsmallmatrix}
        -1 & & 1 & 1 \\
        1 & -1
    \end{bsmallmatrix}
    \begin{smallmatrix}
        3 \\ 2
    \end{smallmatrix} & & 0 & 1 & 0 & 0 & 1\\[6pt]
    \begin{bsmallmatrix}
        -1 & & 1 & 1 \\
        1 & -1 \\
        1 & 1 & 1
    \end{bsmallmatrix}
    \begin{smallmatrix}
        3 \\ 2 \\ m_1
    \end{smallmatrix} & & 0 & 1 & 0 & 0 & 6\\[9pt]
    \begin{bsmallmatrix}
        -1 & & 1 & 1 \\
        1 & -1 \\
        & & -1 & 1
    \end{bsmallmatrix}
    \begin{smallmatrix}
        3 \\ 2 \\ m_1
    \end{smallmatrix} & & 0 & 1 & 0 & 0 & 6\\[9pt]
    \begin{bsmallmatrix}
        -1 & & 1 & 1 \\
        1 & -1 \\
        1 & 1 & 1 \\
        1 & 1 & & 1
    \end{bsmallmatrix}
    \begin{smallmatrix}
        3 \\ 2 \\ m_1 \\ m_2
    \end{smallmatrix} & m_1\geq m_2 & 0 & 1 & 0 & 0 & 21\\[12pt]
    \begin{bsmallmatrix}
        -1 & & 1 & 1 \\
        1 & -1 \\
        1 & 1 & 1 \\
        & & -1 & 1
    \end{bsmallmatrix}
    \begin{smallmatrix}
        3 \\ 2 \\ m_1 \\ m_2
    \end{smallmatrix} & m_1\geq 5 & 0 & 1 & 0 & 0 & 30\\[12pt]
    \begin{bsmallmatrix}
        -1 & & 1 & 1 \\
        1 & -1 \\
        1 & 1 & 1 \\
        & & 1 & -1
    \end{bsmallmatrix}
    \begin{smallmatrix}
        3 \\ 2 \\ m_1 \\ m_2
    \end{smallmatrix} & & 0 & 1 & 0 & 0 & 36\\[12pt]
    \begin{bsmallmatrix}
        -1 & & 1 & 1 \\
        1 & -1 \\
        & & -1 & 1 \\
        & & 1 & -1
    \end{bsmallmatrix}
    \begin{smallmatrix}
        3 \\ 2 \\ m_1 \\ m_2
    \end{smallmatrix} & m_1\geq m_2 & 0 & 1 & 0 & 0 & 21\\[12pt]
    \begin{bsmallmatrix}
        -1 & & 1 & 1 \\
        1 & -1 \\
        1 & 1 & 1 \\
        1 & 1 & & 1 \\
        & & -1 & 1
    \end{bsmallmatrix}
    \begin{smallmatrix}
        3 \\ 2 \\ m_1 \\ m_2 \\ m_3
    \end{smallmatrix} & m_1,m_2\geq 5 & 0 & 1 & 0 & 0 & 150\\[15pt]
    \begin{bsmallmatrix}
        -1 & & 1 & 1 \\
        1 & -1 \\
        1 & 1 & 1 \\
        & & -1 & 1 \\
        & & 1 & -1
    \end{bsmallmatrix}
    \begin{smallmatrix}
        3 \\ 2 \\ m_1 \\ m_2 \\ m_3
    \end{smallmatrix} & m_1,m_2,m_3\geq 5 & 0 & 1 & 0 & 0 & 125\\[15pt]
    \begin{bsmallmatrix}
        -1 & & 1 & 1 \\
        1 & -1 \\
        1 & 1 & 1 \\
        1 & 1 & & 1 \\
        & & -1 & 1 \\
        & & 1 & -1
    \end{bsmallmatrix}
    \begin{smallmatrix}
        3 \\ 2 \\ m_1 \\ m_2 \\ m_3 \\ m_4
    \end{smallmatrix} & \scalebox{0.9}{$\begin{gathered}
        m_1,m_2,m_3,m_4\geq 5\,,\\[-7pt]
        m_1>m_2 \;\text{or} \\[-7pt]
        m_1=m_2\,,\;m_3\geq m_4
    \end{gathered}$} & 0 & 1 & 0 & 0 & 325\\

    \midrule

    % 1x (3) + 1x (32v1)
    \begin{bsmallmatrix}
        & 1 & -1 & 1 \\
        -1 & 1 & 1 \\
        & -1 & & 1
    \end{bsmallmatrix}
    \begin{smallmatrix}
        3 \\ 3' \\ 2'
    \end{smallmatrix} & & 1 & 1 & 0 & 0 & 1\\[9pt]
    \begin{bsmallmatrix}
        1 & & 1 & & -1 \\
        -1 & 1 & 1 \\
        & -1 & & 1
    \end{bsmallmatrix}
    \begin{smallmatrix}
        3 \\ 3' \\ 2'
    \end{smallmatrix} & & 1 & 1 & 0 & 0 & 1\\[9pt]
    \begin{bsmallmatrix}
        1 & & 1 & & -1 \\
        -1 & 1 & 1 \\
        & -1 & & 1 \\
        1 & & 1 & & 2
    \end{bsmallmatrix}
    \begin{smallmatrix}
        3 \\ 3' \\ 2' \\ m_1
    \end{smallmatrix} & m_1\geq 6 & 1 & 1 & 0 & 0 & 4\\[12pt]
    \begin{bsmallmatrix}
        1 & & 1 & & -1 \\
        -1 & 1 & 1 \\
        & -1 & & 1 \\
        1 & 1 & & 1 & 1
    \end{bsmallmatrix}
    \begin{smallmatrix}
        3 \\ 3' \\ 2' \\ m_1
    \end{smallmatrix} & m_1\geq 5 & 1 & 1 & 0 & 0 & 5\\

    \midrule

    % 1x (3) + 1x (32v2)
    \begin{bsmallmatrix}
        & & -1 & 1 & 1 \\
        -1 & & 1 & 1 \\
        1 & -1
    \end{bsmallmatrix}
    \begin{smallmatrix}
        3 \\ 3' \\ 2'
    \end{smallmatrix} & & 1 & 1 & 0 & 0 & 1\\[9pt]
    \begin{bsmallmatrix}
        & & -1 & 1 & 1 \\
        -1 & & 1 & 1 \\
        1 & -1 \\
        1 & 1 & 1 & & 1
    \end{bsmallmatrix}
    \begin{smallmatrix}
        3 \\ 3' \\ 2' \\ m_1
    \end{smallmatrix} & m_1\geq 5 & 1 & 1 & 0 & 0 & 5\\[12pt]
    \begin{bsmallmatrix}
        & & -1 & 1 & 1 \\
        -1 & & 1 & 1 \\
        1 & -1 \\
        1 & 1 & & 1 & -1
    \end{bsmallmatrix}
    \begin{smallmatrix}
        3 \\ 3' \\ 2' \\ m_1
    \end{smallmatrix} & & 1 & 1 & 0 & 0 & 6\\[12pt]
    \begin{bsmallmatrix}
        & & -1 & 1 & 1 \\
        -1 & & 1 & 1 \\
        1 & -1 \\
        1 & 1 & 1 & & 1 \\
        1 & 1 & & 1 & -1
    \end{bsmallmatrix}
    \begin{smallmatrix}
        3 \\ 3' \\ 2' \\ m_1 \\ m_2
    \end{smallmatrix} & m_1,m_2\geq 5 & 1 & 1 & 0 & 0 & 25\\

    \midrule

    % 1x (32v1) + 1x (32v2)
    \begin{bsmallmatrix}
        -1 & & 1 & 1 \\
        1 & -1 \\
        & & -1 & 1 & 1 \\
        & & & & -1 & 1
    \end{bsmallmatrix}
    \begin{smallmatrix}
        3 \\ 2 \\ 3' \\ 2'
    \end{smallmatrix} & & 0 & 2 & 0 & 0 & 1\\[12pt]
    \begin{bsmallmatrix}
        -1 & & 1 & 1 \\
        1 & -1 \\
        & & -1 & 1 & 1 \\
        & & & & -1 & 1 \\
        1 & 1 & 1 & & 1 & 1
    \end{bsmallmatrix}
    \begin{smallmatrix}
        3 \\ 2 \\ 3' \\ 2' \\ m_1
    \end{smallmatrix} & m_1\geq 6 & 0 & 2 & 0 & 0 & 4\\

    \midrule

    % 1x (322v1)
    \begin{bsmallmatrix}
        -1 & 1 & 1 \\
        & -1 & & 1\\
        & & -1 & & 1
    \end{bsmallmatrix}
    \begin{smallmatrix}
        3 \\ 2 \\ 2
    \end{smallmatrix} & & 0 & 0 & 1 & 0 & 1\\[9pt]
    \begin{bsmallmatrix}
        -1 & 1 & 1 \\
        & -1 & & 1\\
        & & -1 & & 1 \\
        2 & 1 & 1 & 1 & 1
    \end{bsmallmatrix}
    \begin{smallmatrix}
        3 \\ 2 \\ 2 \\ m_1
    \end{smallmatrix} & m_1\geq 8 & 0 & 0 & 1 & 0 & 2\\[12pt]
    \begin{bsmallmatrix}
        -1 & 1 & 1 \\
        & -1 & & 1\\
        & & -1 & & 1 \\
        1 & 1 & & 1
    \end{bsmallmatrix}
    \begin{smallmatrix}
        3 \\ 2 \\ 2 \\ m_1
    \end{smallmatrix} & & 0 & 0 & 1 & 0 & 6\\[12pt]
    \begin{bsmallmatrix}
        -1 & 1 & 1 \\
        & -1 & & 1\\
        & & -1 & & 1 \\
        1 & 1 & & 1 \\
        1 & & 1 & & 1
    \end{bsmallmatrix}
    \begin{smallmatrix}
        3 \\ 2 \\ 2 \\ m_1 \\ m_2
    \end{smallmatrix} & \begin{gathered}
        m_1\geq m_2 \\[-7pt]
        (m_1,m_2)\neq (4,4)
    \end{gathered} & 0 & 0 & 1 & 0 & 20\\

    \midrule

    % 1x (322v2)
    \begin{bsmallmatrix}
        -1 & & 1 & 1 \\
        1 & -1\\
        & & -1 & & 1
    \end{bsmallmatrix}
    \begin{smallmatrix}
        3 \\ 2 \\ 2
    \end{smallmatrix} & & 0 & 0 & 1 & 0 & 1\\[9pt]
    \begin{bsmallmatrix}
        -1 & & 1 & 1 \\
        1 & -1\\
        & & -1 & & 1 \\
        1 & 1 & 1 & & 1
    \end{bsmallmatrix}
    \begin{smallmatrix}
        3 \\ 2 \\ 2 \\ m_1
    \end{smallmatrix} & m_1\geq 5 & 0 & 0 & 1 & 0 & 5\\[12pt]
    \begin{bsmallmatrix}
        -1 & & 1 & 1 \\
        1 & -1\\
        & & -1 & & 1 \\
        1 & 1 & & 1
    \end{bsmallmatrix}
    \begin{smallmatrix}
        3 \\ 2 \\ 2 \\ m_1
    \end{smallmatrix} & & 0 & 0 & 1 & 0 & 6\\[12pt]
    \begin{bsmallmatrix}
        -1 & & 1 & 1 \\
        1 & -1\\
        & & -1 & & 1 \\
        & & -1 & & 1
    \end{bsmallmatrix}
    \begin{smallmatrix}
        3 \\ 2 \\ 2 \\ m_1
    \end{smallmatrix} & & 0 & 0 & 1 & 0 & 6\\[12pt]
    \begin{bsmallmatrix}
        -1 & & 1 & 1 \\
        1 & -1\\
        & & -1 & & 1 \\
        1 & 1 & 1 & & 1 \\
        1 & 1 & & 1
    \end{bsmallmatrix}
    \begin{smallmatrix}
        3 \\ 2 \\ 2 \\ m_1 \\ m_2
    \end{smallmatrix} & \scalebox{0.9}{$\begin{gathered}
        m_1\geq 5\,,\; m_2\geq 7 \;\text{or}\\[-7pt]
        m_1\geq 6\,,\; m_2\geq 5 \;\text{or}\\[-7pt]
        m_1\geq 8
    \end{gathered}$} & 0 & 0 & 1 & 0 & 25\\[15pt]
    \begin{bsmallmatrix}
        -1 & & 1 & 1 \\
        1 & -1\\
        & & -1 & & 1 \\
        1 & 1 & & 1 \\
        & & 1 & -1 & 1
    \end{bsmallmatrix}
    \begin{smallmatrix}
        3 \\ 2 \\ 2 \\ m_1 \\ m_2
    \end{smallmatrix} & m_1\geq 5 & 0 & 0 & 1 & 0 & 30\\

    \midrule

    % 1x (3) + 1x (322v2)
    \begin{bsmallmatrix}
        & & 1 & -1 & 1 \\
        -1 & & 1 & 1 \\
        1 & -1\\
        & & -1 & & 1
    \end{bsmallmatrix}
    \begin{smallmatrix}
        3 \\ 3' \\ 2' \\ 2'
    \end{smallmatrix} & & 1 & 0 & 1 & 0 & 1\\

    \midrule

    % 1x (3222)
    \begin{bsmallmatrix}
        -1 & & 1 & 1\\
        1 & -1\\
        & & -1 & & 1\\
        & & & -1 & & 1
    \end{bsmallmatrix}
    \begin{smallmatrix}
        3 \\ 2 \\ 2 \\ 2
    \end{smallmatrix} & & 0 & 0 & 0 & 1 & 1 \\[12pt]
    \begin{bsmallmatrix}
        -1 & & 1 & 1\\
        1 & -1\\
        & & -1 & & 1\\
        & & & -1 & & 1 \\
        1 & 1 & 1 & & 1
    \end{bsmallmatrix}
    \begin{smallmatrix}
        3 \\ 2 \\ 2 \\ 2 \\ m_1
    \end{smallmatrix} & m_1\geq 5 & 0 & 0 & 0 & 1 & 5 \\[15pt]
    \begin{bsmallmatrix}
        -1 & & 1 & 1\\
        1 & -1\\
        & & -1 & & 1\\
        & & & -1 & & 1 \\
        1 & 1 & 1 & & 1 \\ 
        1 & 1 & & 1 & & 1
    \end{bsmallmatrix}
    \begin{smallmatrix}
        3 \\ 2 \\ 2 \\ 2 \\ m_1 \\ m_2
    \end{smallmatrix} & \begin{gathered}
        m_1\geq m_2 \geq 6 \;\text{or}\\[-7pt]
        m_1\geq 8\,,\; m_2=5
    \end{gathered} & 0 & 0 & 0 & 1 & 12 \\
    \bottomrule
\end{longtable}

%%%%%%%%%%%%%%%%%%%%%%%%%%%%%%%%%%%%%%%%%%%%%%%%%%%%%%%%%%%%%%%%%%%%
\bibliographystyle{JHEP}
\bibliography{refs}
%%%%%%%%%%%%%%%%%%%%%%%%%%%%%%%%%%%%%%%%%%%%%%%%%%%%%%%%%%%%%%%%%%%%

\end{document}